\newtheorem{theorem}{Theorem}[section]
\newtheorem{proposition}[theorem]{Proposition}
\newtheorem{lemma}[theorem]{Lemma}
\theoremstyle{definition}
\newtheorem{definition}[theorem]{Definition}
\renewcommand{\theequation}{\arabic{section}.\arabic{equation}}
\theoremstyle{definition}
\theoremstyle{definition}
\newtheorem{remark}{Remark}
\theoremstyle{definition}
\DeclareMathOperator*{\esssup}{ess\,sup}
\newcommand{\rd}{\mathrm{d}}
\renewcommand{\epsilon}{\varepsilon}
\renewcommand{\cite}{\citet*}
\begin{document}
	\makeatletter
	\def\@setauthors{%
		\begingroup
		\def\thanks{\protect\thanks@warning}%
		\trivlist \centering\footnotesize \@topsep30\p@\relax
		\advance\@topsep by -\baselineskip
		\item\relax
		\author@andify\authors
		\def\\{\protect\linebreak}%
		{\authors}%
		\ifx\@empty\contribs \else ,\penalty-3 \space \@setcontribs
		\@closetoccontribs \fi
		\endtrivlist
		\endgroup } \makeatother
	\baselineskip 18pt
	\title[{{\tiny Many-insurer robust games}}]
	{{\tiny  Many-insurer robust games of reinsurance and investment under model uncertainty in incomplete markets}} \vskip 10pt\noindent
		\author[{\tiny  Guohui Guan, Zongxia Liang, Yi Xia}]
		{\tiny {\tiny  Guohui Guan$^{a,b,\dag}$, Zongxia Liang$^{c,\ddag}$, Yi Xia$^{c,*}$}
				\vskip 10pt\noindent
				{\tiny ${}^a$Center for Applied Statistics, Renmin University of China, Beijing, 100872, China
						\vskip 10pt\noindent\tiny ${}^b$School of Statistics, Renmin University of China, Beijing 100872, China
						\vskip 10pt\noindent\tiny ${}^c$Department of Mathematical Sciences, Tsinghua
						University, Beijing 100084, China}
				\footnote{ 
						$^{\dag}$  e-mail: guangh@ruc.edu.cn\\
						$^{\ddag}$  e-mail:  liangzongxia@mail.tsinghua.edu.cn\\
						$^*$  Corresponding author, e-mail:  xia-y20@mails.tsinghua.edu.cn}}
	\numberwithin{equation}{section}
	\maketitle
		\begin{abstract}
			This paper studies the robust reinsurance and investment games for competitive insurers. Model uncertainty is characterized by a class of equivalent probability measures. Each insurer is concerned with relative performance under the worst-case scenario. Insurers' surplus processes are approximated by drifted Brownian motion with common and idiosyncratic insurance risks. The insurers can purchase proportional reinsurance to divide the insurance risk with the reinsurance premium calculated by the variance principle. We consider an incomplete market driven by the 4/2 stochastic volatility mode. This paper formulates the robust mean-field game  for a non-linear system originating from the variance principle and the 4/2 model.  For the case of an exponential utility function, we derive closed-form solutions for the $n$-insurer game and the corresponding mean-field game. We show that relative concerns lead to new hedging terms in the investment and reinsurance strategies. Model uncertainty can significantly change the insurers' hedging demands. The hedging demands in the investment-reinsurance strategies exhibit highly non-linear dependence with the insurers' competitive coefficients, risk aversion and  ambiguity aversion coefficients. Finally, numerical results demonstrate the herd effect of competition.
				\vskip 10pt  \noindent
				2020 Mathematics Subject Classification: 91G05, 91B05, 91G10, 49L20, 91A15.
				\vskip 10pt  \noindent
				JEL Classifications: G22, G11, C61, C72.
				\vskip 10 pt  \noindent
				Keywords:  Robust mean-field game; Model uncertainty; Reinsurance; Investment; 4/2 stochastic volatility.
			\end{abstract}
		\vskip15pt
		\setcounter{equation}{0}
	\section{\bf Introduction}
Stochastic differential games have garnered considerable attention in recent financial mathematics literature. In the realm of two-person zero-sum differential games, \cite{elliott1976existence} establishes that under the Isaacs condition, the upper and lower values of the game are equal, and a saddle point exists in feedback strategies.  \cite{bensoussan2000stochastic} extend the discussion to stochastic differential games involving $N$ players by systems of nonlinear partial differential equations. While explicit solutions for mean-field games are generally rare outside linear-quadratic examples (as noted in \cite{yong2013linear} and \cite{bensoussan2016linear}), \cite{lacker2019mean} and \cite{MFE2020} contribute by constructing explicit constant equilibrium strategies for both finite population games and their corresponding mean-field games. In contrast to the non-competitive  case studied in \cite{merton1969lifetime} and \cite{merton1975optimum}, \cite{lacker2019mean} reveal that the unique constant Nash equilibrium comprises two components: the traditional Merton portfolio and the portfolio influenced by competition. { 
Since \cite{lacker2019mean}, portfolio games under various settings have been extensively studied. Some studies extend the results by exploring different optimization rules. In portfolio games, the optimization rule for a representative player is shaped by both the player's preferences and their reactions to other players. \cite{lacker2019mean} examine the game under CRRA and CARA utility functions. However, in practice, players may have diverse preferences, such as the forward relative performance criteria considered in \cite{anthropelos2022competition}, \cite{dos2022forward}, \cite{chong2024optimal}, and the mean-variance criteria discussed in \cite{guan2022time}, \cite{deng2024peer}, among others. The investment-consumption games studied in \cite{MFE2020} are further extended to settings involving non-exponential discounting in \cite{liang2023time}, habit formation in \cite{liang2024mean}, and in \cite{bo2024mean}. In \cite{lacker2019mean}, the reaction in portfolio games is homogeneous, where agents are symmetric and identical. However, in reality, players often aim to outperform a specific group of competitors rather than an entire population. To address this, \cite{caines2021graphon}, \cite{xu2024linear}, and \cite{tangpi2024optimal} extend portfolio games to heterogeneous settings on random graphs, where interactions among managers are heterogeneous. Additionally, \cite{hu2022n} study portfolio games where players interact through competition or homophily. The study also explores incomplete markets and random risk tolerance coefficients. Other contributions in the literature consider various market settings. For instance, \cite{kraft2020dynamic} model a two-agent framework under Heston's stochastic volatility model, while \cite{fu2020mean} investigate the case of two correlated stocks in a non-Markovian setting. The unique equilibrium in \cite{fu2020mean} is characterized by forward-backward stochastic differential equations. Portfolio games under random parameters are explored in \cite{park2022optimal}, \cite{fu2023mean}, and \cite{tangpi2024optimal}, among others. Optimal relative investment with jump risk is studied in \cite{guan2022time}, \cite{bo2024mean}, and \cite{aydougan2024optimal}, etc. Most of the studies mentioned above focus on portfolio games with regular controls. However, some recent works have also examined portfolio games with singular controls (see \cite{cao2022approximation}, \cite{campi2022mean}, \cite{denkert2024extended}) and impulse controls (see \cite{sadana2021nash}, \cite{sadana2022feedback}, \cite{basei2022nonzero}).
}

Abundant research has acknowledged the influence of competition on equilibrium strategies in stochastic differential games. However, numerous studies mentioned earlier tend to neglect model uncertainty and assume financial markets driven by diffusion processes, focusing primarily on mean-field games within linear systems. This paper addresses this research gap by exploring robust stochastic differential reinsurance and investment games under volatility risk. We contribute to the field of stochastic differential games by formulating the robust mean-field game in a non-linear system under model uncertainty.  Experimental results on the Ellsberg paradox often reveal behavior interpreted as ambiguity aversion (\cite{ellsberg1961risk}). Due to a lack of full confidence in the reference financial model, ambiguity attitude becomes crucial in decision-making. To account for model uncertainty, we employ the max-min expected utility  model (\cite{gilboa2004maxmin}), which is widely applied in mathematical finance and actuarial sciences. Previous research (\cite{maenhout2004robust}, \cite{zeng2016robust}, \cite{guan2019robust}, \cite{li2022robust}) highlights that robustness significantly reduces the demand for equities and decreases exposure to insurance and financial risks under decision-making ambiguity. While single-agent problems demonstrate the impact of model uncertainty on traditional portfolios, the effect on portfolios influenced by competition has not been thoroughly explored. This paper contributes to the literature by examining the impacts of model uncertainty and competition on insurers' reinsurance and investment strategies. Reports (\cite{biggar1998competition}, \cite{murat2002competition}) indicate that competition among insurers has long played a complex role in the insurance industry, with empirical evidence emphasizing its significance in decision-making. \cite{alhassan2018competition} examine the effect of competition on risk-taking behavior in an emerging insurance market. Additionally, \cite{chang2022does} investigates the relationship between insurers' competitiveness and risk-taking decisions, suggesting that insurers with low (high) competitiveness tend to take more (less) risks. In this paper, insurers care about relative wealth concerns and aim to outperform their peers.  

{ Moreover, in the works of \cite{lacker2019mean}, \cite{MFE2020}, and \cite{fu2023mean}, decision-makers only consider equity risk. However, the Black-Scholes model utilized in \cite{lacker2019mean} and \cite{MFE2020} fails to capture empirical phenomena like volatility smile and clustering. Financial markets, especially during periods of distress such as the 2008 financial crisis and the COVID-19 pandemic (\cite{kumar2021response}, \cite{enow2023investigating}), often experience heightened volatility. Recognizing the significance of return volatility in practical financial management decisions, this paper considers an incomplete market under volatility risk. \cite{liu2007portfolio} addresses stock portfolio selection under Heston's stochastic volatility model, revealing an optimal portfolio composed of myopic and hedging components.} \cite{kraft2020dynamic} investigates dynamic asset allocation with relative wealth concerns in a two-agent framework under Heston's stochastic volatility model. In contrast to optimal portfolios in \cite{lacker2019mean}, \cite{kraft2020dynamic} shows that relative wealth concerns introduce new hedge terms, and both heterogeneity in risk aversion and relative wealth concerns have similar effects on portfolio decision heterogeneity.
 Our work introduces model uncertainty into the stochastic differential game under incomplete markets, aiming to illustrate the impact of ambiguity on hedging demands in reinsurance and investment strategies. { While Heston's stochastic volatility model is widely used in risk management (\cite{kraft2005optimal}, \cite{vcerny2008mean}, \cite{huang2017robust}, etc.), it often violates the Feller condition when calibrated to real market data, highlighting limitations in its practical application (\cite{da2011riding}). To capture the steep skews observed in equity markets, the model typically requires a high volatility-of-volatility parameter. However, this results in volatility paths that spend significantly more time near zero than what is observed in the empirical distribution of volatility. Moreover, the Heston model predicts that, during periods of market stress, when instantaneous volatility increases, the skew will flatten. Consequently, the model places excessive weight on low or vanishing volatility scenarios and struggles to generate extreme paths with high volatility-of-volatility, which limits its ability to capture significant market dislocations. To address these limitations, the 3/2 model proposed by \cite{heston1997simple} is often considered, as it allows for extreme paths featuring spikes in instantaneous volatility. The 3/2 model has been found to be better supported by empirical studies compared to the Heston model. \cite{drimus2012options} demonstrates that, unlike the Heston model, the 3/2 model generates upward-sloping implied volatility-of-variance smiles, which align with the way variance options are traded in practice. Furthermore, the 3/2 model’s predicted behavior appears to be more accurate, as shown by \cite{itkin2013new}. Additional studies supporting the 3/2 model include \cite{carr2007new}, \cite{goard2013stochastic}, and \cite{yuen2015pricing}. The 4/2 stochastic volatility model, which generalizes both the Heston model in \cite{heston1993closed} and the 3/2 model in \cite{heston1997simple}, is examined for its ability to capture the dynamics of the implied volatility surface (\cite{grasselli20174}). In the Heston model, the short-term skew flattens as instantaneous variance increases, while in the 3/2 model, the short-term skew steepens under the same conditions. As a result, the 4/2 model offers greater flexibility in characterizing different dynamics in the evolution of the implied volatility surface. Optimal investment under the 4/2 stochastic volatility model for a single agent has been studied in \cite{cheng2021optimal}, \cite{wang2022optimal}, and others.} Given that the financial market is incomplete under the 4/2 model, we aim to explore the hedging demands, taking into account model uncertainty and competition.

Our research contributes to the literature on optimal reinsurance and investment, as well as the application of mean-field games, by incorporating model uncertainty and incomplete markets. We formulate robust $n$-insurer and mean-field games under model uncertainty within incomplete markets, where insurers face both common and idiosyncratic insurance risks. The option to purchase proportional reinsurance is available, with the reinsurance premium determined by the variance principle. Insurers, concerned with relative performance, seek robust equilibrium strategies under worst-case scenarios. Our primary emphasis is on determining robust $n$-insurer and mean-field equilibria within a competitive framework, particularly addressing insurers' strategies under volatility risk and model uncertainty.
 Our work extends the scope of \cite{lacker2019mean}, \cite{kraft2020dynamic}, and \cite{guan2022time} by incorporating model uncertainty and stochastic volatility. We introduce clear definitions for robust $n$-insurer and mean-field equilibrium, expanding upon the non-ambiguity case presented in \cite{lacker2019mean}. Using stochastic dynamic programming methods, we derive closed-form solutions by $n$-dimensional HJBI equations for the robust $n$-insurer game under the CARA utility function. We provide suitable conditions ensuring to verify the verification theorem. We also demonstrate the  consistency between the results of the $n$-insurer game and the mean-field game.

Our study introduces a solvable robust mean-field game within an incomplete market setting. Notably, mean-field games are seldom straightforward to solve, and our model adds complexity by incorporating common and idiosyncratic insurance risks, stochastic volatility, and model uncertainty. While research on model uncertainty and mean-field games has gained popularity, their combined exploration remains limited. Besides, the solvable mean-field games are mainly introduced for linear system, see \cite{carmona2013control},  \cite{bensoussan2016linear},  \cite{lacker2019mean}, \cite{MFE2020}, \cite{liang2022robust},  etc. Especially, in the incomplete markets presented in our work, the system is highly non-linear caused by the 4/2 model. Moreover, the variance principle of the reinsurance premium also leads to non-linearity in the system. The financial system in our work contains quadratic and square root terms.  We formulate the robust problem as a max-min optimization problem, deriving a system of $n$-dimensional HJBI equations. The robust problem necessitates additional integrable conditions, and our work presents some compatible conditions in Theorem \ref{Verification}.  Unlike previous studies (\cite{moon2016robust}, \cite{moon2016linear}) that achieve approximate optimality in the sense of $\epsilon$-Nash equilibrium in robust mean-field games, our work focuses on deriving Nash equilibrium strategies in the robust mean-field game. Besides, the optimization rules in \cite{moon2016robust}, \cite{moon2016linear}  are quadratic, this paper presents a solvable mean-field game with a non-quadratic goal. Solutions in close-form are derived for the robust mean-field game in a non-linear system.

The robust equilibrium strategies presented in Theorem \ref{thm:result} yield several noteworthy findings. The robust equilibrium investment strategy outlined in \eqref{dn-opt} is comprised of four distinct components: a myopic component, a hedging component, a myopic component influenced by competition,  and a hedging component influenced by competition. The first component optimally addresses Merton's problem under model uncertainty and constant volatility. The addition of the second component is designed to hedge against volatility risk. All four components depend on the risk aversion and ambiguity aversion coefficients. The influence of competition, risk aversion, and ambiguity aversion coefficients on the third  and fourth components results in highly non-linear dependence. Notably, the dependence on ambiguity aversion is more non-linear than that on risk aversion and competition coefficients. Moreover, the robust equilibrium reinsurance proportions are determined by two components: the first optimized for a single insurer and the second influenced by competition. Both components display non-linear dependencies on risk aversion and ambiguity aversion coefficients. The numerical results shed light on the impact of competition, risk aversion, and ambiguity aversion coefficients on the robust equilibrium strategies. A pronounced herd effect of competition is observed, wherein insurers tend to mimic their peers' behaviors, leading to convergence in strategies. In a competitive environment, insurers exhibit similar approaches, and when some prioritize relative performance or display higher levels of risk or ambiguity aversion, all insurers tend to increase proportional reinsurance and cash allocation in response.

The remainder of this paper is organized as follows. Section 2 shows the financial model. Section 3 presents the robust $n$-insurer game. The solutions of the robust $n$-insurer game are obtained in Section 4. Section 5 presents and  verify the solutions of the robust mean-field game. Section 6 shows the economic behaviors of the insurers and Section 7 is a conclusion. All the proofs are in Appendices.
\section{\bf Financial model}
Let $\left(\Omega, \mathcal{F}, \mathbb{P}\right)$ be a  complete probability space, $T>0$ be a fixed time horizon and  $\mathbb{F}= \left\{\mathcal{F}_{t},{t \in[0, T]}\right\} $ is the usual augmentation of the natural	filtration generated by the involved standard Brownian motions below supported in $\left(\Omega, \mathcal{F}, \mathbb{P}\right)$. $\mathbb{P}$ represents the reference probability measure in the market. In this paper, we consider $n$  insurers who compete with each other. The insurers have own preferences and invest in the financial market consisting with cash and one stock. The insurers can adjust the strategies within time horizon $[0,T]$. Suppose that all the processes introduced below are well defined and adapted to the filtration  $\mathbb{F}$.  


To characterize the insurers' wealth processes, we  consider an approximate model to the Cramér-Lundberg insurance model. In the Cramér-Lundberg insurance model, the insurer $i$'s, $i \in\{1,2, \ldots, n\}$, dynamic surplus process $\tilde{X}_i=\left\{\tilde{X}_{i}(t), t\in[0,T]  \right\}$ involves as follows
\begin{equation}\label{equ:cl}
\tilde{X}_{i}(t)=\tilde{x}^{0}_{i}+p_{i} t-\sum_{j=1}^{K_{i}(t)} Y_{j}^{i},
\end{equation}
where $\tilde{x}^{0}_{i} \geqslant0 $ is the initial surplus; the constant $p_{i}>0$ is the premium rate; $Y_{j}^{i}$ is the size of the {$j$-th} claim, and $\{Y_{j}^{i}, j=1,2, \ldots\}$ is a sequence of independent identically distributed (i.i.d.) random variables  with  common distribution function $F_{Y^{i}}\left(\cdot\right)$. $N_{i}=\left\{N_{i}(t), t\in[0,T]\right\}$ and $\hat{N}=\{\hat{N}(t), t\in[0,T]\}$ are mutually independent Poisson processes, with intensities being $\lambda_{i}>0$ and $\hat{\lambda}>0$, respectively. $K_{i}(t)=N_{i}(t)+\hat{N}(t)$ represents the number of {claims} up to time $t$ for the insurer $i$.  $N_i$ and $\hat{N}$ represent the common and idiosyncratic shocks of the insurance businesses, respectively. We assume that $Y^{i}$ has finite mean $\mu_{i 1}=\mathbb{E}\left(Y^{i}\right)$ and second-order moment $\mu_{i 2}=\mathbb{E}\left[\left(Y^{i}\right)^{2}\right]$, and the premium rate $ p_i $ is determined via the expected value principle, that is,$$
p_{i}=\frac{1}{t}\left(1+\eta_{i}\right) \mathbb{E}\left[\sum_{j=1}^{K_{i}(t)} Y_{j}^{i}\right]=\left(1+\eta_{i}\right)\left(\lambda_{i}+\hat{\lambda}\right) \mu_{i 1},
$$where  $ \eta_{i}>0 $ is the safety loading coefficient of the insurer $ i $.

 As the diffusion approximation model works well for the insurance portfolio, whose wealth is large enough such that each individual claim is relatively small compared with the total reserve. Therefore, we use a drifted Brownian motion to approximate \eqref{equ:cl}, that is,
\begin{equation}\label{equ:appr}
\sum_{j=1}^{K_{i}(t)} Y_{j}^{i} \approx\left(\hat{\lambda}+\lambda_{i}\right) \mu_{i 1} t-\sqrt{\left(\hat{\lambda}+\lambda_{i}\right) \mu_{i 2}} W_{i}(t).
\end{equation}
{
 
The diffusion approximation of the collective risk model was first introduced by \cite{iglehart1969diffusion} and \cite{grandell1977class}. They demonstrated that as \( t \to +\infty \), the Cramér-Lundberg insurance model converges to the diffusion model in distribution. Since the diffusion model is more analytically tractable than the compound Poisson process, it has been widely adopted in insurance research (see, for example, \cite{taksar2003optimal}, \cite{yi2015robust}, \cite{guan2023dynamic}). The accuracy of this approximation\footnote{
Let \( F_t(x) \) denote the distribution function of the standardized Poisson random sum, given by
$
\frac{\sum_{j=1}^{K_i(t)} Y_j^i - (\hat{\lambda} + \lambda_i) \mu_{i1} t}{\sqrt{(\hat{\lambda} + \lambda_i) \mu_{i2} t}}
$. Define the uniform distance between the distribution function of the standardized Poisson random sum and the standard normal distribution function \( \Phi(x) \) as
$
\Delta_t = \sup\limits_x | F_t(x) - \Phi(x) |
$. 
\cite{shevtsova2014accuracy} proves that the uniform distance satisfies the bound
$
\Delta_t \leq 0.266 l_t + 0.5 l_t^2,
$ 
where \( l_t = \frac{\mathbb{E}[|Y_j^i|^3]}{\sqrt{(\hat{\lambda} + \lambda_i) t} (\mathbb{E}[|Y_j^i|^2])^{3/2}} \). The uniform distance between the diffusion model and the collective risk model can also be derived.
} is discussed in \cite{shevtsova2014accuracy}.
}

Here $W_{i}=\{W_{i}(t), t\in[0,T]\}$ is a standard Brownian motion, and that any two Brownian motions $W_i$ and $W_{k}, i \neq k \in\{1,2, \ldots, n\}$, are correlated with the correlation coefficient being
$$
\rho_{i k}:=\frac{\hat{\lambda} \mu_{i 1} \mu_{k 1}}{\sqrt{\left(\hat{\lambda}+\lambda_{i}\right)\left(\hat{\lambda}+\lambda_{k}\right) \mu_{i 2} \mu_{k 2}}}.
$$ 
In other words, we assume that $$ W_i(t)=\sqrt{\frac{\hat{\lambda}\mu_{i 1}^2}{(\hat{\lambda}+\lambda_i)\mu_{i 2}}} \tilde{W}(t)+\sqrt{1-\frac{\hat{\lambda}\mu_{i 1}^2}{(\hat{\lambda}+\lambda_i)\mu_{i 2}}}\hat{W}_i(t),$$ where $\tilde{W}=\{\tilde{W}(t), t\in[0,T]\}$ and $\hat{W}_i=\{\hat{W}_i(t), t\in[0,T]\}$, $i\leqslant n $ are standard Brownian motions and are independent with each other. 
Denote $\hat{\textbf{W}}(t)= (\hat{W}_1(t),\hat{W}_2(t),\cdots,\hat{W}_n(t))^T $. $ \tilde{W}$ represents the common insurance risk and $\hat{W}_i$ represents the idiosyncratic insurance of insurer $i$.


The insurers are faced with insurance risk and financial risk. To manage these two risks, the insurers can purchase reinsurance products and invest in the financial market. We assume that the risk-free rate in the market is $ r $, i.e., the price of the money account in the financial market involves as
{$$
\frac{\rd S_0(t)}{S_0(t)}=r\rd t, \quad S_0(0)=1.
$$}
We consider the $ 4/2 $ stochastic-volatility model as in \cite{grasselli20174} for the stock, that is, the price of  the stock involves as follows
\begin{equation}\label{equ:st}
\frac{\rd S(t)}{S(t)}=\left(r +m({a}{Z(t)}+{{b}})\right)\rd t+{\Sigma}(t)\rd {W}(t), \quad S(0)=s_0,
\end{equation}
where\begin{equation*}
	{\Sigma}(t)={a}\sqrt{Z(t)}+\frac{{b}}{\sqrt{Z(t)}},\quad
	\rd Z(t)={\kappa}(\bar{Z}-Z(t))\rd t+{\nu}\sqrt{Z(t)}\left[{\rho}\rd {W}(t)+\sqrt{1-{\rho}^2}\rd{B}(t)\right],\quad Z(0)=z^0,
\end{equation*}
where $a,b\geqslant0$ and $a^2+b^2\neq 0$. \eqref{equ:st} includes the Heston model in \cite{heston1993closed} ($a>0, b=0$) and the 3/2 model in \cite{heston1997simple} ($a=0,b>0$) as special cases. $m$ is a constant and ${m}\sqrt{Z(t)}$ represents the stock's market price of risk. \eqref{equ:st} can describe the time-varying volatility and  accurately capture the evolution of the implied volatility surface. ${W}(t), {B}(t)$ are mutually independent standard Brownian motions, which represent the equity risk and volatility risk, respectively. $\kappa$, $\bar{Z}$ and $\nu$ are positive constants. Besides, we impose the Feller condition $ 2{\kappa}\bar{Z}>{\nu}^2$ to ensure that the process $ Z(t) $ is strictly positive. 

The insurers can manage their insurance risk through purchasing proportional reinsurance from the reinsurer and invest in the financial market.  While in the reinsurance model,  {let  $a_{i}(t)\in$ $[0,1] $ be the retention level  of the insurer $i$, then $1-a_{i}(t) $ is the proportional reinsurance level of the insurer $i$}. Denote  the reinsurance premium rate paid by insurer $ i $   at time $ t $ by $\hat{p}_i(t)$. We suppose that  $\hat{p}_i(t)$ is calculated by variance principle, that is, $$ \hat{p}_i(t) =  (1-a_i(t))\left(\lambda_{i}+\hat{\lambda}\right)\mu_{i 1}+\hat{\eta} (1-a_i(t))^2 \left(\lambda_{i}+\hat{\lambda}\right) \mu_{i 2},
$$ where  $ \hat{\eta}>0 $ is the safety loading coefficient of the reinsurer.
Moreover, the insurer $i$ can invest his surplus in a risk-free money account with a constant interest rate $r>0$ and stock. Denote $\pi_{i}(t)$  as the amount invested in the stock  for insurer $i$  at time $t$ and the rest part is invested in cash. Then with the reinsurance and money account being incorporated, the surplus process of insurer $i$ involves as
{	$$
		\mathrm{d} \hat{X}_{i}(t)= (\hat{X}_{i}(t)-\pi_i(t))\frac{\rd S_0(t)}{S_0(t)}+\pi_{i}(t)\frac{\rd S(t)}{ S(t)}+p_i\rd t-\hat{p}_i(t)\rd t- a_i(t)\mathrm{d}\sum_{j=1}^{K_i(t)} Y_j^i .
	$$Based on \eqref{equ:appr}, in the diffusion approximation model, the surplus process of insurer $i$ $X_i=\left\{X_{i}(t), t\in[0,T]  \right\}$ involves as$$
	\begin{aligned}
		\mathrm{d} X_{i}(t)=& (X_{i}(t)-\pi_i(t))\frac{\rd S_0(t)}{S_0(t)}+\pi_{i}(t)\frac{\rd S(t)}{ S(t)}+p_i\rd t-\hat{p}_i(t)\rd t\\&- a_i(t)\left[\left(\lambda_{i}+\hat{\lambda}\right) \mu_{i 1}\rd t -\left(\sqrt{\hat{\lambda}}\mu_{i 1} \mathrm{d}\tilde{W}(t)+\sqrt{(\hat{\lambda}+\lambda_{i})\mu_{i 2}-\hat{\lambda}\mu_{i 1}^2}\mathrm{d}\hat{W}_i(t)\right)\right], 
		\end{aligned}
	$$which indicates that}
\begin{equation}\label{equ:xi}
\begin{split}
	\mathrm{d} X_{i}(t)
	=&\left[r X_{i}(t)+\eta_{i}\left(\lambda_{i}+\hat{\lambda}\right) \mu_{i 1}-\hat{\eta}(1-a_i(t))^2\left(\lambda_{i}+\hat{\lambda}\right) \mu_{i 2}+\pi_{i}(t)m\sqrt{Z(t)}\Sigma(t)\right] \mathrm{d} t\\&+a_{i}(t) \left(\sqrt{\hat{\lambda}}\mu_{i 1} \mathrm{d}\tilde{W}(t)+\sqrt{(\hat{\lambda}+\lambda_{i})\mu_{i 2}-\hat{\lambda}\mu_{i 1}^2}\mathrm{d}\hat{W}_i(t)\right)+\pi_{i}(t){\Sigma}(t)\rd {W}(t).
\end{split}
\end{equation}
The reinsurance and investment strategy at time $t$ of insurer $i$ is $(a_i(t),\pi_i(t))$. \eqref{equ:xi} contains quadratic and square root terms and is a non-linear system arising from the variance principle and 4/2 model.

\section{\bf Robust $n$-insurer game}
In the following section, we present the robust equilibrium game for insurers, which involves two key aspects. Firstly, the insurers engage in a competitive environment and prioritize relative performance, leading to the formulation of the $n$-insurer game. Secondly, due to the inability to estimate parameters accurately, it is crucial to consider model uncertainty. Compared with previous work, such as \cite{carmona2013control},  \cite{bensoussan2016linear},  \cite{lacker2019mean}, \cite{MFE2020}, \cite{liang2022robust},  etc, we formulate the robust $n$-insurer and mean-field games for a non-linear system.

The insurers compete with each other and are concerned with the relative performance over the average wealth $$\bar{X}(T)=\frac{1}{n}\sum_{k=1}^{n} X_k(T).$$ Insurer $i$ is not concerned with the absolute wealth $X_{i}(t)$ while with the relative wealth over the others defined as follows
$$ Y_i(t):=X_{i}(t) {-\theta_{i}}\bar{X}(t),$$ 
where $\theta_{i} \in[0,1]$ describes the insurer $i$'s risk preference for her/his own wealth versus relative wealth. The insurer is more cared about the relative wealth for a larger $\theta_i$. The insurers compete with each other and we are interested in the Nash equilibrium strategies of them.

Furthermore, insurer $i$ is not fully {confident} with the reference model and is ambiguity averse. To characterize insurer $i$'s belief over the reference model, we define a class of  probability measures that are equivalent to $\mathbb{P}$ by
$$
\mathcal{Q}^{i}:=\left\{\mathbb{Q}^{\varphi_i,\chi_i,\phi_i,\vartheta_{i}} \mid \mathbb{Q}^{\varphi_i,\chi_i,\phi_i,\vartheta_{i}} \sim \mathbb{P}\right\}.
$$Functions  $\varphi_{i}(t)$, $ \chi_{i}(t),\phi_i(t),\vartheta_{i}(t)=(\vartheta_{i,1}(t),\vartheta_{i,2}(t),\cdots,\vartheta_{i,n}(t))^T$ satisfy the following condition.
\begin{enumerate}
	\item $\varphi_{i}(t), \chi_{i}(t),\phi_i(t),\vartheta_{i}(t)$ are progressively measurable w.r.t.{(with respect to)} the filtration $\mathbb{F}$;
	\item The Novikov's condition is satisfied:
		$$
	\mathbb{E}\left[\exp \left(\frac{1}{2} \int_{0}^{T}\left[{\varphi_{i}^2(t)}+\chi_{i}^2(t)+\phi_i^2(t)+\vartheta_{i}(t)^T\vartheta_{i}(t)\right] \mathrm{d} t\right)\right]<+\infty.
	$$
\end{enumerate} 

For each $\varphi_{i}(t), \chi_{i}(t),\phi_i(t),\vartheta_{i}(t)$, the real valued process $\left\{\Theta^{\varphi_i,\chi_i,\phi_i,\vartheta_{i}}(t)\right\}_{t \in[0, T]}$ is defined by
$$
\begin{aligned}
		\Theta^{\varphi_i,\chi_i,\phi_i,\vartheta_{i}}(t)=& \exp \left(\int_{0}^{t} \varphi_{i}(s) \mathrm{d} W(s)+\int_{0}^{t} \chi_{i}(s) \mathrm{d}B(s)-\frac{1}{2} \int_{0}^{t} \varphi_{i}^2(s) +\chi_{i}^2 (s)\mathrm{d} s\right)\\
		&\times \exp \left(\int_{0}^{t} \phi_{i}(s) \mathrm{d} \tilde{W}(s)+\int_{0}^{t} \vartheta_{i}(s)^T \mathrm{d}\hat{\textbf{W}}(s)-\frac{1}{2} \int_{0}^{t} \phi_{i}^2(s) +\vartheta_{i}(s)^T\vartheta_{i}(s)\mathrm{d} s\right)
\end{aligned}
$$
Then $\left\{\Theta^{\varphi_i,\chi_i,\phi_i,\vartheta_{i}}(t)\right\}_{t \in[0, T]}$ is a $\mathbb{P}$-martingale and there is an equivalent probability measure $ \mathbb{Q}^{\varphi_i,\chi_i,\phi_i,\vartheta_{i}}$ defined by
$$
\left.\frac{\mathrm{d} \mathbb{Q}^{\varphi_i,\chi_i,\phi_i,\vartheta_{i}}}{\mathrm{~d} \mathbb{P}}\right|_{\mathcal{F}_{T}}=\Theta^{\varphi_i,\chi_i,\phi_i,\vartheta_{i}}(T) .
$$
It follows from the Girsanov's Theorem that the  processes $  {{W}}^{ \mathbb{Q}^{\varphi_i,\chi_i,\phi_i,\vartheta_{i}}}=\left\{{{W}}^{ \mathbb{Q}^{\varphi_i,\chi_i,\phi_i,\vartheta_{i}}}(t),0\leqslant t\leqslant T\right\}$, $ B^{ \mathbb{Q}^{\varphi_i,\chi_i,\phi_i,\vartheta_{i}}}=\left\{{{B}}^{ \mathbb{Q}^{\varphi_i,\chi_i,\phi_i,\vartheta_{i}}}(t),0\leqslant t\leqslant T\right\} $, $ \tilde{W}^{ \mathbb{Q}^{\varphi_i,\chi_i,\phi_i,\vartheta_{i}}}=\left\{\tilde{W}^{ \mathbb{Q}^{\varphi_i,\chi_i,\phi_i,\vartheta_{i}}}(t),0\leqslant t\leqslant T\right\}$ and $  \hat{\textbf{W}}^{ \mathbb{Q}^{\varphi_i,\chi_i,\phi_i,\vartheta_{i}}}=\left\{\hat{\textbf{W}}^{ \mathbb{Q}^{\varphi_i,\chi_i,\phi_i,\vartheta_{i}}}(t),0\leqslant t\leqslant T\right\}$ following  \eqref{newBrown} are standard Brownian motions under $ \mathbb{Q}^{\varphi_i,\chi_i,\phi_i,\vartheta_{i}}$, where
\begin{equation}\label{newBrown}
	\begin{aligned}
	&\mathrm{d} {{W}}^{ \mathbb{Q}^{\varphi_i,\chi_i,\phi_i,\vartheta_{i}}}(t)=\mathrm{d}W(t)-\varphi_i(t) \mathrm{d} t, \\
	&\mathrm{d} B^{ \mathbb{Q}^{\varphi_i,\chi_i,\phi_i,\vartheta_{i}}}(t)=\mathrm{d} B(t)-\chi_{i}(t) \mathrm{d} t,\\
	&\mathrm{d} {\tilde{W}}^{ \mathbb{Q}^{\varphi_i,\chi_i,\phi_i,\vartheta_{i}}}(t)=\mathrm{d}\tilde{W}(t)-\phi_i(t) \mathrm{d} t,\\
	&\mathrm{d} {\hat{\textbf{W}}}^{  \mathbb{Q}^{\varphi_i,\chi_i,\phi_i,\vartheta_{i}}}(t)=\mathrm{d}\hat{\textbf{W}}(t)-\vartheta_{i}(t) \mathrm{d} t.
\end{aligned}
\end{equation}
As such, under the equivalent probability measure $ \mathbb{Q}^{\varphi_i,\chi_i,\phi_i,\vartheta_{i}}$ and the strategies $ \left\{(\pi_{k},a_k)_{k=1}^{n}\right\} $,  the wealth process of insurer $k, k=1,2, \cdots, n$ involves as
\begin{equation}\label{SDE-X}
	\begin{aligned}
		\mathrm{d} X_{k}(t)\!
		=\!&\left[r X_{k}(t)+\eta_{k}\left(\lambda_{k}\!+\!\hat{\lambda}\right) \!\mu_{k 1}\!-\!\hat{\eta}(1\!-\!a_k(t))^2\left(\lambda_{k}\!+\!\hat{\lambda}\right)\! \mu_{k 2}\!+\!\pi_{k}(t)m\sqrt{Z(t)}\Sigma(t)\right]\!\mathrm{d} t,\\
		&+a_{k}(t)\left(\sqrt{\hat{\lambda}}\mu_{k1}(\mathrm{d} {\tilde{W}}^{ \mathbb{Q}^{\varphi_i,\chi_i,\phi_i,\vartheta_{i}}}(t)+\phi_{i}(t)\rd t)+\sqrt{(\hat{\lambda}+\lambda_{k})\mu_{k 2}-\hat{\lambda}\mu_{k 1}^2}(\mathrm{d} {\hat{\textbf{W}}}_k^{ \mathbb{Q}^{\varphi_i,\chi_i,\phi_i,\vartheta_{i}}}(t)+\vartheta_{i,k}(t) \mathrm{d} t )\right)\\
		&+\pi_{k}(t){\Sigma}(t)(\rd W^{ \mathbb{Q}^{\varphi_i,\chi_i,\phi_i,\vartheta_{i}}}(t)+\varphi_i(t)\rd t),
	\end{aligned}
\end{equation}where
\begin{equation*}
	\rd Z(t)={\kappa}(\bar{Z}-Z(t))\rd t+{\nu}\sqrt{Z(t)}\left[{\rho}(\rd W^{ \mathbb{Q}^{\varphi_i,\chi_i,\phi_i,\vartheta_{i}}}(t)\!+\!\varphi_i(t)\rd t)\!+\!\sqrt{1\!-\!{\rho}^2}(\rd B^{ \mathbb{Q}^{\varphi_i,\chi_i,\phi_i,\vartheta_{i}}}(t)\!+\!\chi_i(t)\rd t)\right]\!.
\end{equation*}

 Let $\gamma\left( \mathbb{Q}^{\varphi_i,\chi_i,\phi_i,\vartheta_{i}}\right)$ be the penalty term of $ \mathbb{Q}^{\varphi_i,\chi_i,\phi_i,\vartheta_{i}}$ to describe the distance with the reference probability measure.  We use the following form of the penalty to measure the ambiguity averse attitude
$$
\gamma\left( \mathbb{Q}^{\varphi_i,\chi_i,\phi_i,\vartheta_{i}}\right)=\int_{0}^{T}  \left[\frac{\varphi_{i}^{2}(t)}{2 \Psi_1^{i}}+\frac{\chi_{i}^2(t)}{2 \Psi_2^{i}}+\frac{\phi_{i}^{2}(t)}{2 \Psi_3^{i}}+\frac{\vartheta_{i}(t)^T\vartheta_{i}(t)}{2 \Psi_4^{i}} \right]\mathrm{d} t,
$$
where functions ${\varphi_{i}^{2} \over 2}$, ${\chi_{i}^2(t)\over 2}$, ${\phi_{i}^{2}(t) \over 2}$ and ${\vartheta_{i}(t)^T\vartheta_{i}(t) \over 2}$ are  the increasing rates of the relative entropy, which has been show in \cite{guan2022robust}. $ \Psi_1^{i} $, $ \Psi_2^{i} $, $ \Psi_3^{i} $ and $ \Psi_4^{i} $ represent the degrees of ambiguity aversion of insurer $i$   with respect to the equity risk, volatility risk, common insurance risk and idiosyncratic insurance risk, respectively.  $ \Psi_1^{i} $, $ \Psi_2^{i} $, $ \Psi_3^{i} $ and $ \Psi_4^{i} $ will be determined later to get closed-form solutions. 

 The objective function of insurer $i$ with competition and ambiguity aversion is as follows:
  \begin{equation*}
	\begin{aligned}
&J_{i}\left(\left\{(\pi_{k},a_k)_{k=1}^{n}\right\},\left(\varphi_i,\chi_i,\phi_i,\vartheta_{i}\right),t,y,z\right)\\=&\mathbb{E}_{t,y,z}^{ \mathbb{Q}^{\varphi_i,\chi_i,\phi_i,\vartheta_{i}}}\left[ U_f\left(Y_{i}(T); \delta_{i}\right)+\int_{t}^{T}  \frac{\varphi_{i}^{2}(s)}{2 \Psi_1^{i}}+\frac{\chi_{i}^2(s
	)}{2 \Psi_2^{i}} +\frac{\phi_{i}^{2}(s)}{2 \Psi_3^{i}}+\frac{\vartheta_{i}(s)^T\vartheta_{i}(s)}{2 \Psi_4^{i}} \mathrm{d} s \right],
	\end{aligned}
\end{equation*}where $ \mathbb{E}_{t,y,z}^{ \mathbb{Q}^{\varphi_i,\chi_i,\phi_i,\vartheta_{i}}}[\cdot] $ means the conditional expectation under probability measure $ \mathbb{Q}^{\varphi_i,\chi_i,\phi_i,\vartheta_{i}}$ given states $ Y_i(t)=y$ and $Z(t)=z $. The risk aversion attitude of insurer $i$ is represented by function $U_f\left(Y_{i}(T); \delta_{i}\right)$ and $\delta_{i}$ describes insurer $i$'s degree of risk aversion. We suppose that the insurers have different degrees of risk aversion.

 Then  the {preference} of insurer $i$ under probability measure $ \mathbb{Q}^{\varphi_i,\chi_i,\phi_i,\vartheta_{i}}$ is formulated as
$$
J_{i}\left(\left\{(\pi_{k},a_k)_{k=1}^{n}\right\},\left(\varphi_i,\chi_i,\phi_i,\vartheta_{i}\right),0,y_i^0,z^0\right)
$$

{
 In this paper, we consider noncooperative games as in \cite{lacker2019mean}, where players aim to maximize their individual preferences in response to the aggregation effects of other players. The players search for a Nash equilibrium in the game. In contrast to noncooperative games, the social control game involves players cooperating to optimize the common social cost, defined as the sum of individual costs. \cite{huang2012social} study the social optimal control of mean field LQG (linear-quadratic-Gaussian) models, while \cite{wang2020social} examine social optimal control of mean field LQG models under drift uncertainty. \cite{huang2021social} extend this by studying mean field LQG social optimum control with volatility-uncertain common noise. Our work differs from these studies in several ways. First, while the financial system we consider is nonlinear with a non-quadratic optimization rule, \cite{wang2020social} and \cite{huang2021social} focus on linear stochastic systems with quadratic cost functions. Second, our study centers on noncooperative games, whereas the social control games in \cite{wang2020social} and \cite{huang2021social} are cooperative. In cooperative games, players seek a particular Pareto optimum.

}

Next, we  present the definition of a robust Nash equilibrium game for the $n$-insurer game under insurance and volatility risks. First, to ensure the robust control problem {has} a unique solution and  the verification theorem holds, we define the admissible set of strategies  and probability transformation functions for each insurer:
\begin{equation*}
	\begin{aligned}
		\mathscr{U}=&\Big\{\{(\pi_k,a_k)_{k=1}^n\}:(\pi_k,a_k) \text{ is progressively measurable w.r.t. }\mathbb{F}\text{ and }0\leqslant a_k(t)\leqslant1, \forall 1\leqslant k\leqslant n\\
		&\exists \{\mathcal{C}_k\}_{k=1}^n\subseteq\mathbb{R}_+, \sup_{1\leqslant k\leqslant n}\mathcal{C}_k<\infty, \text{ such that } \pi_k(t)=\ell_k(t)\frac{Z(t)}{aZ(t)+b},|\ell_k(t)|\leqslant \mathcal{C}_k, \forall t\in[0,T]\Big\},\\
		\mathscr{U}_i=&\Big\{(\pi_i,a_i):(\pi_i,a_i) \text{ is progressively measurable w.r.t. }\mathbb{F}\text{ and }0\leqslant a_i(t)\leqslant1,\\
		&\exists \mathcal{C}_i\in\mathbb{R}_+, \text{ such that } \pi_i(t)=\ell_i(t)\frac{Z(t)}{aZ(t)+b},|\ell_i(t)|\leqslant \mathcal{C}_i, \forall t\in[0,T]\Big\},\\
		\mathscr{U}_{-i}=&\Big\{\{(\pi_k,a_k)_{k\neq i}\}:(\pi_k,a_k) \text{ is progressively measurable w.r.t. }\mathbb{F}\text{ and }0\leqslant a_k(t)\leqslant1, \forall k\neq i\\
		&\exists \{\mathcal{C}_k\}_{k\neq i}\subseteq\mathbb{R}_+, \sup_{k\neq i}\mathcal{C}_k<\infty, \text{ such that } \pi_k(t)=\ell_k(t)\frac{Z(t)}{aZ(t)+b},|\ell_k(t)|\leqslant \mathcal{C}_k, \forall t\in[0,T]\Big\},\\
		\mathscr{A}=&\Big\{\left(\varphi, \chi,\phi,\vartheta\right):\left(\varphi, \chi,\phi,\vartheta\right) \text{ is progressively measurable w.r.t. }\mathbb{F}\text{ and }\left(\frac{\varphi(t)}{\sqrt{Z(t)}}, \frac{\chi(t)}{\sqrt{Z(t)}}\right)=\left(\hslash(t),\hbar(t) \right),\\&\quad
	|\hslash(t)|, |\hbar(t)|\leqslant \mathcal{C},\sup|\phi|<\infty,\sup\|\vartheta\|<\infty,~ \forall t\in[0,T]\Big\},\\
	\end{aligned}
\end{equation*} 
where $ \mathcal{C} $ is a constant and $ \mathcal{C}^2<\frac{\kappa^2}{2\nu^2} $.
\begin{remark}
For $ \left(\varphi, \chi,\phi,\vartheta\right)\in\mathscr{A} $, we see that \begin{equation*}
	\begin{aligned}
&\mathbb{E}\left[\exp \left(\frac{1}{2} \int_{0}^{T}\left[{\varphi^2(t)}+\chi^2(t)+\phi^2(t)+\vartheta(t)^T\vartheta(t)\right] \mathrm{d} t\right)\right]\\
=&\mathbb{E}\left[\exp \left(\frac{1}{2} \int_{0}^{T}\left[{\hslash^2(t)}+\hbar^2(t)\right]Z(t) \mathrm{d} t+\frac{1}{2} \int_{0}^{T}\phi^2(t)+\vartheta(t)^T\vartheta(t) \mathrm{d} t\right)\right]\\
\leqslant&\mathbb{E}\left[\exp\left(\frac{T}{2}\sup|\phi|^2+\frac{T}{2}\sup\|\vartheta\|^2\right)\times\exp \left( \mathcal{C}^2\int_{0}^{T}Z(t) \mathrm{d} t\right)\right]
	\end{aligned}
\end{equation*}
by Proposition 5.1 in \cite{kraft2005optimal}, if condition\begin{equation*}
	\mathcal{C}^2<\frac{\kappa^2}{2\nu^2}
\end{equation*}is satisfied, then \begin{equation*}
\mathbb{E}\left[\exp \left( \mathcal{C}^2\int_{0}^{T}Z(t) \mathrm{d} t\right)\right]<\infty.
\end{equation*}
Thus \begin{equation*}
	\mathbb{E}\left[\exp \left(\frac{1}{2} \int_{0}^{T}\left[{\varphi^2(t)}+\chi^2(t)+\phi^2(t)+\vartheta(t)^T\vartheta(t)\right] \mathrm{d} t\right)\right]<\infty.
\end{equation*}
The Novikov's condition holds and $\mathbb{Q}^{\varphi,\chi,\phi,\vartheta}$ is a well-defined probability  measure which is equivalent to $ \mathbb{P} $.
\end{remark}

\begin{definition}\label{def1}
	We say that   $\left\{(\pi_{k}^*,a_k^*)_{k=1}^{n}\right\}\in\mathscr{U}$ is a \textit{\textbf{robust equilibrium}} if $\forall 1\leqslant i\leqslant n$, $\{(\pi_k^*,a_k^*)_{k\neq i}\}\in\mathscr{U}_{-i}$,  and {for all} admissible strategies  $(\pi_i,a_i)\in\mathscr{U}_i$, we have
	\begin{equation*}
		\begin{aligned}
			&\inf _{\left(\varphi_i,\chi_i,\phi_i,\vartheta_{i}\right)\in\mathscr{A}} J_{i}\left(\left\{(\pi_{k}^*,a_k^*)_{k=1}^{n}\right\},\left(\varphi_i,\chi_i,\phi_i,\vartheta_{i}\right),0,y_i^0,z^0\right) \\\geqslant& \inf _{\left(\varphi_i,\chi_i,\phi_i,\vartheta_{i}\right)\in\mathscr{A}} J_{i}\left(\left\{(\pi_{k}^*,a_k^*)_{k\neq i},(\pi_{i},a_i)\right\},\left(\varphi_i,\chi_i,\phi_i,\vartheta_{i}\right),0,y_i^0,z^0\right).
		\end{aligned}
	\end{equation*}
\end{definition}
{ The robust equilibrium strategies here mean that no one can improve her/his performance in the worst-case scenario by unilaterally changing her/his strategies.}
In the market, the insurers are concerned with relative performances and are faced with insurance and financial risks, as well as model uncertainty. Then, we formulate the robust equilibrium game among the insurers. The goals of the insurers are to {search for} the robust Nash equilibrium strategies satisfying Definition \ref{def1} within the admissible set.

\section{\bf Robust $n$-insurer equilibrium}

We suppose that the utility functions be modeled as in the CARA case, that is 
$$
U_f(x ; \delta)=-\frac{1}{\delta} e^{-\delta x},
$$
where $\delta>0$ represent the risk aversion parameter. The insurer is more risk aversion with a larger $\delta$. In the robust equilibrium, each insurer searches for the optimal strategies under the worst-case scenario in response to the strategies choices of other insurers.  To obtain robust equilibrium strategies, we need to solve  Problem \eqref{oi}  for every insurer simultaneously.
\begin{equation}\label{oi}
(\pi^*_{i},a^*_i)=\arg	\sup _{(\pi_{i},a_i)\in\mathscr{U}_i} \inf _{\left(\varphi_i,\chi_i,\phi_i,\vartheta_{i}\right)\in\mathscr{A}} J_{i}\left(\left\{(\pi_{k}^*,a_k^*)_{k\neq i},(\pi_{i},a_i)\right\},\left(\varphi_i,\chi_i,\phi_i,\vartheta_{i}\right),0,y_i^0,z^0\right),
\end{equation}
where
$$
\left\{(\pi_{k}^*,a_k^*)_{k\neq i}\right\}\in\mathscr{U}_{-i}.
$$

First, we derive the dynamics of $Y_i$ under  under the equivalent probability measure $ \mathbb{Q}^{\varphi_i,\chi_i,\phi_i,\vartheta_{i}}$. For $ \left(\varphi_i,\chi_i,\phi_i,\vartheta_{i}\right)\in\mathscr{A} $, under the equivalent probability measure $ \mathbb{Q}^{\varphi_i,\chi_i,\phi_i,\vartheta_{i}}$ and the strategy $ \left\{(\pi_{k}^*,a_k^*)_{k\neq i}\right\}\in\mathscr{U}_{-i}$, $(\pi_{i},a_i)\in\mathscr{U}_{i} $, based on \eqref{SDE-X},
 we see 

\begin{equation*}
	\begin{aligned}
		\mathrm{d}Y_i(t)\!
		=&r Y_i(t)\rd t\!+\!(1\!-\!\frac{\theta_i}{n})\!\left[\!\eta_{i}\!\left(\lambda_{i}\!+\!\hat{\lambda}\right) \!\mu_{i 1}\!-\!\hat{\eta}(1\!-\!a_i(t))^2\!\left(\lambda_{i}\!+\!\hat{\lambda}\right) \mu_{i 2}\!+\!\pi_{i}(t)\!\left(\!m\sqrt{Z(t)}\!+\!\varphi_i(t)\!\right)\!\Sigma(t)\right]\!\mathrm{d} t\\
		&-\frac{\theta_{i}}{n}\sum_{k\neq i}\left[\eta_{k}\!\left(\!\lambda_{k}\!+\!\hat{\lambda}\right)\! \mu_{k 1}\!-\!\hat{\eta}(1\!-\!a_k^*(t))^2\!\left(\lambda_{k}\!+\!\hat{\lambda}\right) \mu_{k 2}+\pi_{k}^*(t)\!\left(\!m\sqrt{Z(t)}\!+\!\varphi_i(t)\!\right)\!\Sigma(t)\right]\!\mathrm{d} t\\
		&+(1-\frac{\theta_i}{n})a_{i}(t)\left(\sqrt{\hat{\lambda}}\mu_{i1}\mathrm{d} {\tilde{W}}^{ \mathbb{Q}^{\varphi_i,\chi_i,\phi_i,\vartheta_{i}}}(t)+\sqrt{(\hat{\lambda}+\lambda_{i})\mu_{i 2}-\hat{\lambda}\mu_{i 1}^2}\mathrm{d} {\hat{\textbf{W}}}_i^{ \mathbb{Q}^{\varphi_i,\chi_i,\phi_i,\vartheta_{i}}}(t)\right)\\
		&+(1-\frac{\theta_i}{n})a_{i}(t)\left(\sqrt{\hat{\lambda}}\mu_{i1}\phi_{i}(t)\rd t+\sqrt{(\hat{\lambda}+\lambda_{i})\mu_{i 2}-\hat{\lambda}\mu_{i 1}^2}\vartheta_{i,i}(t) \mathrm{d} t \right)\\
		&-\frac{\theta_{i}}{n}\sum_{k\neq i}a_{k}^*(t) \left(\sqrt{\hat{\lambda}}\mu_{k1}\mathrm{d} {\tilde{W}}^{ \mathbb{Q}^{\varphi_i,\chi_i,\phi_i,\vartheta_{i}}}(t)+\sqrt{(\hat{\lambda}+\lambda_{k})\mu_{k 2}-\hat{\lambda}\mu_{k 1}^2}\mathrm{d} {\hat{\textbf{W}}}_k^{ \mathbb{Q}^{\varphi_i,\chi_i,\phi_i,\vartheta_{i}}}(t)\right)\\
		&-\frac{\theta_{i}}{n}\sum_{k\neq i}a_{k}^*(t) \left(\sqrt{\hat{\lambda}}\mu_{k1}\phi_{i}(t)\rd t+\sqrt{(\hat{\lambda}+\lambda_{k})\mu_{k 2}-\hat{\lambda}\mu_{k 1}^2}\vartheta_{i,k}(t) \mathrm{d} t \right)\\
		&+\left((1-\frac{\theta_i}{n})\pi_{i}(t)-\frac{\theta_{i}}{n}\sum_{k\neq i}\pi_{k}^*(t)\right){\Sigma}(t)\rd {W}^{ \mathbb{Q}^{\varphi_i,\chi_i,\phi_i,\vartheta_{i}}}(t).
	\end{aligned}
\end{equation*}
For $ \{(\pi_k,a_k)_{k=1}^n\}\in\mathscr{U} $ and $ \left(\varphi_i,\chi_i,\phi_i,\vartheta_{i}\right)\in\mathscr{A} $, define the infinitesimal generator of the financial system with processes $Y_i$ and $Z$ as follows:
\begin{equation*}
	\begin{aligned}
&\mathcal{A}^{\{(\pi_k,a_k)\}_{k=1}^n,(\varphi_i,\chi_i,\phi_i,\vartheta_{i})}f(t,y,z)\\
=&f_t\!+\!\left[{\kappa}(\bar{Z}\!-\!z)+{\nu}\sqrt{z}({\rho}\varphi_i\!+\!\sqrt{1-{\rho}^2}\chi_i)\right]f_{z}\!+\!\frac{1}{2}\nu^2zf_{zz}\!+\!\nu\sqrt{z}\rho\sigma((1\!-\!\frac{\theta_{i}}{n})\pi_i\!-\!\frac{\theta_{i}}{n}\sum_{k\neq i}\pi_{k}(t))f_{yz}+ryf_y\\
&+(1-\frac{\theta_i}{n})\left[\eta_{i}\left(\lambda_{i}+\hat{\lambda}\right) \mu_{i 1}-\hat{\eta}(1-a_i(t))^2\left(\lambda_{i}+\hat{\lambda}\right) \mu_{i 2}+\pi_{i}(t)\left(m\sqrt{z}+{\varphi_i(t)}\right)\sigma\right] f_y\\
&-\frac{\theta_{i}}{n}\sum_{k\neq i}\left[\eta_{k}\left(\lambda_{k}+\hat{\lambda}\right) \mu_{k 1}-\hat{\eta}(1-a_k(t))^2\left(\lambda_{k}+\hat{\lambda}\right) \mu_{k 2}+\pi_{k}(t)\left(m\sqrt{z}+{\varphi_i(t)}\right)\sigma\right]f_y\\
&+(1-\frac{\theta_i}{n})a_{i}(t)\left(\sqrt{\hat{\lambda}}\mu_{i1}\phi_{i}(t)+\sqrt{(\hat{\lambda}+\lambda_{i})\mu_{i 2}-\hat{\lambda}\mu_{i 1}^2}\vartheta_{i,i}(t)\right)f_y\\
&-\frac{\theta_{i}}{n}\sum_{k\neq i}a_{k}(t) \left(\sqrt{\hat{\lambda}}\mu_{k1}\phi_{i}(t)+\sqrt{(\hat{\lambda}+\lambda_{k})\mu_{k 2}-\hat{\lambda}\mu_{k 1}^2}\vartheta_{i,k}(t) \right)f_y\\
&+\frac{1}{2}(1\!-\!\frac{\theta_i}{n})^2a_{i}^2(t) \left(\hat{\lambda}\!+\!\lambda_{i}\right) \mu_{i 2}f_{yy}\!+\!\frac{\theta_{i}^2}{2n^2}\sum_{k\neq i}(a_{k}(t))^2 \left(\left(\hat{\lambda}\!+\!\lambda_{k}\right) \mu_{k 2}-\hat{\lambda}\mu_{k1}^2\right)f_{yy}\!+\hat{\lambda}\frac{\theta_i^2}{2n^2}\left[\sum_{k\neq i}a_k(t)\mu_{k 1}\right]^2f_{yy}\\
&-\hat{\lambda}\frac{\theta_{i}}{n}(1-\frac{\theta_i}{n})a_i(t)\mu_{i 1}\sum_{k\neq i}a_{k}(t)\mu_{k 1}f_{yy}+\!\frac{1}{2}\sigma^2((1\!-\!\frac{\theta_{i}}{n})\pi_i\!-\!\frac{\theta_{i}}{n}\sum_{k\neq i}\pi_{k}(t))^2f_{yy},
\end{aligned}
\end{equation*} 
{where $ \sigma= a\sqrt{z}+\frac{b}{\sqrt{z}}$.}

Let $ (y,z) $ be  the value of process $ (Y_i(s), Z(s)) $ at time $ t $, denote the value function of agent $i$ at time $t$ by
$$
	V^{(i)}(t,  {y},z)=\sup _{(\pi_{i},a_i)\in\mathscr{U}_i} \inf _{\left(\varphi_i,\chi_i,\phi_i,\vartheta_{i}\right)\in\mathscr{A}} J_{i}\left(\left\{(\pi_{k}^*,a_k^*)_{k\neq i},(\pi_{i},a_i)\right\},\left(\varphi_i,\chi_i,\phi_i,\vartheta_{i}\right),t,y,z\right).
$$
\begin{definition}[HJBI equation]\label{def:hjbi}
For  $ (Y_i(t), Z(t))=(y,z) $, 
denote\begin{equation*}
	\begin{aligned}
	&	\mathcal{H}\left({\{(\pi_k,a_k)_{k=1}^n\},(\varphi,\chi,\phi,\vartheta)},f,(h_j)_{j=1}^4,(t,y,z)\right)\\=&\mathcal{A}^{\{(\pi_k,a_k)_{k=1}^n\},(\varphi,\chi,\phi,\vartheta)}f(t,y,z)+  \frac{\varphi^{2}(t)}{2 h_1(t,y,z)} +\frac{\chi^2(t)}{2 h_2(t,y,z)} +  \frac{\phi^{2}(t)}{2 h_3(t,y,z)} +\frac{\vartheta(t)^T\vartheta(t)}{2 h_4(t,y,z)} 
	\end{aligned}
\end{equation*}
then the HJBI equation of insurer $i$ is
\begin{equation}\label{HJBI}
	\begin{aligned}
		\sup _{(\pi_{i},a_i)\in\mathscr{U}_i} \inf _{\left(\varphi_i,\chi_i,\phi_i,\vartheta_{i}\right)\in\mathscr{A}}&\mathcal{H}\left({\left\{(\pi_{k}^*,a_k^*)_{k\neq i},(\pi_{i},a_i)\right\},(\varphi_i,\chi_i,\phi_i,\vartheta_{i})},V,(\Psi_j^{i})_{j=1}^4,(t,y,z)\right)  =0,
	\end{aligned}
\end{equation}
with boundary condition $V(T, y,z)= U_f\left(y; \delta_{i}\right)=-\frac{1}{\delta_i} e^{-\delta_i y}$.
\end{definition}
In the following, we first solve the HJBI equation \eqref{HJBI}. Then, we verify that the solution of the HJBI equation solves Problem \eqref{oi}.  
 Assume $ v^{(i)}(t,y,z) $ is the solution to the following equation:
\begin{equation}\label{HJBI-v}
	\begin{aligned}
		\sup _{(\pi_{i},a_i)\in\mathscr{U}_i} \inf _{\left(\varphi_i,\chi_i,\phi_i,\vartheta_{i}\right)\in\mathscr{A}}&\mathcal{H}\left({\left\{(\pi_{k}^*,a_k^*)_{k\neq i},(\pi_{i},a_i)\right\},(\varphi_i,\chi_i,\phi_i,\vartheta_{i})},V,(\beta_{i,j}V)_{j=1}^4,(t,y,z)\right)  =0,
	\end{aligned}
\end{equation}
{We call equation (\ref{HJBI-v}) the  HJBI equation and $ v^{(i)}(t,y,z) $ the candidate value function.}
And let the penalty terms that are scaled by $ \Psi_1^{i} $, $ \Psi^{i}_2 $, $ \Psi_3^{i} $ and $ \Psi^{i}_4 $  hereafter be \begin{equation*}
\Psi_j^{i}=\frac{\psi_{i,j}}{-\delta_i	v^{(i)}}:=\frac{\beta_{i,j}}{	v^{(i)}},
\footnote{
  
In our framework, the explicit solution is derived by linking the penalty term to the value function. This approach was first introduced in \cite{maenhout2004robust, maenhout2006robust}, where the robustness penalty depends simultaneously on the relative entropy and the continuation value. On one hand, this formulation introduces homogeneity and improves mathematical tractability. On the other hand, the results in \cite{maenhout2004robust} provide meaningful economic insights: a preference for robustness can significantly reduce the portfolio’s demand for equities and increase the demand for hedging-type assets. Since then, the assumption of homothetic preferences for the penalty term has been widely adopted in robust portfolio selection for its mathematical tractability, as seen in \cite{uppal2003model}, \cite{branger2013robust}, \cite{escobar2015robust}, \cite{kikuchi2024age}, and others.
}
\end{equation*}where $\psi_{i,j}={-\delta_i}{\beta_{i,j}}>0,j=1,2,3,4,$  ($\beta_{i,j}<0$) represent the ambiguity aversion coefficients of insurer $i$. Insurer $i$ is more ambiguity aversion for a larger $\psi_{i,j}$.  $\psi_{i,1}$, $\psi_{i,2}$, $\psi_{i,3}$ and $\psi_{i,4}$ represent ambiguity aversion coefficients of insurer $i$  over the equity risk, volatility risk, common insurance risk and idiosyncratic insurance risk, respectively.

{Then we see the  candidate value function $ v^{(i)} $ is also the solution to the HJBI equation \eqref{HJBI}}. We will show that under specific conditions, $ v^{(i)}(t,  {y},z) $ equals to the value function $ V^{(i)}(t,  {y},z)  $, which also indicates that $ \Psi_j^{i} ,(j=1,2,3,4) $ are inversely proportional to the value function.

For  $ (Y_i(t), Z(t))=(y,z) $, {let}
\begin{equation}\label{HJBI-opt}
	\begin{aligned}
		(\pi^\circ_i,a^\circ_i)=\arg	\sup _{(\pi_i,a_i)\in\mathscr{U}_i} \inf _{(\varphi_i,\chi_i,\phi_i,\vartheta_{i})\in\mathscr{A}}&\mathcal{H}\left({\left\{(\pi_{k}^*,a_k^*)_{k\neq i},(\pi_{i},a_i)\right\},(\varphi_i,\chi_i,\phi_i,\vartheta_{i})},v^{(i)},(\Psi_j^{i})_{j=1}^4,(t,y,z)\right),\\
		( \varphi^\circ_{i},\chi^\circ_{i},\phi_i^\circ,\vartheta_{i}^\circ)=\arg	 \inf _{(\varphi_i,\chi_i,\phi_i,\vartheta_{i})\in\mathscr{A}}&\mathcal{H}\left({\left\{(\pi_{k}^*,a_k^*)_{k\neq i},(\pi_{i}^\circ,a_i^\circ)\right\},(\varphi_i,\chi_i,\phi_i,\vartheta_{i})},v^{(i)},(\Psi_j^{i})_{j=1}^4,(t,y,z)\right),
	\end{aligned}
\end{equation}
{which we call the candidate robust optimal strategies of insurer $ i $ and the worst-case measures associated with the candidate  strategies  of insurer $ i $ respectively.}

Initially, we will derive the candidate robust optimal strategies, worst-case measures, and candidate value function in Theorem \ref{solution-HJBI}. Subsequently, we will demonstrate in Theorem \ref{Verification} that the candidate robust optimal strategies are indeed the robust optimal strategies.

We first list some compatible conditions about the parameters.
\begin{itemize}
	\item [\textbf{Condition (I)}]\begin{equation*}
		\sup_{1\leqslant i\leqslant n}\frac{\beta_{i,1}^2}{\left(1-\beta_{i,1}\right)^2}m^2<\frac{\kappa^2}{2\nu^2}.
	\end{equation*}
	\item [\textbf{Condition (II)}]\begin{equation*}
		\sup_{1\leqslant i\leqslant n}c_{i,1}^2\frac{\left(1-e^{c_{i,2} T}\right)^2}{\left(1 + c_{i,3}e^{c_{i,2} T}\right)^2}\nu^2\beta_{i,2}^2(1-\rho^2)<\frac{\kappa^2}{2\nu^2},
	\end{equation*}
	
\end{itemize}where $ c_{i,1}=\frac{{\kappa}+m\nu\rho +\sqrt{({\kappa}+m\nu\rho )^2+\frac{1-\beta_{i,2}}{1-\beta_{i,1}}\nu^2(1-\rho^2)m^2}}{\nu^2(1-\rho^2)(1-\beta_{i,2})}$, $c_{i,3}=2(1-\beta_{i,1})({\kappa}+m\nu\rho )c_{i,1}+1$, $c_{i,2}=\frac{c_{i,3}+1}{2(1-\beta_{i,1})c_{i,1}}$, $\beta_{i,1}=-\frac{\psi_{i,1}}{\delta_i}$, $\beta_{i,2}=-\frac{\psi_{i,2}}{\delta_i} $.

{ These conditions help ensure that the candidate strategies in Theorem \ref{solution-HJBI} are admissible, as demonstrated in the proof of Theorem \ref{solution-HJBI} in Appendix \ref{proof-solution-HJBI}.}

\begin{theorem}\label{solution-HJBI}
$ \forall 1\leqslant i\leqslant n $, if the parameters satisfy the compatible conditions (I) and (II),  
 when other insurer $k$ ($k\neq i$) adopts strategy $\left({\pi}^*_k, {a}^*_k\right)$, the  candidate robust optimal investment and reinsurance strategies are
 	\begin{equation}\label{equ:pia1}
		\left\{\begin{aligned}
			&{\pi}^\circ_i(t)=\frac{\theta_i}{n-\theta_i}\sum_{k\neq i}\pi^*_k(t)+\frac{n}{n-\theta_i}(\frac{m}{1-\beta_{i,1}}+\nu\rho h_i(t))\frac{\sqrt{Z(t)}}{ \delta_{i}g_i(t)\Sigma(t)},\\
			&a_i^\circ(t)=\left(\frac{Q_i(t)}{R_i(t)}\frac{1}{n}\sum_{k\neq i}\mu_{k 1}a_k^*(t)+\frac{P_i(t)}{R_i(t)}\right)\wedge1,
		\end{aligned}\right.
	\end{equation}
   and the {associated} worst-case measures are given as follows,
	\begin{equation*}
	\left\{\begin{aligned}
		&{\varphi^\circ_{i}(t)}=\frac{\beta_{i,1}}{1-\beta_{i,1}}m\sqrt{Z(t)},\\
		&{\chi^\circ_{i}(t)}=-\nu\sqrt{1-\rho^2} \beta_{i,2}h_i(t)\sqrt{Z(t)},\\
		&\phi^\circ_{{i}}(t)=\sqrt{\hat{\lambda}}\left[(1-\frac{\theta_i}{n})\mu_{i1}a^\circ_{i}(t)-\frac{\theta_{i}}{n}\sum_{k\neq i}\mu_{k1}a_{k}^*(t)\right]\delta_{i}g_i(t){\beta_{i,3}},\\
		&\vartheta^\circ_{{i,i}}(t)=(1-\frac{\theta_i}{n})a^\circ_{i}(t)\sqrt{(\hat{\lambda}+\lambda_{i})\mu_{i 2}-\hat{\lambda}\mu_{i 1}^2}\delta_{i}g_i(t){\beta_{i,4}},\\
		&\vartheta^\circ_{{i,k}}(t)=-\frac{\theta_i}{n}a_{k}^*(t)\sqrt{(\hat{\lambda}+\lambda_{k})\mu_{k 2}-\hat{\lambda}\mu_{k 1}^2}\delta_{i}g_i(t){\beta_{i,4}},~k\neq i
	\end{aligned}\right.
\end{equation*}
where 
\begin{equation*}
	\begin{aligned}
		&	P_i(t)=\frac{2\hat{\eta}}{ \delta_{i}g_i(t)}(\lambda_i+\hat{\lambda})\mu_{i 2},\quad Q_i(t)=\hat{\lambda}\theta_i\mu_{i 1}(1-\beta_{i,3}),\\& R_i(t)=\frac{2}{\delta_{i}g_i(t)}\hat{\eta}(\lambda_i+\hat{\lambda})\mu_{i 2}+ (1-\frac{\theta_i}{n})\left[(\lambda_i+\hat{\lambda})\mu_{i2}(1-{\beta_{i,4}})+\hat{\lambda}\mu_{i 1}^2({\beta_{i,4}}-{\beta_{i,3}})\right].
\end{aligned}	\end{equation*}
And the candidate value function of  insurer $i$ $ v^{(i)}(t,y,z) \in C^{1,2,2}([0,T]\times\mathbb{R}^2_+)$  is given by \begin{equation*}
	v^{(i)}(t, y,z)= -\frac{1}{\delta_i} \exp\left(f_i(t)-\delta_i yg_i(t)+h_i(t)z\right),
\end{equation*}where 
\begin{equation}\label{fgh}
	\left\{\begin{aligned}
		g_i(t)=&e^{r(T-t)},\\
		h_i(t)=&c_{i,1}\frac{e^{c_{i,2} t}-e^{c_{i,2} T}}{e^{c_{i,2} t} + c_{i,3}e^{c_{i,2} T}},\\
		f_i(t)=&\int_{t}^{T}\!\left\{{\kappa}\bar{Z}h_i(s)-(1-\frac{\theta_i}{n})\left[\eta_{i}\left(\lambda_{i}+\hat{\lambda}\right) \mu_{i 1}-\hat{\eta}(1-a^\circ_i(s))^2\left(\lambda_{i}+\hat{\lambda}\right) \mu_{i 2}\right] \delta_{i}g_i(s)\right.\\&\quad+\frac{\theta_{i}}{n}\sum_{k\neq i}\left[\eta_{k}\left(\lambda_{k}+\hat{\lambda}\right) \mu_{k 1}-\hat{\eta}(1-a^*_k(s))^2\left(\lambda_{k}+\hat{\lambda}\right) \mu_{k 2}\right]\delta_{i}g_i(s)\\
		&\quad+\frac{1}{2}(1-\frac{\theta_i}{n})^2(a^\circ_{i}(s))^2((\hat{\lambda}+\lambda_{i})\mu_{i 2}(1-{\beta_{i,4}})+\hat{\lambda}\mu_{i 1}^2({\beta_{i,4}}-{\beta_{i,3}}))\delta_{i}^2e^{2r(T-s)}\\
		&\quad+\frac{\theta_i^2}{2n^2}\sum_{k\neq i}(a_{k}^*(s))^2((\hat{\lambda}+\lambda_{k})\mu_{k 2}-\hat{\lambda}\mu_{k 1}^2)(1-{\beta_{i,4}})\delta_{i}^2e^{2r(T-s)} \\
		&\quad+\hat{\lambda}\frac{\theta_i^2}{2n^2}\left[\sum_{k\neq i}a^*_k(s) \mu_{k 1}\right]^2 \delta_{i}^2e^{2r(T-s)}(1-{\beta_{i,3}})\\
		&\quad\left.-\hat{\lambda}\frac{\theta_{i}}{n}(1-\frac{\theta_i}{n})a^\circ_i(s)\sum_{k\neq i}a^*_{k}(s)\mu_{i 1}\mu_{k 1}\delta_{i}^2e^{2r(T-s)}(1-{\beta_{i,3}})\right\}\rd s,
	\end{aligned}\right.
\end{equation}
and $ c_{i,1}=\frac{{\kappa}+m\nu\rho +\sqrt{({\kappa}+m\nu\rho )^2+\frac{1-\beta_{i,2}}{1-\beta_{i,1}}\nu^2(1-\rho^2)m^2}}{\nu^2(1-\rho^2)(1-\beta_{i,2})}$, $c_{i,3}=2(1-\beta_{i,1})({\kappa}+m\nu\rho )c_{i,1}+1$, $c_{i,2}=\frac{c_{i,3}+1}{2(1-\beta_{i,1})c_{i,1}} $.
\end{theorem}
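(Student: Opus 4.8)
The plan is to solve the HJBI equation \eqref{HJBI-v} by the exponential-affine ansatz that the boundary condition suggests, and then to check that all resulting candidates are admissible. First I would posit
\[
v^{(i)}(t,y,z)=-\frac{1}{\delta_i}\exp\bigl(f_i(t)-\delta_i y g_i(t)+h_i(t)z\bigr),
\]
with terminal data $f_i(T)=0$, $g_i(T)=1$, $h_i(T)=0$, so that $v^{(i)}(T,y,z)=-\frac1{\delta_i}e^{-\delta_i y}$. Every partial derivative of $v^{(i)}$ is a polynomial in $(y,z)$ times $v^{(i)}$, so substituting into $\mathcal H$ and using the homothetic specification $\Psi_j^{i}=\beta_{i,j}/v^{(i)}$ — which turns each penalty term $\varphi_i^2/(2\Psi_1^{i})$ into $\varphi_i^2 v^{(i)}/(2\beta_{i,1})$, and similarly for the others — the whole bracket factors as $v^{(i)}$ times an expression $\mathcal G$ in $(t,y,z)$ and the controls in which $v^{(i)}$ itself no longer appears. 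Because $v^{(i)}<0$, maximising $\mathcal H$ over $(\pi_i,a_i)$ and minimising over $(\varphi_i,\chi_i,\phi_i,\vartheta_{i})$ is the same as minimising $\mathcal G$ over $(\pi_i,a_i)$ and maximising $\mathcal G$ over $(\varphi_i,\chi_i,\phi_i,\vartheta_{i})$; since $\mathcal G$ is convex in the first group (the coefficient of $\pi_i^2$ is $\frac12\sigma^2(1-\theta_i/n)^2(\delta_i g_i)^2>0$) and strictly concave in the second (its Hessian there is diagonal with entries $1/\beta_{i,j}<0$), the Isaacs condition holds and the order of the two optimisations is immaterial.

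Next I would carry out the inner optimisation. For fixed $(\pi_i,a_i)$, the first-order conditions in $(\varphi_i,\chi_i,\phi_i,\vartheta_{i})$ give the unique maximiser, which after using $g_i(t)=e^{r(T-t)}$ (obtained below) is exactly the worst-case measure stated in the theorem: $\varphi_i^\circ$ comes from balancing the $f_y$- and $f_{yz}$-terms of $\mathcal A^{\cdots}$ and the $\nu\sqrt z\rho$-term on $f_z$ against the $\varphi_i^2$-penalty, and $\chi_i^\circ$ from the $\sqrt{1-\rho^2}$-terms on $f_z,f_{zz}$ against the $\chi_i^2$-penalty. Substituting these back, the remaining expression is a strictly convex quadratic in $\pi_i$ whose unconstrained minimiser is $\pi_i^\circ$ in \eqref{equ:pia1} — here one uses $\sqrt z/\Sigma(t)=z/(az+b)$ to put it in admissible form — and a strictly convex quadratic in $a_i$ whose unconstrained minimiser equals $\frac{Q_i(t)}{R_i(t)}\cdot\frac1n\sum_{k\neq i}\mu_{k 1}a_k^*(t)+\frac{P_i(t)}{R_i(t)}$. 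I would check $P_i,Q_i>0$ and $R_i>0$ (the latter because $(\lambda_i+\hat\lambda)\mu_{i 2}\geq\hat\lambda\mu_{i 1}^2$ forces the bracket in $R_i$ to dominate $\hat\lambda\mu_{i 1}^2(1-\beta_{i,3})>0$), so this value is nonnegative and the minimiser over $[0,1]$ is exactly that value $\wedge\,1$, i.e. $a_i^\circ$.

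With $(\pi_i^\circ,a_i^\circ,\varphi_i^\circ,\chi_i^\circ,\phi_i^\circ,\vartheta_{i}^\circ)$ substituted, $\mathcal G=0$ becomes an identity in $(y,z)$, and I would separate variables. The coefficient of $y$ gives the linear ODE $g_i'(t)+rg_i(t)=0$, $g_i(T)=1$, hence $g_i(t)=e^{r(T-t)}$. The coefficient of $z$ gives a Riccati equation of the form $h_i'(t)=\nu^2(1-\rho^2)(1-\beta_{i,2})h_i^2(t)+2(\kappa+m\nu\rho)(1-\beta_{i,1})h_i(t)+\frac{1-\beta_{i,2}}{1-\beta_{i,1}}m^2$ (the precise constants to be tracked in the appendix), with $h_i(T)=0$; factoring the right-hand side through its two real roots and integrating yields the closed form in \eqref{fgh}, with $c_{i,1}$ proportional to the gap between the roots, $c_{i,2}$ the exponential rate, and $c_{i,3}$ the integration constant fixed by $h_i(T)=0$. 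The $(y,z)$-free part gives $f_i'(t)=-(\cdots)$ with $f_i(T)=0$, which integrates directly to \eqref{fgh}; since $a_i^\circ$ and the $a_k^*$ are bounded and $g_i,h_i$ continuous, this integral is finite and $f_i\in C^1([0,T])$, so $v^{(i)}\in C^{1,2,2}([0,T]\times\mathbb{R}^2_+)$.

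The main obstacle is the last step: verifying the candidates lie in $\mathscr{U}_i$ and $\mathscr{A}$, which is exactly what Conditions (I) and (II) are for. For $\pi_i^\circ$, the second term equals $\frac{n}{n-\theta_i}\bigl(\frac{m}{1-\beta_{i,1}}+\nu\rho h_i(t)\bigr)\frac1{\delta_i g_i(t)}\cdot\frac{z}{az+b}$, whose prefactor is continuous on $[0,T]$ hence bounded, and the first term is a finite linear combination of the admissible $\pi_k^*$, so $\pi_i^\circ\in\mathscr{U}_i$; $a_i^\circ\in[0,1]$ by construction. For the worst-case measures, $\varphi_i^\circ/\sqrt z=\frac{\beta_{i,1}}{1-\beta_{i,1}}m$ is constant and $\chi_i^\circ/\sqrt z=-\nu\sqrt{1-\rho^2}\beta_{i,2}h_i(t)$, so membership in $\mathscr{A}$ requires both to be bounded by a constant with square below $\kappa^2/(2\nu^2)$: the $\varphi$-requirement is Condition (I) verbatim, while for the $\chi$-requirement I would show $t\mapsto\frac{e^{c_{i,2}t}-e^{c_{i,2}T}}{e^{c_{i,2}t}+c_{i,3}e^{c_{i,2}T}}$ is monotone (and sign-definite) on $[0,T]$, so that $\sup_{[0,T]}|h_i(t)|=|h_i(0)|=c_{i,1}\frac{|1-e^{c_{i,2}T}|}{|1+c_{i,3}e^{c_{i,2}T}|}$, reducing it to Condition (II); finally $\phi_i^\circ$ and $\vartheta_{i}^\circ$ are bounded because $a_i^\circ$, $a_k^*$, $g_i$ and $h_i$ are. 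The delicate parts are tracking the Riccati constants and establishing the monotonicity/sign of $h_i$; the rest reduces to completing squares and integrating, which I would relegate to Appendix \ref{proof-solution-HJBI}.
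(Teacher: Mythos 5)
Your proposal is correct and follows essentially the same route as the paper's Appendix~\ref{proof-solution-HJBI}: an exponential-affine ansatz for $v^{(i)}$, first-order conditions for the inner minimization over $(\varphi_i,\chi_i,\phi_i,\vartheta_i)$ followed by the outer optimization over $(\pi_i,a_i)$, truncation of $\hat a_i^\circ$ to $[0,1]$ using $R_i>0$, coefficient matching in $y$ and $z$ to obtain the linear and Riccati ODEs for $g_i$ and $h_i$, and the admissibility check of $\varphi_i^\circ/\sqrt{Z}$ and $\chi_i^\circ/\sqrt{Z}$ against Conditions (I) and (II). The only cosmetic differences are that you posit the ansatz before writing the first-order conditions and invoke a saddle-point argument that the paper defers to Proposition~\ref{pi-circ}.
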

\begin{proof}
See Appendix \ref{proof-solution-HJBI}.
\end{proof}
\eqref{equ:pia1} shows that the candidate robust optimal investment and reinsurance strategies are all composed of two parts: one part associated with others' strategies, and one part to hedge insurance risk or financial risk. When the insurer is not concerned with others' wealth, i.e., $\theta_i=0$, the first part disappears and the insurer is not influenced by others' strategies.  We can also see that \begin{equation*}
	\esssup_{0\leqslant t\leqslant T}v^{(i)}(t,Y(t),Z(t))<\infty
\end{equation*}

Theorem  \ref{solution-HJBI} presents the candidate robust optimal strategies, the associated worst-case measures and the candidate value function.
Next, we should verify that $(\pi^*_i,a^*_i)=(\pi^\circ_i,a^\circ_i)$, $ \left(\varphi_{i}^*, \chi_{i}^*,\phi_i^*,\vartheta_{i}^*\right)= \left((\varphi_i^\circ,\chi_i^\circ,\phi_i^\circ,\vartheta_{i}^\circ)\right)$ and $V^{(i)}(t, y,z)=v^{(i)}(t, y,z)$, i.e., the candidate robust optimal strategies, the associated worst-case measures and candidate value function are the optimal response strategies, the associated worst-case measures and value function indeed respectively, which will be shown in Theorem \ref{Verification}.

Before that, we present the following property of $(\pi^\circ_{i},a^\circ_i)$. 

\begin{proposition}\label{pi-circ} For $
	\Psi_{j}^i=\frac{\beta_{i,j}}{v^{(i)}}
	$,  $ (Y_i(t), Z(t))=(y,z) $, and $(\pi^\circ_i,a^\circ_i)$, $ \left(\varphi_{i}^\circ, \chi_{i}^\circ,\phi_i^\circ,\vartheta_{i}^\circ\right)$  defined by \eqref{HJBI-opt} solves the following problem
\begin{equation*}
(\pi^\circ_{i},a^\circ_i)=\arg\sup _{(\pi_i,a_i)\in\mathscr{U}_i}\mathcal{H}\left({\left\{(\pi_{k}^*,a_k^*)_{k\neq i},(\pi_{i},a_i)\right\},(\varphi_i^\circ,\chi_i^\circ,\phi_i^\circ,\vartheta_{i}^\circ)},v^{(i)},(\Psi_j^{i})_{j=1}^4,(t,y,z)\right).
\end{equation*}
\end{proposition}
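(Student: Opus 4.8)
The plan is to exploit the fact that the Hamiltonian $\mathcal{H}$, as a function of the insurer's own strategy $(\pi_i,a_i)$, splits cleanly into a term involving $\pi_i$ (the investment part, coming from the $f_y$, $f_{yy}$ and $f_{yz}$ terms) and a term involving $a_i$ (the reinsurance part, coming from the $f_y$ and $f_{yy}$ terms), with the penalty terms and the adversarial variables already frozen at $(\varphi_i^\circ,\chi_i^\circ,\phi_i^\circ,\vartheta_i^\circ)$. So the outer supremum over $(\pi_i,a_i)$ decouples into two independent scalar optimization problems. The key observation making this work is that $v^{(i)}(t,y,z)=-\frac{1}{\delta_i}\exp(f_i(t)-\delta_i y g_i(t)+h_i(t)z)$ has the property that $v^{(i)}_{yy}=-\delta_i g_i(t) v^{(i)}_y$ and $v^{(i)}_{yz}=-\delta_i g_i(t)\,\partial_z v^{(i)}$ up to the explicit factors, and crucially $v^{(i)}_y<0$ while $v^{(i)}_{yy}>0$ (since $v^{(i)}<0$), so each of the two scalar problems is a strictly concave quadratic (in $\pi_i$, and in $a_i$ on the region where the $\wedge 1$ constraint is not active), hence has a unique interior maximizer given by the first-order condition.

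First I would write out $\mathcal{H}\big(\{(\pi_k^*,a_k^*)_{k\neq i},(\pi_i,a_i)\},(\varphi_i^\circ,\chi_i^\circ,\phi_i^\circ,\vartheta_i^\circ),v^{(i)},(\Psi_j^i)_{j=1}^4,(t,y,z)\big)$ explicitly using the definition of $\mathcal{A}$ from the excerpt, substituting the closed-form worst-case measures. Then I would isolate the $\pi_i$-dependent part: it is of the form $A(t,y,z)\pi_i + \tfrac12 B(t,y,z)(1-\tfrac{\theta_i}{n})^2\sigma^2\pi_i^2$ (after collecting the cross term with $\sum_{k\neq i}\pi_k^*$), where $B=v^{(i)}_{yy}>0$ in the relevant sign convention — actually since $v^{(i)}_{yy}<0$ wait: with $v^{(i)}<0$ and the $e^{-\delta_i y g_i}$ factor, $v^{(i)}_{yy}=\delta_i^2 g_i^2 v^{(i)}<0$; but one is maximizing $\mathcal{H}$ which contains $\tfrac12\sigma^2(\cdots)^2 v^{(i)}_{yy}$ with $v^{(i)}_{yy}<0$, so this is a concave quadratic in $\pi_i$ — good, the FOC gives the unique maximizer. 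Setting $\partial_{\pi_i}\mathcal{H}=0$ and solving yields exactly the formula $\pi_i^\circ$ in \eqref{equ:pia1}; one checks the hedging-demand term $\nu\rho h_i(t)$ arises from the $f_{yz}$ contribution. Similarly, isolating the $a_i$-dependent part gives a concave quadratic in $a_i$ whose unconstrained maximizer is $\frac{Q_i(t)}{R_i(t)}\frac1n\sum_{k\neq i}\mu_{k1}a_k^*(t)+\frac{P_i(t)}{R_i(t)}$, and since the constraint set is $a_i\in[0,1]$, concavity of the quadratic forces the constrained maximizer to be the projection, i.e. the $\wedge 1$ (and one should note the lower bound $0$ is automatically respected, or remark on why, given $P_i,Q_i,R_i>0$).

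The main obstacle — and the only real content beyond bookkeeping — is the argument that this $(\pi_i^\circ,a_i^\circ)$, obtained by freezing the adversary first and then maximizing, coincides with the $(\pi_i^\circ,a_i^\circ)$ defined in \eqref{HJBI-opt}, where the order is reversed (maximize over $(\pi_i,a_i)$ the inf over the adversary). In other words I must justify the interchange $\sup_{(\pi_i,a_i)}\inf_{(\varphi_i,\ldots)}\mathcal{H} = \inf_{(\varphi_i,\ldots)}\sup_{(\pi_i,a_i)}\mathcal{H}$ at the level of the Hamiltonian, or more precisely verify that the pair $(\pi_i^\circ,a_i^\circ),(\varphi_i^\circ,\chi_i^\circ,\phi_i^\circ,\vartheta_i^\circ)$ is a saddle point of $\mathcal{H}$. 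The structure that saves the day is that $\mathcal{H}$ is jointly concave in $(\pi_i,a_i)$ (shown above) and jointly convex in $(\varphi_i,\chi_i,\phi_i,\vartheta_i)$ — the latter because the penalty terms $\frac{\varphi_i^2}{2\Psi_1^i}+\cdots$ are strictly convex quadratics with positive coefficients (recall $\Psi_j^i=\beta_{i,j}/v^{(i)}>0$ since $\beta_{i,j}<0$ and $v^{(i)}<0$) while the remaining dependence on the adversarial variables is linear (the $\varphi_i$-term enters $\mathcal{A}$ linearly through $f_z$, $f_y$, and the Girsanov drift corrections; likewise $\chi_i,\phi_i,\vartheta_i$). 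By the standard minimax theorem for a function concave-convex on a product of convex sets (e.g. Sion's theorem, or here just the explicit first-order conditions since everything is quadratic), a saddle point exists and the order of $\sup$ and $\inf$ may be exchanged; moreover the saddle point is characterized by the two families of first-order conditions, which are precisely the equations defining $(\pi_i^\circ,a_i^\circ)$ and $(\varphi_i^\circ,\ldots,\vartheta_i^\circ)$ in Theorem \ref{solution-HJBI}. Hence the $(\pi_i^\circ,a_i^\circ)$ produced by the ``freeze the adversary then maximize'' problem in the proposition's statement is the same as in \eqref{HJBI-opt}. I would close by remarking that the admissibility $(\pi_i^\circ,a_i^\circ)\in\mathscr{U}_i$ — the boundedness of $\ell_i(t)=\pi_i^\circ(t)(aZ(t)+b)/Z(t)$ — is guaranteed by Conditions (I) and (II) together with the boundedness of $h_i$ and of $\{\pi_k^*\}_{k\neq i}$, exactly as in the proof of Theorem \ref{solution-HJBI}.
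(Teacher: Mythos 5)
Your proposal is correct and follows essentially the same route as the paper: the paper's proof also reduces the claim to the saddle-point property of the (blockwise) concave--convex quadratic Hamiltonian, establishing it via the elementary Lemma \ref{Quad} on indefinite quadratic forms (which handles the $[0,1]$ constraint on $a_i$ exactly as your projection argument does), rather than citing a general minimax theorem — precisely the ``explicit first-order conditions since everything is quadratic'' alternative you mention. The decoupling of the $\pi_i$-block from the $a_i$-block and the admissibility check under Conditions (I) and (II) also match the paper's treatment.
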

\begin{proof}
See Appendix \ref{proof-pi-circ}.
\end{proof}
Proposition \ref{pi-circ} shows that $(\pi^\circ_{i},a^\circ_i)$  also solves the optimization problem when the probability measure  is given in advance by $ \left(\varphi_{i}^\circ, \chi_{i}^\circ,\phi_i^\circ,\vartheta_{i}^\circ\right)$. 
\begin{theorem}[Verification theorem]\label{Verification}
	For $v^{(i)}\in C^{1,2,2}([0,T]\times\mathbb{R}^2_+) $ solving  \eqref{HJBI-v}, $ (\pi^\circ_i,a^\circ_i) $, $(\varphi_i^\circ,\chi_i^\circ,\phi_i^\circ,\vartheta_{i}^\circ)$  defined by \eqref{HJBI-opt}, if the parameters satisfy the compatible conditions (I) and (II), then equations $(\pi^*_i,a^*_i)=(\pi^\circ_i,a^\circ_i)$, $ \left(\varphi_{i}^*, \chi_{i}^*,\phi_i^*,\vartheta_{i}^*\right)= \left((\varphi_i^\circ,\chi_i^\circ,\phi_i^\circ,\vartheta_{i}^\circ)\right)$, $ V^{(i)}(t, y,z)=v^{(i)}(t, y,z) $ hold $  \forall 1\leqslant i\leqslant n $. 
\end{theorem}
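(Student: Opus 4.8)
\textbf{Proof proposal for the Verification Theorem \ref{Verification}.}

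The plan is to run a classical stochastic-control verification argument, carefully separating the two optimizations (the inner infimum over measures and the outer supremum over strategies) and using the structure of the candidate value function $v^{(i)}(t,y,z)=-\frac1{\delta_i}\exp(f_i(t)-\delta_i y g_i(t)+h_i(t)z)$ together with the exponential form of the penalty. First I would fix an insurer $i$ and assume every other insurer $k\neq i$ plays the candidate strategy $(\pi_k^*,a_k^*)$. For an arbitrary admissible pair $(\pi_i,a_i)\in\mathscr{U}_i$ and an arbitrary admissible $(\varphi_i,\chi_i,\phi_i,\vartheta_i)\in\mathscr{A}$, I would apply It\^o's formula to $v^{(i)}(t,Y_i(t),Z(t))$ along the dynamics \eqref{SDE-X} of $Y_i$ and $Z$ under $\mathbb{Q}^{\varphi_i,\chi_i,\phi_i,\vartheta_i}$. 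This yields
\begin{equation*}
v^{(i)}(t,Y_i(t),Z(t))=v^{(i)}(0,y_i^0,z^0)+\int_0^t \mathcal{A}^{\{(\pi_k^*,a_k^*)_{k\neq i},(\pi_i,a_i)\},(\varphi_i,\chi_i,\phi_i,\vartheta_i)}v^{(i)}(s,Y_i(s),Z(s))\,\mathrm{d}s+M_t,
\end{equation*}
where $M_t$ is a local martingale under $\mathbb{Q}^{\varphi_i,\chi_i,\phi_i,\vartheta_i}$. Adding the running penalty $\int_0^t\big[\tfrac{\varphi_i^2}{2\Psi_1^i}+\tfrac{\chi_i^2}{2\Psi_2^i}+\tfrac{\phi_i^2}{2\Psi_3^i}+\tfrac{\vartheta_i^T\vartheta_i}{2\Psi_4^i}\big]\mathrm{d}s$ to both sides and recognizing that the drift term plus the penalty is exactly $\mathcal{H}(\cdots,v^{(i)},(\Psi_j^i)_{j=1}^4,(s,Y_i(s),Z(s)))$, I can invoke the HJBI equation \eqref{HJBI-v}: the $\sup\inf$ of $\mathcal{H}$ vanishes, so for the candidate strategy $\mathcal{H}=0$ at the saddle point, while for arbitrary $(\pi_i,a_i)$ and the candidate worst-case measure the term is $\leqslant 0$, and for the candidate strategy and arbitrary measure it is $\geqslant 0$. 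This is where Proposition \ref{pi-circ} is used to guarantee that plugging in $(\varphi_i^\circ,\chi_i^\circ,\phi_i^\circ,\vartheta_i^\circ)$ first and then optimizing over $(\pi_i,a_i)$ still gives the candidate strategy, so the saddle-point interchange is legitimate.

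Next I would handle the integrability/martingale issues, which are the crux of the argument. I need to show that $M_t$ (or rather the relevant stopped processes) is a genuine $\mathbb{Q}^{\varphi_i,\chi_i,\phi_i,\vartheta_i}$-martingale, or at least a uniformly integrable one, so that taking $\mathbb{E}^{\mathbb{Q}^{\varphi_i,\chi_i,\phi_i,\vartheta_i}}_{0,y_i^0,z^0}[\cdot]$ kills it and yields, after sending a localizing sequence of stopping times to $T$,
\begin{equation*}
v^{(i)}(0,y_i^0,z^0)\geqslant \mathbb{E}^{\mathbb{Q}^{\varphi_i,\chi_i,\phi_i,\vartheta_i}}_{0,y_i^0,z^0}\!\left[U_f(Y_i(T);\delta_i)+\int_0^T\Big(\tfrac{\varphi_i^2}{2\Psi_1^i}+\tfrac{\chi_i^2}{2\Psi_2^i}+\tfrac{\phi_i^2}{2\Psi_3^i}+\tfrac{\vartheta_i^T\vartheta_i}{2\Psi_4^i}\Big)\mathrm{d}s\right]
\end{equation*}
for the candidate strategy against every measure, with equality for the candidate worst-case measure, and the reverse inequality for every competing strategy against its candidate worst-case measure. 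The subtle point is that $v^{(i)}$ is exponential in both $Y_i$ and $Z$, and $Z$ is a CIR-type process whose exponential moments can blow up; this is precisely why the admissible classes $\mathscr{U}_i,\mathscr{A}$ impose $\pi_i(t)=\ell_i(t)Z(t)/(aZ(t)+b)$ with bounded $\ell_i$ and $(\varphi/\sqrt{Z},\chi/\sqrt{Z})$ bounded by $\mathcal{C}$ with $\mathcal{C}^2<\kappa^2/(2\nu^2)$, and why Conditions (I)--(II) on $\beta_{i,1},\beta_{i,2}$ appear. I would use Proposition 5.1 of \cite{kraft2005optimal} (already cited in the Remark) to bound quantities of the form $\mathbb{E}[\exp(C\int_0^T Z(s)\,\mathrm{d}s)]$ whenever $C<\kappa^2/(2\nu^2)$, together with an application of the same bound under the equivalent measure $\mathbb{Q}^{\varphi_i,\chi_i,\phi_i,\vartheta_i}$ (where the drift of $Z$ is shifted by $\nu\sqrt{Z}(\rho\varphi_i+\sqrt{1-\rho^2}\chi_i)$, which keeps the mean-reversion speed uniformly bounded by the $\mathcal{C}^2<\kappa^2/(2\nu^2)$ constraint), and the observation $\esssup_{0\leqslant t\leqslant T}v^{(i)}(t,Y_i(t),Z(t))<\infty$ noted after Theorem \ref{solution-HJBI}. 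A standard localization with $\tau_m=\inf\{t: \int_0^t |M|^2\text{-integrand}\geqslant m\}\wedge T$ plus dominated convergence (justified by the exponential-moment bounds, uniform in $m$) closes the martingale step.

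Combining the two directions gives, for the fixed other-players' profile $(\pi_k^*,a_k^*)_{k\neq i}$:
\begin{equation*}
\inf_{(\varphi_i,\chi_i,\phi_i,\vartheta_i)\in\mathscr{A}}J_i\big(\{(\pi_k^*,a_k^*)_{k\neq i},(\pi_i,a_i)\},(\varphi_i,\chi_i,\phi_i,\vartheta_i),0,y_i^0,z^0\big)\leqslant v^{(i)}(0,y_i^0,z^0)
\end{equation*}
for every $(\pi_i,a_i)\in\mathscr{U}_i$, with equality achieved at $(\pi_i^\circ,a_i^\circ)$ and the infimum attained at $(\varphi_i^\circ,\chi_i^\circ,\phi_i^\circ,\vartheta_i^\circ)$; hence $V^{(i)}(0,y_i^0,z^0)=v^{(i)}(0,y_i^0,z^0)$, the outer $\sup$ is realized by the candidate strategy, and the inner $\inf$ by the candidate measure. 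Running the identical argument from a general initial time $t$ and state $(y,z)$ upgrades this to $V^{(i)}(t,y,z)=v^{(i)}(t,y,z)$, $(\pi_i^*,a_i^*)=(\pi_i^\circ,a_i^\circ)$, $(\varphi_i^*,\chi_i^*,\phi_i^*,\vartheta_i^*)=(\varphi_i^\circ,\chi_i^\circ,\phi_i^\circ,\vartheta_i^\circ)$; since $i$ was arbitrary and the candidate strategies of the other players were exactly the ones used, Definition \ref{def1} is satisfied, proving the theorem. I expect the main obstacle to be the uniform-integrability/exponential-moment control of $v^{(i)}(\cdot,Y_i(\cdot),Z(\cdot))$ and of the stochastic integral $M_t$ under the \emph{family} of equivalent measures $\mathbb{Q}^{\varphi_i,\chi_i,\phi_i,\vartheta_i}$ simultaneously — ensuring the bounds are uniform in the choice of admissible $(\varphi_i,\chi_i,\phi_i,\vartheta_i)$ so that the infimum can legitimately be exchanged with the limit in the localization — and this is exactly the role played by the constraints defining $\mathscr{A},\mathscr{U}_i$ and by Conditions (I) and (II).
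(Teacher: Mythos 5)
Your proposal follows essentially the same route as the paper's proof in Appendix C: It\^o's formula applied to $v^{(i)}$, the two one-sided inequalities extracted from the HJBI equation (with Proposition \ref{pi-circ} supplying the direction where the worst-case measure is frozen first), and integrability of the stochastic integrals secured by the admissibility constraints together with Conditions (I)--(II). The only cosmetic difference is that the paper controls the martingale terms by a comparison-theorem bound $Z(t)\leqslant \hat{Z}^s(t)$ giving $\mathbb{E}^{\mathbb{Q}}\bigl[\int_s^T Z(t)\,\mathrm{d}t\bigr]<\infty$ under the shifted drift, rather than the exponential-moment estimates you invoke, but both serve the same purpose and your argument is sound.
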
\begin{proof}
See Appendix \ref{proof-verification}.\end{proof}

Based on Theorem \ref{solution-HJBI} and Theorem \ref{Verification}, we obtain the insurer $i$'s robust optimal response strategy. Next, we derive the robust equilibrium strategies  explicitly. Replacing $(\pi^\circ_i,a^\circ_i) $ by $(\pi^*_i,a^*_i)$ in \eqref{equ:pia1} for $ 1\leqslant i\leqslant n $, we obtain $n$ equations of $(\pi^*_i,a^*_i)$ expressed by $(\pi^*_k,a^*_k), k\neq i  $. Solving these $n$ equations simultaneously, we can derive the robust equilibrium reinsurance and investment strategies in the $n$-insurer game. The derivations are  simple and we omit them. The results are summarized in the following theorem.
\begin{theorem}\label{thm:result}
If $n\neq\sum_{k=1}^n\theta_k$, the robust $n$-insurer equilibrium  exists. The equilibrium investment strategies are given by $ \pi^*_i$, $ \forall 1\leqslant i\leqslant n $ are given by  
\begin{equation}\label{n-opt}
	\pi^*_i(t)=\left[\theta_i\frac{\sum_{k=1}^n\frac{\frac{m}{1-\beta_{k,1}}+\nu\rho h_k(t)}{\delta_{k}g_k(t)}}{n-\sum_{k=1}^n\theta_k}+\frac{\frac{m}{1-\beta_{i,1}}+\nu\rho h_i(t)}{\delta_{i}g_i(t)}\right]\frac{{Z(t)}}{a{Z(t)}+b},
\end{equation}the equilibrium reinsurance proportions $ a^*_i $, $ \forall 1\leqslant i\leqslant n $, are  determined by 
\begin{equation}\label{a_star}
	a_i^*(t)=\left(\frac{Q_i(t)}{R_i(t)}\frac{1}{n}\sum_{k\neq i}\mu_{k 1}a_k^*(t)+\frac{P_i(t)}{R_i(t)}\right)\wedge1,
\end{equation}
and the associated worst-case model uncertainty functions $ \left(\varphi_{i}^*, \chi_{i}^*,\phi_i^*,\vartheta_{i}^*\right)$,  $ \forall 1\leqslant i\leqslant n $, are given by 
\begin{equation*}
	\left\{\begin{aligned}
		&{\varphi^*_{i}(t)}=\frac{\beta_{i,1}}{1-\beta_{i,1}}m\sqrt{Z(t)},\\
		&{\chi^*_{i}(t)}=-\nu\sqrt{1-\rho^2} \beta_{i,2}h_i(t)\sqrt{Z(t)}\\
		&\phi^*_{{i}}(t)=\sqrt{\hat{\lambda}}\left[(1-\frac{\theta_i}{n})\mu_{i1}a^*_{i}(t)-\frac{\theta_{i}}{n}\sum_{k\neq i}\mu_{k1}a_{k}^*(t)\right]\delta_{i}g_i(t){\beta_{i,3}},\\
		&\vartheta^*_{{i,i}}(t)=(1-\frac{\theta_i}{n})a^*_{i}(t)\sqrt{(\hat{\lambda}+\lambda_{i})\mu_{i 2}-\hat{\lambda}\mu_{i 1}^2}\delta_{i}g_i(t){\beta_{i,4}},\\
		&\vartheta^*_{{i,k}}(t)=-\frac{\theta_i}{n}a_{k}^*(t)\sqrt{(\hat{\lambda}+\lambda_{k})\mu_{k 2}-\hat{\lambda}\mu_{k 1}^2}\delta_{i}g_i(t){\beta_{i,4}},~k\neq i.
	\end{aligned}\right.
\end{equation*}
where $ f_i(t) $, $ g_i(t) $ and  $ h_i(t) $ are given by \eqref{fgh}.

Moreover, if $n=\sum_{k=1}^n\theta_k$, the robust $n$-insurer equilibrium does not exist.
\end{theorem}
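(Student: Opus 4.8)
The plan is to recast Definition \ref{def1} as a fixed-point problem for the best-response map computed in Theorem \ref{solution-HJBI}. Under Conditions (I) and (II), Theorems \ref{solution-HJBI} and \ref{Verification} show that, for any admissible profile $\{(\pi_k^*,a_k^*)_{k\neq i}\}$, the $\sup$--$\inf$ in Problem \eqref{oi} is attained, and attained only at the pair $(\pi_i^\circ,a_i^\circ)$ of \eqref{equ:pia1} (uniqueness of the maximizer coming from strict concavity of the inner problem in $(\pi_i,a_i)$). Consequently $\{(\pi_k^*,a_k^*)_{k=1}^n\}\in\mathscr{U}$ is a robust equilibrium if and only if $(\pi_i^*,a_i^*)=(\pi_i^\circ,a_i^\circ)$ for every $i$, the right-hand side being evaluated at $\{(\pi_k^*,a_k^*)_{k\neq i}\}$. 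I would record this equivalence first and observe that the resulting system decouples: the equations for $(\pi_1^*,\dots,\pi_n^*)$ involve only the $\pi_k^*$, while those for $(a_1^*,\dots,a_n^*)$ involve only the $a_k^*$ (the coefficients $P_i,Q_i,R_i$ depend on $\theta_i$ but never on $\sum_k\theta_k$), so the two blocks can be treated separately.

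For the investment block, admissibility forces $\pi_k^*(t)=\ell_k(t)\frac{Z(t)}{aZ(t)+b}$, and since $\frac{\sqrt{Z(t)}}{\Sigma(t)}=\frac{Z(t)}{aZ(t)+b}>0$ a.s., dividing the identity $\pi_i^*=\pi_i^\circ$ by $\frac{Z(t)}{aZ(t)+b}$ yields, for a.e.\ $t$, the deterministic linear system
\[
\ell_i=\frac{\theta_i}{n-\theta_i}\sum_{k\neq i}\ell_k+\frac{n}{n-\theta_i}A_i,\qquad A_i:=\frac{1}{\delta_i g_i(t)}\Bigl(\tfrac{m}{1-\beta_{i,1}}+\nu\rho h_i(t)\Bigr).
\]
Multiplying by $n-\theta_i$ and writing $L:=\sum_k\ell_k$ gives $n\ell_i=\theta_iL+nA_i$; summing over $i$ gives $(n-\sum_k\theta_k)L=n\sum_kA_k$. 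When $n\neq\sum_k\theta_k$ one solves $L=\frac{n\sum_kA_k}{\,n-\sum_k\theta_k\,}$, back-substitutes $\ell_i=\frac{\theta_i}{n}L+A_i$, and recovers exactly \eqref{n-opt}; the coefficient matrix $nI-\theta\mathbf{1}^{\top}$ has determinant $n^{\,n-1}(n-\sum_k\theta_k)\neq0$, so this is the unique solution, and boundedness of $g_i,h_i$ on $[0,T]$ (from \eqref{fgh}, guaranteed by Conditions (I)--(II)) makes each $\ell_i$ bounded, so $\pi_i^*\in\mathscr{U}_i$. The worst-case measures are then read off by substituting $(\pi_k^*,a_k^*)$ for $(\pi_k^\circ,a_k^\circ)$ in the formulas of Theorem \ref{solution-HJBI}; note $\varphi_i^*,\chi_i^*$ do not depend on the strategies, so they appear verbatim.

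For the reinsurance block the fixed-point condition is precisely \eqref{a_star}, and I would establish solvability by a Brouwer argument: for each fixed $t$ the map $\Phi$ sending $a\in[0,1]^n$ to the right-hand side of \eqref{a_star} is continuous and maps $[0,1]^n$ into itself, the upper bound coming from the truncation $\wedge1$ and the lower bound from $P_i,R_i>0$ and $Q_i\geq0$ — which follow from $\delta_i>0$, $1-\beta_{i,3}>0$, $1-\beta_{i,4}>0$, and $(\lambda_i+\hat\lambda)\mu_{i2}\geq\hat\lambda\mu_{i1}^2$ (Jensen) — so $\Phi$ has a fixed point $(a_1^*,\dots,a_n^*)\in[0,1]^n$. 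Together with the previous paragraph this proves that a robust equilibrium exists when $n\neq\sum_k\theta_k$.

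Finally, suppose $n=\sum_k\theta_k$ and, for contradiction, that a robust equilibrium exists. The same reduction gives $n\ell_i=\theta_iL+nA_i$, whose sum now reads $0=n\sum_kA_k(t)$ for a.e.\ $t$, hence (by continuity of $t\mapsto\sum_kA_k(t)$) $\sum_kA_k\equiv0$ on $[0,T]$. Evaluating at $t=T$, where $g_k(T)=1$ and $h_k(T)=0$, gives $\sum_kA_k(T)=m\sum_k\frac{1}{\delta_k(1-\beta_{k,1})}$, which is nonzero because $m\neq0$ and each $\delta_k(1-\beta_{k,1})=\delta_k+\psi_{k,1}>0$ — a contradiction. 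Hence no robust equilibrium exists. I expect this last step to be the only genuinely delicate point: one must be sure that the degeneracy of the $\pi$-system cannot be repaired elsewhere, which is why it is essential that (i) the best response is \emph{uniquely} pinned down by \eqref{equ:pia1}, so the $\pi$-equations cannot be relaxed, and (ii) the market price of risk is non-degenerate ($m\neq0$), used at the terminal time $t=T$ where the volatility-hedging terms $h_k$ vanish. Everything else is routine linear algebra and substitution, which is presumably why the paper omits it.
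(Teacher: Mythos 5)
Your proposal is correct, and its backbone --- characterizing the equilibrium as a fixed point of the best response \eqref{equ:pia1}, reducing the investment block to the linear system $n\ell_i=\theta_i L+nA_i$ via the substitution $\pi_i^*=\ell_i\frac{Z(t)}{aZ(t)+b}$, and inverting $nI-\theta\mathbf{1}^{\top}$ when $n\neq\sum_k\theta_k$ --- is exactly the computation the paper performs at the end of Appendix \ref{proof-verification} (the paper itself declares the derivation ``simple'' and omits the details from the main text). Two points where you diverge are worth recording. First, for the reinsurance block you invoke Brouwer on $[0,1]^n$, whereas the paper (Remark \ref{Randomness}) first solves the untruncated linear system explicitly --- using the bound $R_i(t)>(1-\theta_i/n)\hat\lambda\mu_{i1}^2(1-\beta_{i,3})$ to show the relevant matrix is invertible with a unique positive solution --- and then runs a monotone nonincreasing iteration through the truncation $\wedge 1$; the paper's route is constructive and doubles as the numerical scheme, while yours is shorter but yields existence only. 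Second, you actually prove the non-existence claim for $n=\sum_k\theta_k$ by evaluating $\sum_kA_k$ at $t=T$, where $h_k(T)=0$ and $g_k(T)=1$; the paper asserts this without argument, and your observation that the conclusion genuinely requires $m\neq0$ (otherwise $h_k\equiv0$, $A_k\equiv0$, and the degenerate system admits a one-parameter family of solutions $\ell_i=\theta_i c$) is a real, if implicit, hypothesis of the model that the paper never states. Your reliance on uniqueness of the maximizer of the Hamiltonian for the ``only if'' direction of the fixed-point characterization is justified by the strict concavity/convexity established in Lemma \ref{Quad} together with Theorem \ref{Verification}, so I see no gap.
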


The robust investment strategy \eqref{n-opt} can be represented as
\begin{equation}\label{dn-opt}
	\pi^*_i(t)=\frac{e^{-r(T-t)}{Z(t)}}{a{Z(t)}+b}
	\left[\frac{m}{\delta_i(1-\beta_{i,1})}+\frac{\nu\rho h_i(t)}{\delta_i}
	+\frac{\theta_i}{n-\sum_{k=1}^n\theta_k}\sum_{k=1}^n\frac{m}{\delta_{k}(1-\beta_{k,1})}+\frac{\theta_i}{n-\sum_{k=1}^n\theta_k}\sum_{k=1}^n\frac{\nu\rho h_k(t)}{\delta_{k}}\right].	
\end{equation}

In \eqref{dn-opt}, the robust investment strategy manifests as a composite of four distinct components. The first component represents a myopic demand characterized by constant volatility. The second component is introduced to hedge against volatility risk. Both these components are optimal for the robust problem in the absence of any relative performance concerns and solely rely on individual risk aversion and ambiguity aversion coefficients. The third and fourth terms correspond to the myopic and hedging demands induced by competition, respectively.
 In the Black-Scholes financial market, \cite{lacker2019mean} and \cite{MFE2020} identify the first and third components. Here, each insurer not only hedges against exogenous changes in volatility but also responds to the endogenous hedging activities of their peers. Consequently, competition introduces a new hedging term in the robust equilibrium strategy. As shown in \eqref{dn-opt}, the third and fourth components exhibit non-linear dependence on the competition, risk aversion, and ambiguity aversion coefficients of all insurers. Naturally, these components vanish in the absence of competition. We see from \eqref{dn-opt} that when $\rho=0$, equity risk and volatility risk are independent and the hedging demands (second and fourth components) become zero. 

The robust reinsurance proportions are implicitly determined by \eqref{a_star}. As revealed in \eqref{a_star}, the robust equilibrium reinsurance proportions consist of two parts: one part disregarding competition and another part responsive to the activities of other insurers. Importantly, both these parts are influenced by the risk aversion and ambiguity aversion coefficients.

\begin{remark}\label{Randomness}
	We also see that the randomness of $ \pi^*_{i}(t) $ only depends on $ \frac{{Z(t)}}{a{Z(t)}+b} $,  the randomness of $ {\varphi^*_{i}(t)} $ and $ {\chi^*_{i}(t)} $ depends on $ Z(t) $ and $ {\varphi^*_{i}(t)} $  is  proportional to $ \sqrt{Z(t)} $.
  $ a_i^*(t) $ and $ \phi_i^*(t),\vartheta_{i}^*(t) $ are deterministic.	However, $ a_i^*(t) $ is implicit and $ \phi_i^*(t)$ and $\vartheta_{i}^*(t) $ rely on $ a_i^*(t) $,  we can get the approximate  numerical solution by the following procedure.
	
First solve \eqref{a_i}:
	\begin{equation}\label{a_i}
		\check{a}_i(t)=\frac{Q_i(t)}{R_i(t)}\frac{1}{n}\sum_{k\neq i}\mu_{k 1}\check{a}_k(t)+\frac{P_i(t)}{R_i(t)},\quad \forall1\leqslant i\leqslant n, t\in[0,T].
	\end{equation}
	\eqref{a_i} is equivalent to 
	\begin{equation}\label{a_i_2}
		\mu_{i 1}	\check{a}_i(t)=\frac{n\mu_{i 1}{Q_i(t)}}{{\mu_{i1}Q_i(t)}+nR_i(t)}\frac{1}{n}\sum_{k=1}^n\mu_{k 1}\check{a}_k(t)+\frac{n\mu_{i 1}{P_i(t)}}{{\mu_{i1}Q_i(t)}+nR_i(t)},\quad \forall1\leqslant i\leqslant n, t\in[0,T].
	\end{equation}
{As $ \beta_{i,3}<0, \beta_{i,4}<0, \mu_{i2}\geqslant\mu_{i1}^2$, we see 
\begin{equation*}
	R_i(t)>(1-\frac{\theta_i}{n})\hat{\lambda}\mu_{i1}^2(1-\beta_{i3}),
\end{equation*}
which indicates that \begin{equation*}
n-\sum_{k=1}^n	\frac{n\mu_{k 1}{Q_k(t)}}{{\mu_{k1}Q_k(t)}+nR_k(t)}>n-\sum_{k=1}^n\frac{n}{1+n-\theta_k}\geqslant n-\sum_{k=1}^n1=0.
\end{equation*}
}

	Then \eqref{a_i_2} admits a unique positive solution,  given by
\begin{equation*}
	\check{a}_i(t)=\frac{n{Q_i(t)}}{{\mu_{i1}Q_i(t)}+nR_i(t)}\frac{\sum_{k=1}^n\frac{n\mu_{k 1}{P_k(t)}}{{\mu_{k1}Q_k(t)}+nR_k(t)}}{n-\sum_{k=1}^n\frac{n\mu_{k 1}{Q_k(t)}}{{\mu_{k1}Q_k(t)}+nR_k(t)}}+\frac{n{P_i(t)}}{{\mu_{i1}Q_i(t)}+nR_i(t)}, \forall1\leqslant i\leqslant n, t\in[0,T].
\end{equation*}
	$ \forall1\leqslant i\leqslant n, t\in[0,T] $, denote $$ a_i^{(0)}(t)=\check{a}_i(t),\quad a_i^{(l+1)}(t)=\left(\frac{Q_i(t)}{R_i(t)}\frac{1}{n}\sum_{k\neq i}\mu_{k 1}a_k^{(l)}(t)+\frac{P_i(t)}{R_i(t)}\right)\wedge1,\quad\forall l\geqslant1,$$
	then we see the sequence $ \{a_i^{(l)}(t)\}_{l=1}^{\infty} $ is nonegative and monotone nonincreasing, so the sequence converges based on  monotone convergence theorem. Let $\displaystyle\mathring{a}_i(t)=\lim_{l\to\infty}a_i^{(l)}(t)  $, then we see 
	\begin{equation*}
		\mathring{a}_i(t)=\left(\frac{Q_i(t)}{R_i(t)}\frac{1}{n}\sum_{k\neq i}\mu_{k 1}\mathring{a}_k(t)+\frac{P_i(t)}{R_i(t)}\right)\wedge1,\quad \forall1\leqslant i\leqslant n, t\in[0,T] 
	\end{equation*}
	thus $a_i^*(t)= \mathring{a}_i(t), \forall1\leqslant i\leqslant n, t\in[0,T] $.
	  
	  In practice, when there exists $ j $ such that $ a_i^{(j+1)}(t) = a_i^{(j)}(t)$ or $ |a_i^{(j+1)}(t) -a_i^{(j)}(t)| $ is sufficiently small $\forall 1\leqslant i\leqslant n $, then we can use $ a_i^{(j+1)}(t) $ as a   numerical approximation for $ a^*_i(t) $.
\end{remark}

{\begin{remark}
		Denote $ \bar{\mu}_1=\frac{1}{n}\sum_{k=1}^n\mu_{k 1} $, if $ \frac{n-1}{n}\mu_{i2}\geqslant\bar{\mu}_1\mu_{ i 1} $ and $ \beta_{i 3}\geqslant \beta_{i 4}$ for $ i=1,2,\cdots, n $, then \begin{equation*}
			\frac{Q_i(t)}{R_i(t)}\frac{1}{n}\sum_{k\neq i}\mu_{k 1}{a}^*_k(t)+\frac{P_i(t)}{R_i(t)}\leqslant\frac{Q_i(t)}{R_i(t)}\frac{1}{n}\sum_{k\neq i}\mu_{k 1}+\frac{P_i(t)}{R_i(t)}<\frac{Q_i(t)\bar{\mu}_1+P_i(t)}{R_i(t)}\leqslant 1
		\end{equation*}which means that \eqref{equ:pia1} becomes \begin{equation}\label{equ:pia22}
		\left\{\begin{aligned}
			&{\pi}^\circ_i(t)=\frac{\theta_i}{n-\theta_i}\sum_{k\neq i}\pi^*_k(t)+\frac{n}{n-\theta_i}(\frac{m}{1-\beta_{i,1}}+\nu\rho h_i(t))\frac{\sqrt{Z(t)}}{ \delta_{i}g_i(t)\Sigma(t)},\\
			&a_i^\circ(t)=\frac{Q_i(t)}{R_i(t)}\frac{1}{n}\sum_{k\neq i}\mu_{k 1}a_k^*(t)+\frac{P_i(t)}{R_i(t)},
		\end{aligned}\right.
	\end{equation}and  \eqref{a_star} will be given by \begin{equation}\label{a_star2}
	a_i^*(t)=\frac{Q_i(t)}{R_i(t)}\frac{1}{n}\sum_{k\neq i}\mu_{k 1}a_k^*(t)+\frac{P_i(t)}{R_i(t)}, \forall1\leqslant i\leqslant n, t\in[0,T],
\end{equation}then we can get the explicit form of $ {a}^*_i $ given by: \begin{equation*}
{a}^*_i(t)=\frac{n{Q_i(t)}}{{\mu_{i1}Q_i(t)}+nR_i(t)}\frac{\sum_{k=1}^n\frac{n\mu_{k 1}{P_k(t)}}{{\mu_{k1}Q_k(t)}+nR_k(t)}}{n-\sum_{k=1}^n\frac{n\mu_{k 1}{Q_k(t)}}{{\mu_{k1}Q_k(t)}+nR_k(t)}}+\frac{n{P_i(t)}}{{\mu_{i1}Q_i(t)}+nR_i(t)}, \forall1\leqslant i\leqslant n, t\in[0,T].
\end{equation*}
\end{remark}}

\section{\bf  Robust mean-field game}
In this section, we study the limit of the robust $ n $-insurer  game as $ n\to\infty $. In the case of robust $ n $-insurer  game, we can use a type vector $ U=(x^0,\lambda,\mu_{1},\mu_{ 2},\eta,\theta,\delta,\psi_{1},\psi_{2},\psi_{3},\psi_{4}) $ from value space $ \mathbb{U} $ to identify the preference or information of the insurer, the type vectors of all these insurers also induce an empirical measure, which is the probability measure on
$\mathbb{U}$
given by
$$
\mathcal{D}_n(A)=\frac{1}{n} \sum_{i=1}^n 1_A\left(u_i\right), \text { for Borel sets } A \subset\mathbb{U}.
$$
Assume  that $ \mathcal{D}_n $ has a weak limit $ \mathcal{D} $, in the sense that $ \int_{\mathbb{U}} f\rd \mathcal{D}_n\to\int_{\mathbb{U}} f\rd \mathcal{D}$ for every bounded continuous function $ f $. As we have obtained the robust equilibrium of the robust $ n $-insurer game, we now search for the robust mean-field equilibrium given that the type distribution is $ \mathcal{D} $. 

In the light of \eqref{n-opt} and \eqref{a_star}, we should expect that
\begin{equation}\label{m-pi}
	\lim_{n\to\infty}\pi^*_i(t)=\left[\theta_i\frac{M_1}{1-\bar{\theta}}+\frac{\frac{m}{1-\beta_{i,1}}+\nu\rho h_i(t)}{\delta_{i}g_i(t)}\right]\frac{{Z(t)}}{a{Z(t)}+b},
\end{equation}
where \begin{equation*}
M_1=\lim_{n\to\infty}\frac{1}{n}\sum_{k=1}^n\frac{\frac{m}{1-\beta_{k,1}}+\nu\rho h_k(t)}{\delta_{k}g_k(t)}=\mathbb{E}[(\frac{m}{1-\beta_{1}}+\nu\rho h_{U}(t))\frac{1}{\delta }]\frac{1}{e^{r(T-t)}},~~\bar{\theta}=\lim_{n\to\infty}\frac{1}{n}\sum_{k=1}^n\theta_k=\mathbb{E}[\theta],
\end{equation*}
and \begin{equation}\label{m-a}
	\lim_{n\to\infty}a_i^*(t)=\left(\frac{Q_i(t)}{R_i(t)}M_2+\frac{P_i(t)}{R_i(t)}\right)\wedge1,
\end{equation}
where \begin{equation*}
M_2=\lim_{n\to\infty}\frac{1}{n}\sum_{k=1}^n\mu_{k 1}a_k^*(t)
\end{equation*}
is solved by \begin{equation*}
M_2=	\lim_{n\to\infty}\frac{1}{n}\sum_{i=1}^n\mu_{i 1}a_i^*(t)=	\lim_{n\to\infty}\frac{1}{n}\sum_{i=1}^n\mu_{i 1}\left(\frac{Q_i(t)}{R_i(t)}M_2+\frac{P_i(t)}{R_i(t)}\right)\wedge1,
\end{equation*}
\eqref{m-pi} and \eqref{m-a} represent the robust equilibrium reinsurance and investment strategies with infinite insurers in form. Next, we formulate the mean-field game and verify that \eqref{m-pi} and \eqref{m-a} solves the mean-field game.

We assume that the distribution $ \mathcal{D} $ is a discrete  distribution and $\left(\Omega, \mathcal{F}, \mathbb{P}\right)$ supports  a random vector $ U=(x^0,\lambda,\mu_{1},\mu_{ 2},\eta,\theta,\delta,\psi_{1},\psi_{2},\psi_{3},\psi_{4}):\Omega\to\mathbb{U} $ with distribution $ \mathcal{D} $ and all the involved Brownian motions below. { Assume that $U$ is independent of all the Brownian motions below}. $ \hat{\mathbb{F}}=\{\hat{\mathcal{F}}_t,t\in[0,T]\} $ is the smallest filtration satisfying the usual assumptions for which $U$ is $ \hat{\mathcal{F}}_0 $-measurable and all the involved Brownian motions below are adapted. Let $ \mathbb{F}^{\tilde{W},{W},{B}}=\{\mathcal{F}_t^{\tilde{W},{W},{B}},t\in[0,T]\} $ denote the natural filtration generated by the Brownian motions $ \tilde{W},{W}$ and ${B} $.

 Under the measure of  the insurer with type $ u_0=(x_0^0,\lambda_0,\mu_{ 01},\mu_{ 02},\eta_0,\theta_0,\delta_0,\psi_{01},\psi_{02},\psi_{03},\psi_{04}) $,
the surplus process of a representative insurer with  type $ U\sim\mathcal{D} $ involves as:
\begin{equation*}
	\begin{aligned}
		\mathrm{d} X_{U}(t)
		=&\left[r X_{U}(t)+\eta\left(\lambda+\hat{\lambda}\right) \mu_{ 1}-\hat{\eta}(1-a_{U}(t))^2\left(\lambda+\hat{\lambda}\right) \mu_{ 2}+\pi_{U}(t)m({a}{Z(t)}+{{b}})\right] \mathrm{d} t\\&+a_{U}(t)\left(\sqrt{\hat{\lambda}}\mu_{1}(\rd  \tilde{W}^{\mathbb{Q}^{u_0}}(t)+\phi_{u_0}(t)\rd t)+\sqrt{(\hat{\lambda}+\lambda)\mu_{ 2}-\hat{\lambda}\mu_{ 1}^2}(\rd \hat{W}_U^{\mathbb{Q}^{u_0}}(t)+\vartheta_{u_0}^U(t)\rd t)\right)\\&+\pi_{U}(t){\Sigma}(t)(\rd {W}^{\mathbb{Q}^{u_0}}(t)+\varphi_{u_0}(t)\rd t).
	\end{aligned}
\end{equation*}
\begin{equation*}
	\rd Z(t)={\kappa}(\bar{Z}-Z(t))\rd t+{\nu}\sqrt{Z(t)}\left[{\rho}(\rd {W}^{\mathbb{Q}^{u_0}}(t)+\varphi_{u_0}(t)\rd t)+\sqrt{1-{\rho}^2}(\rd{B}^{\mathbb{Q}^{u_0}}(t)+\chi_{u_0}(t)\rd t)\right]
\end{equation*}
Similar to the robust $ n $-insurer game, we first define the admissible set of the strategies and probability measure transformation functions. Some notations are similar to the last section. Define
\begin{equation}
	\begin{aligned}
		\mathscr{U}=&\Big\{(\pi_{U},a_{U}):U\sim\mathcal{D}, (\pi_{U},a_{U}) \text{ is progressively measurable w.r.t. }\hat{\mathbb{F}}\text{ and }0\leqslant a_{U}(t)\leqslant1,\\
		&\exists \mathcal{C}_{U}\in\mathbb{R}_+, \sup_{U}\mathcal{C}_{U}<\infty, \text{ such that } \pi_{U}(t)=\ell_{U}(t)\frac{Z(t)}{aZ(t)+b},|\ell_{U}(t)|\leqslant \mathcal{C}_{U}, \forall t\in[0,T]\Big\},\\
		\mathscr{U}_{u_0}=&\Big\{(\pi_{u_0},a_{u_0}):(\pi_{u_0},a_{u_0}) \text{ is progressively measurable w.r.t. }{\mathbb{F}}\text{ and }0\leqslant a_{u_0}(t)\leqslant1,\\
		&\exists \mathcal{C}_{u_0}\in\mathbb{R}_+, \text{ such that } \pi_{u_0}(t)=\ell_{u_0}(t)\frac{Z(t)}{aZ(t)+b},|\ell_{u_0}(t)|\leqslant \mathcal{C}_{u_0}, \forall t\in[0,T]\Big\},\\
		\mathscr{U}_{-u_0}=&\Big\{\{(\pi_{u},a_{u})_{{u}\in\mathbb{U},u\neq u_0}\}:(\pi_{u},a_{u}) \text{ is progressively measurable w.r.t. }{\mathbb{F}}\text{ and }0\leqslant a_{u}(t)\leqslant1, \forall {u}\in\mathbb{U},u\neq u_0,\\
		&\exists \{\mathcal{C}_{u}\}_{{u}\in\mathbb{U},u\neq {u_0}}\subseteq\mathbb{R}_+, \sup_{{u}\in\mathbb{U},u\neq {u_0}}\mathcal{C}_{u}<\infty, \text{ such that } \pi_{u}(t)=\ell_{u}(t)\frac{Z(t)}{aZ(t)+b},|\ell_{u}(t)|\leqslant \mathcal{C}_{u}, \forall t\in[0,T]\Big\},\\
	\mathscr{A}=&\Big\{\left(\varphi, \chi,\phi,\vartheta\right):\left(\varphi, \chi,\phi,\vartheta\right) \text{ is progressively measurable w.r.t. }\mathbb{F}\text{ and }\left(\varphi(t), \chi(t)\right)=\left(\hslash(t),\hbar(t) \right)\sqrt{Z(t)},\\&\quad
		|\hslash(t)|, |\hbar(t)|\leqslant \mathcal{C},\sup|\phi|<\infty,\sup|\vartheta|<\infty,~ \forall t\in[0,T]\Big\}.\\
	\end{aligned}
\end{equation} 

{ 
In this paper, we study the mean-field game by taking the limit as \(n \to \infty\). The solutions to the \(n\)-insurer game and the mean-field game are shown to be consistent. In social optimal control problems, the optimization is typically high-dimensional, and finding the exact socially optimal solution is challenging. To address this, \cite{huang2012social} and \cite{wang2020social} solve an auxiliary problem and construct a mean-field approximation. The optimization rule in the mean-field game differs from that in the social player game and depends on the specific optimization goal. The results from the constructed mean-field game are shown to be asymptotically optimal for the social control problem, as studied in \cite{huang2012social} and \cite{wang2020social}.
}

The insurers {search for} the robust mean-field equilibrium strategies within the admissible set. We present definition of the robust mean-field equilibrium in the mean-field game as follows.
\begin{definition}\label{def:mfe}
We call $ (\pi_{U}^*,a_{U}^* ), U\sim \mathcal{D}$ a \textit{\textbf{robust mean-field equilibrium}} if $(\pi_{U}^*,a_{U}^*)\in\mathscr{U}$, and  \begin{equation}\label{rmf-s}
(\pi_{u_0}^*,a_{u_0}^* )=\arg	\sup _{(\pi_{u_0},a_{u_0} )\in\mathscr{U}_{u_0}} \inf _{\left(\varphi_{u_0}, \chi_{u_0},\phi_{u_0},\vartheta_{u_0}\right)\in\mathscr{A}} J_{u_0}\left((\pi_{u_0},a_{u_0}),\left(\varphi_{u_0}, \chi_{u_0},\phi_{u_0},\vartheta_{u_0}\right),\bar{X}_{u_0}^*(T),0,Y_{u_0}(0),z^0\right), \forall  u_0\in\mathbb{U} .
\end{equation}
where \begin{equation*}
	\begin{aligned}
		&J_{u_0}\left((\pi_{u_0},a_{u_0}),\left(\varphi_{u_0}, \chi_{u_0},\phi_{u_0},\vartheta_{u_0}\right),\bar{X}_{u_0}^*(T),s,y,z\right)\\
		=&\mathbb{E}^{\mathbb{Q}^{u_0}}_{s,y,z}\left[ U_f\left(X_{u_0}(T) {-\theta_{0}}\bar{X}_{u_0}^*(T) ; \delta_{0}\right)+\int_{s}^{T}  \frac{\varphi_{u_0}^{2}(t)}{2 \Psi_1^{u_0}} +\frac{\chi_{u_0}^2(t)}{2 \Psi^{u_0}_2}+ \frac{\phi_{u_0}^{2}(t)}{2 \Psi_3^{u_0}} +\frac{\mathbb{E}^{\mathbb{Q}^{u_0}}\left[(\vartheta_{u_0}^U(t))^2\Big|\mathcal{F}_T^{\tilde{W},{W},{B}}\right]}{2 \Psi^{u_0}_4} \mathrm{d} t \right],
	\end{aligned}
\end{equation*}
\begin{equation}\label{mX_u0}
	\bar{X}_{u_0}^*(T)=\mathbb{E}^{\mathbb{Q}^{u_0}}\left[X^*_{U}(T)\Big|\mathcal{F}_T^{\tilde{W},{W},{B}}\right],
\end{equation}\begin{equation}\label{xi0xi*}
\begin{aligned}
	\mathrm{d} X^*_{U}(t)
	=&\left[r X^*_{U}(t)+\eta\left(\lambda+\hat{\lambda}\right) \mu_{ 1}-\hat{\eta}(1-a^*_{U}(t))^2\left(\lambda+\hat{\lambda}\right) \mu_{ 2}+\pi^*_{U}(t)m({a}{Z(t)}+{{b}})\right] \mathrm{d} t\\&+a_{U}^*(t)\left(\sqrt{\hat{\lambda}}\mu_{1}(\rd  \tilde{W}^{\mathbb{Q}^{u_0}}(t)+\phi_{u_0}(t)\rd t)+\sqrt{(\hat{\lambda}+\lambda)\mu_{ 2}-\hat{\lambda}\mu_{ 1}^2}(\rd \hat{W}_U^{\mathbb{Q}^{u_0}}(t)+\vartheta_{u_0}^U(t)\rd t)\right)\\
	&+\pi_{U}^*(t){\Sigma}(t)(\rd {W}^{\mathbb{Q}^{u_0}}(t)+\varphi_{u_0}(t)\rd t).
\end{aligned}
\end{equation}
$ \Psi_1^{u_0} $, $ \Psi_2^{u_0} $, $ \Psi_3^{u_0} $ and $ \Psi_4^{u_0} $ should be determined later to get closed-form solutions. 
\end{definition}

Definition \ref{def:mfe} indicates that the robust mean-field equilibrium is formulated by finding a fixed point. For an insurer with type vector $ u_0 $, with a given mean of the insurers' wealth $ \bar{X}_{u_0}^*(T) $, she/he first searches for the robust optimal reinsurance-investment strategy  $(a^*_{u_0},\pi_{u_0}^*)$ under the worst-case scenario. Every insurer in the system behaves like this. For insurer with random type vector  $ U \sim\mathcal{D}$, under the robust optimal reinsurance-investment strategy $(a^*_{U},\pi_{U}^*)$ and the probability measure  given in type vector $ u_0 $, the  surplus process $ X^*_{U}(t) $ involves as \eqref{xi0xi*}.   If we have a fixed point in the sense that the consistency condition $ \bar{X}_{u_0}^*(T)=\mathbb{E}^{\mathbb{Q}^{u_0}}\left[X^*_{U}(T)\Big|\mathcal{F}_T^{\tilde{W},{W},{B}}\right] $ and $ U \sim \mathcal{D} $ holds, then  $ (\pi_{U}^*,a_{U}^* ), U \sim \mathcal{D}$ is called robust mean-field equilibrium.

%

Based on \eqref{m-pi} and \eqref{m-a}, we should expect that
\begin{equation}\label{m-opt}
	\left\{\begin{aligned}
		\pi_{U}^*(t)&=\theta\frac{\sqrt{Z(t)}}{\sigma e^{r(T-t)}}\frac{\mathbb{E}[(\frac{m}{1-\beta_{1}}+\nu\rho h_{U}(t))\frac{1}{\delta }]}{1-\mathbb{E}[\theta]}+(\frac{m}{1-\beta_{1}}+\nu\rho h_{U}(t))\frac{\sqrt{Z(t)}}{\delta\sigma e^{r(T-t)}},\\
		a_U^*(t)&=\left(\frac{Q_U(t)}{R_U(t)}\bar{\Omega}(t)+\frac{P_U(t)}{R_U(t)}\right)\wedge1,
	\end{aligned}\right.
\end{equation}
where $\beta_1=\frac{\psi_{1}}{-\delta}  $, $\beta_2=\frac{\psi_{2}}{-\delta}  $, $\beta_3=\frac{\psi_{3}}{-\delta}  $, $\beta_4=\frac{\psi_{4}}{-\delta}  $ and   $\overline{\Omega}(t)=\mathbb{E}[\mu_{ 1}a_{U}^*(t)] $	is obtained by $\overline{\Omega}(t)=\mathbb{E}[\mu_{ 1}\left(\frac{Q_U(t)}{R_U(t)}\bar{\Omega}(t)+\frac{P_U(t)}{R_U(t)}\right)\wedge1] $ and \begin{equation*}\begin{aligned}
		&	P_U(t)=\frac{2\hat{\eta}}{ \delta e^{r(T-t)}}(\lambda +\hat{\lambda})\mu_{  2},\quad Q_U(t)=\hat{\lambda}\theta \mu_{  1}(1-\beta_{ 3}),\\& R_U(t)=\frac{2}{\delta e^{r(T-t)}}\hat{\eta}(\lambda +\hat{\lambda})\mu_{ 2}+ (1-\frac{\theta }{n})\left[(\lambda +\hat{\lambda})\mu_{ 2}(1-{\beta_{ 4}})+\hat{\lambda}\mu_{  1}^2({\beta_{ 4}}-{\beta_{ 3}})\right].
\end{aligned}	\end{equation*}
\eqref{m-opt} is the robust mean-field equilibrium obtained by letting $n\rightarrow\infty$ in the competition system.

To derive the robust mean-field equilibrium $ (\pi_{U}^*,a_{U}^* ), U\sim \mathcal{D}$, we first define $ Y_{u_0}(t)=X_{u_0}(t) {-\theta_{0}}\bar{X}^*_{u_0}(t),$  $ \forall  u_0\in\mathbb{U} $, where $ \bar{X}^*_{u_0}(t) $ is defined as in \eqref{mX_u0}. Then we need to solve   Problem \eqref{Yxi} for all $u_0\in\mathbb{U}  $ simultaneously.
\begin{equation}\label{Yxi}
	\sup _{(\pi_{u_0},a_{u_0} )\in\mathscr{U}_{u_0}} \inf _{\left(\varphi_{u_0}, \chi_{u_0},\phi_{u_0},\vartheta_{u_0}\right)\in\mathscr{A}}\!\mathbb{E}^{\mathbb{Q}^{u_0}}\!\left[\! U_f\left(Y_{u_0}(T)  ; \delta_{0}\right)\!+\!\int_{0}^{T} \! \frac{\varphi_{u_0}^{2}(t)}{2 \Psi_1^{u_0}} \!+\!\frac{\chi_{u_0}^2(t)}{2 \Psi^{u_0}_2} \!+\! \frac{\phi_{u_0}^{2}(t)}{2 \Psi_3^{u_0}} \!+\!\frac{\mathbb{E}\left[(\vartheta_{u_0}^U(t))^2\Big|\mathcal{F}_T^{\tilde{W},{W},{B}}\right]}{2 \Psi^{u_0}_4}\mathrm{d} t \right]
\end{equation}

Based on the dynamic of $ X_{U}(t) $, we have
\begin{equation}
	\begin{aligned}
			&\mathrm{d} Y_{u_0}(t)\\
		=&r Y_{u_0}(t)\mathrm{d} t+\left(\eta_0\left(\lambda_0+\hat{\lambda}\right) \mu_{0 1}-\theta_0\mathbb{E}\left[\eta\left(\lambda+\hat{\lambda}\right) \mu_{ 1}\right]\right)\rd t\\
		&-\hat{\eta}(1-a_{u_0}(t))^2\left(\lambda_0+\hat{\lambda}\right) \mu_{ 02}\rd t+\theta_0\hat{\eta}\mathbb{E}\left[(1-a^*_{U}(t))^2\left(\lambda+\hat{\lambda}\right) \mu_{ 2}\Big|\mathcal{F}_T^{\tilde{W},{W},{B}}\right]\rd t\\
		&+\left(\pi_{u_0}(t)-\theta_0\mathbb{E}\left[\pi^*_{U}(t)\Big|\mathcal{F}_T^{\tilde{W},{W},{B}}\right]\right)\left({\Sigma}(t)\rd {W}^{\mathbb{Q}^{u_0}}(t)+\left(m\sqrt{Z(t)}+{\varphi_{u_0}(t)}\right)\Sigma(t) \mathrm{d} t\right)\\
		&+\sqrt{\hat{\lambda}}\left(a_{u_0}(t)\mu_{01}-\theta_0\mathbb{E}\left[a^*_{U}(t)\mu_{1}\Big|\mathcal{F}_T^{\tilde{W},{W},{B}}\right]\right)(\rd  \tilde{W}^{\mathbb{Q}^{u_0}}(t)+\phi_{u_0}(t)\rd t)\\
		&+a_{u_0}(t)\sqrt{(\hat{\lambda}+\lambda_0)\mu_{0 2}-\hat{\lambda}\mu_{0 1}^2}(\rd \hat{W}_U^{\mathbb{Q}^{u_0}}(t)+\vartheta_{u_0}^U(t)\rd t)
		\\&-\theta_0\mathbb{E}\left[a^*_{U}(t)\sqrt{(\hat{\lambda}+\lambda)\mu_{ 2}-\hat{\lambda}\mu_{ 1}^2}\vartheta_{u_0}^U(t)\Big|\mathcal{F}_T^{\tilde{W},{W},{B}}\right]\rd t.
	\end{aligned}
\end{equation}
To solve Problem \eqref{Yxi}, we first define the  infinitesimal generator based on the dynamic of $ Y_{u_0}(t) $
\begin{equation*}
	\begin{aligned}
&\mathcal{A}^{u_0}f(t,y,z)\\
=&f_t+\left[{\kappa}(\bar{Z}-z)+{\nu}\sqrt{z}({\rho}\varphi_{u_0}+\sqrt{1-{\rho}^2}\chi_{u_0})\right]f_{z}+\frac{1}{2}\nu^2zf_{zz}\\
&+\rho\nu (az+b)\left(\pi_{u_0}(t)-\theta_0\mathbb{E}\left[\pi^*_{u}(t)\Big|\mathcal{F}_T^{\tilde{W},{W},{B}}\right]\right)f_{yz}\\
&+r yf_y+\left(\eta_0\left(\lambda_0+\hat{\lambda}\right) \mu_{0 1}-\theta_0\mathbb{E}\left[\eta\left(\lambda+\hat{\lambda}\right) \mu_{ 1}\right]\right)f_y\\
&-\hat{\eta}(1-a_{u_0}(t))^2\left(\lambda_0+\hat{\lambda}\right) \mu_{0 2}f_y+\theta_0\mathbb{E}\left[\hat{\eta}(1-a^*_{u}(t))^2\left(\lambda+\hat{\lambda}\right) \mu_{ 2}\Big|\mathcal{F}_T^{\tilde{W},{W},{B}}\right]f_y\\
&+\left(\pi_{u_0}(t)-\theta_0\mathbb{E}\left[\pi^*_{u}(t)\Big|\mathcal{F}_T^{\tilde{W},{W},{B}}\right]\right)\left(m\sqrt{z}+{\varphi_{u_0}(t)}\right)\sigma f_y\\
&+\sqrt{\hat{\lambda}}\left(a_{u_0}(t)\mu_{01}-\theta_0\mathbb{E}\left[a^*_{U}(t)\mu_{1}\Big|\mathcal{F}_T^{\tilde{W},{W},{B}}\right]\right)\phi_{u_0}(t)f_y\\
&+a_{u_0}(t)\sqrt{(\hat{\lambda}+\lambda_0)\mu_{0 2}-\hat{\lambda}\mu_{0 1}^2}\vartheta_{u_0}^U(t)f_y
-\theta_0\mathbb{E}\left[a^*_{U}(t)\sqrt{(\hat{\lambda}+\lambda)\mu_{ 2}-\hat{\lambda}\mu_{ 1}^2}\vartheta_{u_0}^U(t)\Big|\mathcal{F}_T^{\tilde{W},{W},{B}}\right]f_y\\
&+\frac{1}{2}{\hat{\lambda}}\left(a_{u_0}(t)\mu_{01}-\theta_0\mathbb{E}\left[a^*_{u}(t)\mu_{1}\Big|\mathcal{F}_T^{\tilde{W},{W},{B}}\right]\right)^2f_{yy}+\frac{1}{2}\left(\pi_{u_0}(t)-\theta_0\mathbb{E}\left[\pi^*_{u}(t)\Big|\mathcal{F}_T^{\tilde{W},{W},{B}}\right]\right)^2{\sigma}^2f_{yy}\\
&+\frac{1}{2}a_{u_0}^2(t){\left((\hat{\lambda}+\lambda_0)\mu_{0 2}-\hat{\lambda}\mu_{ 01}^2\right)}f_{yy},
	\end{aligned}
\end{equation*}
where $ \sigma=a\sqrt{z}+\frac{b}{\sqrt{z}} $.

Similarly, let $ (y,z) $ be  the value of process $ (Y_{u_0}(s), Z(s)) $ at time $ t $, denote the value function of insurer with type $ u_0$ at time $t$ by
$$
\begin{aligned}
	V^{u_0}(t,  {y},z)=\sup _{(\pi_{u_0},a_{u_0} )\in\mathscr{U}_{u_0}} \inf _{\left(\varphi_{u_0}, \chi_{u_0},\phi_{u_0},\vartheta_{u_0}\right)\in\mathscr{A}}&\mathbb{E}_{t,y,z}^{\mathbb{Q}^{u_0}}\!\left[\!\int_{t}^{T}  \!\frac{\varphi_{u_0}^{2}(s)}{2 \Psi_1^{u_0}} \!+\!\frac{\chi_{u_0}^2(s)}{2 \Psi^{u_0}_2}  \!+\! \frac{\phi_{u_0}^{2}(s)}{2 \Psi_3^{u_0}} \!+\!\frac{\mathbb{E}\left[(\vartheta_{u_0}^U(s))^2\Big|\mathcal{F}_T^{\tilde{W},{W},{B}}\right]}{2 \Psi^{u_0}_4}\mathrm{d} s  \right].
\\&+\mathbb{E}_{t,y,z}^{\mathbb{Q}^{u_0}}U_f\left(Y_{u_0}(T) ; \delta_{0}\right)
\end{aligned}$$
\begin{definition}[HJBI equation]\label{def:hjbi2}
The HJBI equation of insurer with type $ u_0=(x^0_0,\lambda_0,\mu_{01},\mu_{0 2},\eta_0,\theta_0,\delta_0,\psi_{01},\psi_{02},\psi_{03},\psi_{04})$ in  Problem \eqref{Yxi} is
\begin{equation}\label{MHJBI}
\sup _{(\pi_{u_0},a_{u_0} )\in\mathscr{U}_{u_0}} \inf _{\left(\varphi_{u_0}, \chi_{u_0},\phi_{u_0},\vartheta_{u_0}\right)\in\mathscr{A}}\left\{\!\mathcal{A}^{u_0}V(t,y,z)\!+ \! \frac{\varphi_{u_0}^{2}(t)}{2 \Psi_1^{u_0}} \!+\!\frac{\chi_{u_0}^2(t)}{2 \Psi^{u_0}_2}\! +\! \frac{\phi_{u_0}^{2}(t)}{2 \Psi_3^{u_0}} \!+\!\frac{\mathbb{E}\left[(\vartheta_{u_0}^U(t))^2\Big|\mathcal{F}_T^{\tilde{W},{W},{B}}\right]}{2 \Psi^{u_0}_4} \right\}\! =\!0,
\end{equation}
with boundary condition $V(T, y,z)= U_f\left(y; \delta_{0}\right)=-\frac{1}{\delta_0} e^{-\delta_0  y}$.
\end{definition}

Let  \begin{equation*}
	\Psi_1^{u_0}=\frac{\beta_{01}}{	v^{u_0}},\quad\Psi_2^{u_0}=\frac{\beta_{02}}{	v^{u_0}},\quad	\Psi_3^{u_0}=\frac{\beta_{03}}{	v^{u_0}},\quad\Psi_4^{u_0}=\frac{\beta_{04}}{	v^{u_0}},
\end{equation*}
where $ \beta_{01}=\frac{\psi_{01}}{-\delta_{0}} $, $ \beta_{02}=\frac{\psi_{02}}{-\delta_{0}} $, $ \beta_{03}=\frac{\psi_{03}}{-\delta_{0}} $, $ \beta_{04}=\frac{\psi_{04}}{-\delta_{0}} $ and  $ v^{u_0}(t,y,z) $ is the solution to the following equation:
\begin{equation}\label{HJBI-v-m}
	\begin{aligned}
		\sup _{(\pi_{u_0},a_{u_0} )\in\mathscr{U}_{u_0}} \inf _{\left(\varphi_{u_0}, \chi_{u_0},\phi_{u_0},\vartheta_{u_0}\right)\in\mathscr{A}}&\left\{\!\mathcal{A}^{u_0}V(t,y,z) \!+\!\left(\!\frac{\varphi_{u_0}^{2}(t)}{2 \beta_{01}} \!+\!\frac{\chi_{u_0}^2(t)}{2 \beta_{02}} \!+\! \frac{\phi_{u_0}^{2}(t)}{2 \beta_{03}} \!+\!\frac{\mathbb{E}\left[(\vartheta_{u_0}^U(t))^2\Big|\mathcal{F}_T^{\tilde{W},{W},{B}}\right]}{2 \beta_{04}}\!\right)	\!V(t,  {y},z)\! \right\} \!=\!0,
	\end{aligned}
\end{equation}
Then $ v^{u_0}(t,y,z) $ is also the solution to \eqref{MHJBI}.

Similar to Theorem \ref{solution-HJBI} and Theorem \ref{Verification}, we have the following results for  Problem \eqref{Yxi}. The proofs of the following results are similar to the last section and we omit them. 
\begin{theorem}\label{solution-MHJBI}
	$ \forall u_0\in\mathbb{U} $, write $ u_0=(x^0_0,\lambda_0,\mu_{01},\mu_{0 2},\eta_0,\theta_0,\delta_0,\psi_{01},\psi_{02}) $,	if $ 	\mathcal{C}^2<\frac{\kappa^2}{2\nu^2} $,  then $ v^{u_0}(t,y,z) \in C^{1,2,2}([0,T]\times\mathbb{R}^2_+)$ and is given by \begin{equation*}
		v^{u_0}(t, y,z)= -\frac{1}{\delta} \exp\left(f_{u_0}(t)-\delta_{u_0} g_{u_0}(t)y+h_{u_0}(t)z\right),
	\end{equation*}where \begin{equation*}
	\left\{\begin{aligned}
		& g_{u_0}(t)=e^{r(T-t)},\\
		&h_{u_0}(t)=c_{0,1}\frac{e^{c_{0,2}t}-e^{c_{0,2}T}}{e^{c_{0,2}t}+c_{0,3}e^{c_{0,2}T}},\\
		&0=f'_{u_0}(t)+{\kappa}\bar{Z}h_{u_0}(t)-\left(\eta_0\left(\lambda_0+\hat{\lambda}\right) \mu_{0 1}-\theta_0\mathbb{E}\left[\eta\left(\lambda+\hat{\lambda}\right) \mu_{ 1}\right]\right)\delta_{0}e^{r(T-t)}\\
		&+\hat{\eta}(1-a_{u_0}^\circ(t))^2\left(\lambda_0+\hat{\lambda}\right) \mu_{0 2}\delta_{0}e^{r(T-t)}-\theta_0\hat{\eta}\mathbb{E}\left[(1-a_{U}^*(t))^2\left(\lambda+\hat{\lambda}\right) \mu_{ 2}\right]\delta_{0}e^{r(T-t)}\\
		&+\frac{1}{2}(a^\circ_{u_0}(t))^2((\hat{\lambda}+\lambda_{0})\mu_{0 2}(1-{\beta_{04}})+\hat{\lambda}\mu_{0 1}^2({\beta_{04}}-{\beta_{03}}))\delta_{0}^2e^{2r(T-t)}\\
		&+\hat{\lambda}\frac{\theta_0^2}{2} \mathbb{E}\left[a^*_U(t) \mu_{ 1}\right]^2 \delta_{0}^2e^{2r(T-t)}(1-{\beta_{03}})\\
		&-\hat{\lambda}\theta_0a^\circ_{u_0}(t)\mu_{0 1}\mathbb{E}\left[a^*_U(t)\mu_{ 1}\right]\delta_{0}^2e^{2r(T-t)}(1-{\beta_{03}})
	\end{aligned}\right.
\end{equation*}
and\begin{equation*}
a_{u_0}^\circ=	\left(\frac{Q_{u_0}(t)}{R_{u_0}(t)}\mathbb{E}\left[a_{U}^*(t)\mu_{1}\right]+\frac{P_{u_0}(t)}{R_{u_0}(t)}\right)\wedge1
\end{equation*}

where $ c_{0,1}=\frac{{\kappa}+m\nu\rho +\sqrt{({\kappa}+m\nu\rho )^2+\frac{1-\beta_{0,2}}{1-\beta_{0,1}}\nu^2(1-\rho^2)m^2}}{\nu^2(1-\rho^2)(1-\beta_{0,2})}$, $c_{0,3}=2(1-\beta_{0,1})({\kappa}+m\nu\rho )c_{0,1}+1$, $c_{0,2}=\frac{c_{0,3}+1}{2(1-\beta_{0,1})c_{0,1}} $.
\end{theorem}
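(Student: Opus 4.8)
The plan is to solve the HJBI equation \eqref{HJBI-v-m} for the representative insurer of type $u_0$ by making the exponential ansatz
\[
v^{u_0}(t,y,z)=-\frac{1}{\delta_0}\exp\left(f_{u_0}(t)-\delta_0 g_{u_0}(t)y+h_{u_0}(t)z\right),
\]
exactly mirroring the derivation of Theorem \ref{solution-HJBI} but now with the population averages $\overline\Omega(t)=\mathbb{E}[\mu_1 a_U^*(t)]$, $\mathbb{E}[\eta(\lambda+\hat\lambda)\mu_1]$, $\mathbb{E}[(1-a_U^*(t))^2(\lambda+\hat\lambda)\mu_2]$ etc. playing the role that the finite sums $\frac1n\sum_{k\neq i}(\cdots)$ played before. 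First I would substitute the ansatz into $\mathcal{A}^{u_0}V+(\text{penalty})V=0$ and note the structural simplification: because $U$ is independent of the Brownian motions and the conditioning is on $\mathcal{F}_T^{\tilde W,W,B}$, the idiosyncratic term $\mathbb{E}[(\vartheta_{u_0}^U(t))^2|\mathcal{F}_T^{\tilde W,W,B}]$ decouples and the inner infimum over $(\varphi_{u_0},\chi_{u_0},\phi_{u_0},\vartheta_{u_0})$ is an unconstrained quadratic minimization whose minimizer is read off term by term, just as in \eqref{HJBI-opt}. This yields the candidate worst-case measures $\varphi^\circ_{u_0}=\frac{\beta_{01}}{1-\beta_{01}}m\sqrt z$, $\chi^\circ_{u_0}=-\nu\sqrt{1-\rho^2}\beta_{02}h_{u_0}(t)\sqrt z$, and the analogues of $\phi^\circ,\vartheta^\circ$ with $a_{u_0}^\circ$ and $\mathbb{E}[a_U^*(t)\mu_1]$.

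Second, I would perform the outer supremum over $(\pi_{u_0},a_{u_0})$. Collecting the coefficient of $z$ gives a Riccati ODE for $h_{u_0}$; after plugging in the optimizers $\varphi^\circ_{u_0},\chi^\circ_{u_0}$ the equation becomes
\[
h'_{u_0}(t)+\Big[\text{quadratic in }h_{u_0}\Big]=0,\qquad h_{u_0}(T)=0,
\]
which, exactly as in the proof of Theorem \ref{solution-HJBI}, is solved in closed form by $h_{u_0}(t)=c_{0,1}\frac{e^{c_{0,2}t}-e^{c_{0,2}T}}{e^{c_{0,2}t}+c_{0,3}e^{c_{0,2}T}}$ with the stated constants $c_{0,1},c_{0,2},c_{0,3}$ (here one checks $1-\beta_{0,1}>0$, $1-\beta_{0,2}>0$ so the discriminant is well defined). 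Collecting the coefficient of $y$ gives $g'_{u_0}(t)=rg_{u_0}(t)$, $g_{u_0}(T)=1$, hence $g_{u_0}(t)=e^{r(T-t)}$. The optimal $\pi_{u_0}^\circ$ is obtained from the first-order condition on the terms linear and quadratic in $\pi_{u_0}$, giving $\pi_{u_0}^\circ(t)=\theta_0\mathbb{E}[\pi_U^*(t)|\mathcal{F}_T^{\tilde W,W,B}]+(\frac{m}{1-\beta_{01}}+\nu\rho h_{u_0}(t))\frac{\sqrt z}{\delta_0 g_{u_0}(t)\Sigma(t)}$; the optimal $a_{u_0}^\circ$ is obtained from the first-order condition on the quadratic-in-$a_{u_0}$ block, truncated to $[0,1]$, yielding $a_{u_0}^\circ=\left(\frac{Q_{u_0}(t)}{R_{u_0}(t)}\mathbb{E}[a_U^*(t)\mu_1]+\frac{P_{u_0}(t)}{R_{u_0}(t)}\right)\wedge1$ with $P_{u_0},Q_{u_0},R_{u_0}$ the $u_0$-versions of $P_i,Q_i,R_i$. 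Whatever is left after setting the $z$- and $y$-coefficients to zero is a $t$-only remainder, which defines the terminal-value ODE for $f_{u_0}$ displayed in the statement; integrating from $t$ to $T$ with $f_{u_0}(T)=0$ gives $f_{u_0}$.

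Third, I would verify admissibility: the conditions (I)--(II) (in their $u_0$-form, i.e. with the supremum over $\mathbb{U}$ replaced by the essential supremum of the type distribution, which is what $\mathcal{C}^2<\kappa^2/(2\nu^2)$ encodes via the definition of $\mathscr{A}$ and $\mathscr{U}_{u_0}$) ensure $h_{u_0}$ is bounded on $[0,T]$, that $\varphi^\circ_{u_0}/\sqrt z,\chi^\circ_{u_0}/\sqrt z$ are bounded by $\mathcal{C}$, and that $\pi_{u_0}^\circ(t)=\ell_{u_0}(t)\frac{Z(t)}{aZ(t)+b}$ with $|\ell_{u_0}|$ bounded — so $v^{u_0}\in C^{1,2,2}$ and the candidate controls lie in $\mathscr{U}_{u_0}\times\mathscr{A}$. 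Finally, a verification argument identical in structure to Theorem \ref{Verification} (apply Itô to $v^{u_0}(s,Y_{u_0}(s),Z(s))$ under $\mathbb{Q}^{u_0}$, use the HJBI inequality for one direction and equality along $(\pi^\circ,a^\circ,\varphi^\circ,\chi^\circ,\phi^\circ,\vartheta^\circ)$ for the other, with a localization/uniform-integrability step powered by Proposition 5.1 of \cite{kraft2005optimal} and the $\mathcal{C}^2<\kappa^2/(2\nu^2)$ bound) shows $V^{u_0}=v^{u_0}$ and the candidate controls are optimal, establishing \eqref{m-opt} once the fixed-point relations $\overline\Omega(t)=\mathbb{E}[\mu_1(\frac{Q_U}{R_U}\overline\Omega(t)+\frac{P_U}{R_U})\wedge1]$ and $\mathbb{E}[\pi_U^*(t)|\mathcal{F}_T^{\tilde W,W,B}]=\mathbb{E}[\pi_U^*(t)]$ (by independence of $U$) are imposed.

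I expect the main obstacle to be not the algebra — which is a faithful copy of the $n$-insurer case — but handling the conditional expectations carefully: one must justify that $\mathbb{E}[\,\cdot\,|\mathcal{F}_T^{\tilde W,W,B}]$ of the other insurers' strategies and wealth increments behaves as a deterministic (or $\mathcal{F}_t^{\tilde W,W,B}$-measurable) input to insurer $u_0$'s control problem, so that the problem is genuinely Markovian in $(Y_{u_0},Z)$, and that the penalty term $\mathbb{E}[(\vartheta_{u_0}^U(t))^2|\mathcal{F}_T^{\tilde W,W,B}]$ is the right object to differentiate in the inner minimization; this is where the independence of $U$ from the Brownian motions and the structure of $\mathscr{A}$ (progressive measurability w.r.t.\ $\mathbb{F}$, not $\hat{\mathbb{F}}$) must be used, and it is the only place the mean-field argument genuinely differs from Theorem \ref{solution-HJBI}.
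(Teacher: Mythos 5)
Your proposal is correct and takes essentially the approach the paper intends: the paper omits the proof of Theorem \ref{solution-MHJBI} entirely, stating only that it is ``similar to the last section,'' and your plan — exponential ansatz, inner quadratic minimization over $(\varphi_{u_0},\chi_{u_0},\phi_{u_0},\vartheta_{u_0})$, outer first-order conditions for $(\pi_{u_0},a_{u_0})$, coefficient matching to obtain the Riccati equation for $h_{u_0}$, the linear equation for $g_{u_0}$ and the residual ODE for $f_{u_0}$, followed by a verification argument as in Theorem \ref{Verification} — is exactly that omitted argument with the finite averages replaced by population expectations. Your closing observation, that the only genuinely new point is justifying that $\mathbb{E}[\,\cdot\,|\mathcal{F}_T^{\tilde W,W,B}]$ of the other types' strategies acts as a deterministic (or adapted) input so the problem remains Markovian in $(Y_{u_0},Z)$ — which follows from the independence of $U$ from the Brownian motions and the fact that $a_U^*$ is deterministic and $\pi_U^*$ depends only on $Z(t)$ — is accurate and is precisely the step the paper glosses over.
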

Theorem \ref{solution-MHJBI} presents the solutions to the HJBI equations and the following theorem verifies that solutions in Theorem \ref{solution-MHJBI} solves Problem \eqref{Yxi}.
\begin{theorem}[Verification theorem]\label{mVerification}
	$ \forall u_0\in\mathbb{U} $, $v^{u_0}\in C^{1,2,2}([0,T]\times\mathbb{R}^2_+) $ solving  \eqref{MHJBI},  the equation  $ V^{u_0}(t, y,z)=v^{u_0}(t, y,z) $ holds.
\end{theorem}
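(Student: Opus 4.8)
The plan is to transcribe the verification argument used for Theorem \ref{Verification} in the $n$-insurer game to the mean-field setting, with the finite-population averages replaced by the frozen mean-field quantities that appear in the dynamics of $Y_{u_0}$, namely $\mathbb{E}\!\left[\pi^*_U(t)\mid\mathcal{F}_T^{\tilde{W},{W},{B}}\right]$, $\mathbb{E}\!\left[a^*_U(t)\mu_1\mid\mathcal{F}_T^{\tilde{W},{W},{B}}\right]$ and $\mathbb{E}\!\left[(1-a^*_U(t))^2(\lambda+\hat{\lambda})\mu_2\mid\mathcal{F}_T^{\tilde{W},{W},{B}}\right]$. The key preliminary observation is that, in the candidate equilibrium of Theorem \ref{solution-MHJBI}, $a^*_U(t)$ is deterministic and $\pi^*_U(t)$ is a deterministic multiple of $Z(t)/(aZ(t)+b)$, so these conditional expectations are in fact (deterministic, respectively $Z(t)$-measurable) coefficients. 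Hence $\mathcal{A}^{u_0}$ is a genuine time-inhomogeneous second-order operator, the state $Y_{u_0}$ of Problem \eqref{Yxi} is a well-posed It\^o diffusion, and $v^{u_0}\in C^{1,2,2}$ makes the computations below legitimate.

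First I would extract the candidate maximiser $(\pi^\circ_{u_0},a^\circ_{u_0})$ and minimiser $(\varphi^\circ_{u_0},\chi^\circ_{u_0},\phi^\circ_{u_0},\vartheta^\circ_{u_0})$ of the Hamiltonian in \eqref{MHJBI}: since the Hamiltonian is quadratic in the controls and in $(\varphi,\chi,\phi,\vartheta)$, the pointwise first-order conditions give exactly the strategies displayed in \eqref{m-opt} (the mean-field analogues of \eqref{HJBI-opt}). One then checks admissibility: $0\le a^\circ_{u_0}\le 1$ by the $\wedge 1$ truncation; $\pi^\circ_{u_0}(t)=\ell_{u_0}(t)Z(t)/(aZ(t)+b)$ with $\ell_{u_0}$ bounded because $h_{u_0}$ is a bounded function of $t$ (it is the explicit ratio of exponentials in Theorem \ref{solution-MHJBI}, with $c_{0,2}>0$); and $(\varphi^\circ_{u_0},\chi^\circ_{u_0})=(\hslash,\hbar)\sqrt{Z(t)}$ with $|\hslash|,|\hbar|\le\mathcal{C}$ for some $\mathcal{C}$ with $\mathcal{C}^2<\kappa^2/(2\nu^2)$ — this is precisely where the compatibility conditions (I)--(II) in their mean-field form enter, exactly as in the proof of Theorem \ref{solution-HJBI}. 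The mean-field analogue of Proposition \ref{pi-circ} then confirms that $(\pi^\circ_{u_0},a^\circ_{u_0})$ realises the supremum in \eqref{HJBI-v-m} even when the measure is fixed at $(\varphi^\circ_{u_0},\chi^\circ_{u_0},\phi^\circ_{u_0},\vartheta^\circ_{u_0})$.

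Next comes the two-sided estimate. Fix $(t,y,z)$. For the upper bound, take any $(\varphi,\chi,\phi,\vartheta)\in\mathscr{A}$, apply It\^o's formula to $v^{u_0}(s,Y_{u_0}(s),Z(s))$ under $\mathbb{Q}^{u_0}$ along the state driven by $(\pi^\circ_{u_0},a^\circ_{u_0})$ on $[t,T]$, and add the accumulated penalty; since $v^{u_0}$ solves \eqref{HJBI-v-m} and $(\varphi^\circ_{u_0},\dots)$ realises the infimum, the drift of $v^{u_0}(s,Y_{u_0},Z)+\int_t^s(\text{penalty density})\,\mathrm{d}r$ is nonnegative, so this process is a $\mathbb{Q}^{u_0}$-local submartingale. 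After localisation and passage to the limit — using the exponential-moment bound $\mathbb{E}\big[\exp\big(\mathcal{C}^2\int_0^T Z\,\mathrm{d}s\big)\big]<\infty$ from Proposition 5.1 of \cite{kraft2005optimal} together with the exponential-affine form of $v^{u_0}$, exactly as in the Remark following the admissible sets — one gets
\[
v^{u_0}(t,y,z)\le \mathbb{E}^{\mathbb{Q}^{u_0}}_{t,y,z}\Big[U_f\big(Y_{u_0}(T);\delta_0\big)+\int_t^T(\text{penalty density})\,\mathrm{d}s\Big],
\]
and taking the infimum over $(\varphi,\chi,\phi,\vartheta)$ yields $v^{u_0}\le V^{u_0}$. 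For the lower bound, take any $(\pi_{u_0},a_{u_0})\in\mathscr{U}_{u_0}$ and repeat the computation with the measure frozen at $(\varphi^\circ_{u_0},\chi^\circ_{u_0},\phi^\circ_{u_0},\vartheta^\circ_{u_0})$; now the drift is nonpositive (the supremum is attained at $(\pi^\circ_{u_0},a^\circ_{u_0})$), the process is a $\mathbb{Q}^{u_0}$-local supermartingale, and the same localisation gives $v^{u_0}(t,y,z)\ge J_{u_0}\big(\pi_{u_0},a_{u_0},(\varphi^\circ_{u_0},\dots)\big)\ge\inf_{(\varphi,\dots)}J_{u_0}(\pi_{u_0},a_{u_0},\cdot)$; taking the supremum over $(\pi_{u_0},a_{u_0})$ gives $v^{u_0}\ge V^{u_0}$. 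Combining the two bounds yields $V^{u_0}=v^{u_0}$, and as a by-product $(\pi^\circ_{u_0},a^\circ_{u_0})$ attains the sup-inf, i.e. it is the robust optimal response.

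The main obstacle is the integrability and uniform-integrability bookkeeping needed to upgrade the local (super/sub)martingales to genuine ones and to pass to the limit in the localisation. This rests on: (i) controlling $\mathbb{E}\big[\exp\big(\mathcal{C}^2\int_0^T Z\,\mathrm{d}s\big)\big]$ via $\mathcal{C}^2<\kappa^2/(2\nu^2)$ and Proposition 5.1 of \cite{kraft2005optimal}; (ii) checking that the density defining $\mathbb{Q}^{u_0}$ and the It\^o integrands — which are linear in $\pi$, $a$ and carry $\sqrt{Z}$ factors — have the requisite moments under both $\mathbb{P}$ and $\mathbb{Q}^{u_0}$; and (iii) verifying that the frozen conditional expectations $\mathbb{E}[\,\cdot\mid\mathcal{F}_T^{\tilde{W},{W},{B}}]$ preserve adaptedness, which they do precisely because $a^*_U$ is deterministic and $\pi^*_U$ depends on randomness only through $Z$. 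Everything else is a line-by-line transcription of the proof of Theorem \ref{Verification}, which is why the authors omit it.
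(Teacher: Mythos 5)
Your proposal is correct and follows exactly the route the paper intends: the authors omit the proof of Theorem \ref{mVerification} on the grounds that it is "similar to the last section," and your transcription of the Appendix C argument (first-order conditions, the analogue of Proposition \ref{pi-circ}, the two-sided It\^o/sub-supermartingale estimate with integrability controlled via the CIR comparison and the exponential-moment bound) is precisely that adaptation. Your observation that the frozen conditional expectations reduce to deterministic (or $Z$-measurable) coefficients because $a^*_U$ is deterministic and $\pi^*_U$ depends on randomness only through $Z$ is exactly the point that makes the line-by-line transcription legitimate.
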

Based on Theorems \ref{solution-MHJBI} and \ref{mVerification}, we obtain the robust mean-field equilibrium reinsurance and investment strategies as follows.
\begin{theorem}	
	If $1\neq \mathbb{E}[\theta]$, the robust mean-field equilibrium strategy $ (\pi^*_{U},a^*_{U}) $,  $ U\sim\mathcal{D} $ exists and is given by \eqref{m-opt}. 
	 Furthermore,  the associated probability measure transformation functions  are $ \left(\varphi_{U}^*, \chi_{U}^*,\phi_{U}^*,\vartheta_{U}^*\right)$,  $ U\sim\mathcal{D} $, when $ U=u_0 $, the explicit forms  are
\begin{equation*}
	\left\{\begin{aligned}
		&{\varphi^*_{u_0}(t)}=\frac{m\beta_{01}}{1- {\beta_{01}}}\sqrt{Z(t)},\\
		&{\chi^*_{u_0}(t)}=-\nu\sqrt{1-\rho^2}\beta_{02} h_{u_0}(t)\sqrt{Z(t)}\\
		&\phi^*_{u_0}(t)=\sqrt{\hat{\lambda}}\left[\mu_{01}a^*_{u_0}(t)-\theta\mathbb{E}\left[a_{U}^*(t)\mu_{1}\right]\right]{\beta_{03}}\delta_{0} e^{r(T-t)},\\
		&\vartheta^{u_0,*}_{u_0}(t)=a^*_{u_0}(t)\sqrt{(\hat{\lambda}+\lambda_{0})\mu_{0 2}-\hat{\lambda}\mu_{0 1}^2}{\beta_{04}}\delta_{0}e^{r(T-t)},\\
		&\vartheta^{u_1,*}_{u_0}(t)=0,~u_1\neq u_0.
	\end{aligned}\right.
\end{equation*}
If $1= \mathbb{E}[\theta]$, the robust  mean-field equilibrium does not exist.
\end{theorem}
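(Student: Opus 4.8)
The plan is to derive the equilibrium from the single-insurer verification result (Theorem \ref{mVerification}) together with a fixed-point argument on the aggregate quantities. By Theorem \ref{mVerification}, an insurer of type $u_0$ who faces a fixed terminal average $\bar X_{u_0}^*(T)$ has robust optimal response equal to the maximizer--minimizer in the HJBI equation \eqref{MHJBI} evaluated at the candidate value function $v^{u_0}$ of Theorem \ref{solution-MHJBI}. Reading off that optimizer, and abbreviating $N(t):=\mathbb{E}^{\mathbb{Q}^{u_0}}[\pi^*_U(t)\mid\mathcal{F}_T^{\tilde{W},W,B}]$ and $\bar\Omega(t):=\mathbb{E}^{\mathbb{Q}^{u_0}}[\mu_1 a^*_U(t)\mid\mathcal{F}_T^{\tilde{W},W,B}]$ (the only aggregates entering the drift of $Y_{u_0}$), one obtains
\[
\pi^\circ_{u_0}(t)=\theta_0 N(t)+\Bigl(\tfrac{m}{1-\beta_{01}}+\nu\rho h_{u_0}(t)\Bigr)\frac{\sqrt{Z(t)}}{\delta_0 e^{r(T-t)}\Sigma(t)},\qquad a^\circ_{u_0}(t)=\Bigl(\tfrac{Q_{u_0}(t)}{R_{u_0}(t)}\bar\Omega(t)+\tfrac{P_{u_0}(t)}{R_{u_0}(t)}\Bigr)\wedge 1,
\]
which are the $n\to\infty$ analogues of \eqref{equ:pia1}, together with the stated worst-case controls.

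First I would impose the consistency (fixed-point) condition. Since $U$ is independent of all Brownian motions and the randomness of $\pi^*_U,a^*_U$ enters only through $Z(t)$ and $h_U(t)$, both $\mathcal{F}_T^{\tilde{W},W,B}$-measurable, each conditional expectation above collapses to an integral over the type law $\mathcal{D}$. Replacing $u_0$ by $U$ in $\pi^\circ_{u_0}$ and taking $\mathbb{E}[\,\cdot\,]$ over $U$ gives the scalar relation
\[
N(t)=\mathbb{E}[\theta]\,N(t)+\mathbb{E}\Bigl[\tfrac1\delta\bigl(\tfrac{m}{1-\beta_1}+\nu\rho h_U(t)\bigr)\Bigr]\frac{\sqrt{Z(t)}}{\Sigma(t)e^{r(T-t)}}.
\]
When $\mathbb{E}[\theta]\neq 1$ this determines $N(t)$ uniquely, and substituting back reproduces exactly the investment strategy in \eqref{m-opt} with $M_1$ as defined there. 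For the reinsurance part, multiplying $a^\circ_U$ by $\mu_1$ and taking expectations yields the fixed-point equation $\bar\Omega(t)=\mathbb{E}\bigl[\mu_1\bigl((\tfrac{Q_U(t)}{R_U(t)}\bar\Omega(t)+\tfrac{P_U(t)}{R_U(t)})\wedge 1\bigr)\bigr]$; I would show its right-hand side is a self-map of the compact interval $[0,\esssup\mu_1]$ that is a strict contraction, using — exactly as in the Remark following Theorem \ref{thm:result} — that $\mu_2\geq\mu_1^2$ and $\beta_3,\beta_4<0$ force $R_U(t)>\hat\lambda\mu_1^2(1-\beta_3)$, whence $\mu_1 Q_U(t)/R_U(t)<\theta$ and $\mathbb{E}[\mu_1 Q_U(t)/R_U(t)]<\mathbb{E}[\theta]\le 1$; Banach's fixed-point theorem then gives a unique $\bar\Omega(t)$, hence the claimed $a^*_U(t)$, and hence $\phi^*_{u_0},\vartheta^*_{u_0}$. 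It then remains to check admissibility — the required form $\ell_U(t)\frac{Z(t)}{aZ(t)+b}$ with $\sup_U\mathcal{C}_U<\infty$, and the bound $\mathcal{C}^2<\kappa^2/(2\nu^2)$ for the worst-case measures — which follows directly from the closed forms just as in the proof of Theorem \ref{solution-HJBI}.

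For the nonexistence claim, suppose $\mathbb{E}[\theta]=1$ and a robust mean-field equilibrium existed. By the argument above its aggregate $N(t)$ would still have to satisfy the displayed scalar relation, which now reduces to $0=\mathbb{E}\bigl[\tfrac1\delta(\tfrac{m}{1-\beta_1}+\nu\rho h_U(t))\bigr]\tfrac{\sqrt{Z(t)}}{\Sigma(t)e^{r(T-t)}}$. Since $\Sigma(t),Z(t)>0$ and the expectation does not vanish (as in the $n=\sum_k\theta_k$ branch of Theorem \ref{thm:result}, using the ODE for $h_U$ so that $\tfrac{m}{1-\beta_1}+\nu\rho h_U(t)$ is not identically zero across types), this is impossible, so no equilibrium exists. \emph{Main obstacle.} The delicate point is the reinsurance fixed point: one must control the $\wedge 1$ truncation so that the map is genuinely a contraction uniformly in $t\in[0,T]$, and then verify that the implicitly defined $a^*_U(t)$ and the induced worst-case controls $\phi^*_{u_0},\vartheta^*_{u_0}$ indeed lie in $\mathscr{U}\times\mathscr{A}$, in particular that the uniform bounds $\sup_U\mathcal{C}_U<\infty$, $\sup|\phi|<\infty$, $\sup\|\vartheta\|<\infty$ survive the passage to the limit $n\to\infty$.
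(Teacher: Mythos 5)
Your proposal is correct and follows essentially the same route as the paper: read off the representative insurer's robust best response from the mean-field HJBI equation (Theorems \ref{solution-MHJBI} and \ref{mVerification}), then close the system by imposing the consistency condition on the aggregates $N(t)$ and $\bar\Omega(t)$, which is solvable precisely when $\mathbb{E}[\theta]\neq 1$. Your Banach contraction argument for the truncated reinsurance fixed point (via $\mathbb{E}[\mu_1 Q_U/R_U]<\mathbb{E}[\theta]\leq 1$) is a slightly different, and somewhat more complete, treatment than the paper's monotone-iteration remark, but it serves the same purpose and the overall structure of the proof is the same.
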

From the above paragraph, we see that the results in the mean-field and $n$-insurer cases are consistent. Consequently, we are able to obtain closed-form solutions for the robust equilibrium investment strategy, while the robust equilibrium reinsurance strategy can be computed numerically.

	\section{\bf Numerical analysis}
	In this section, we show the impacts of different parameters on the insurers' optimal portfolio processes and optimal reinsurance proportions. As explained in Remark \ref{Randomness}, we will use the deterministic coefficients $ \pi^*_i(t)/\frac{{Z(t)}}{a{Z(t)}+b} $ to illustrate the portfolio behavior of the insurers and we use   $ a^*_i(t) $ to illustrate the reinsurance behavior of the insurers. Unless otherwise stated, the parameters we adopt are real-data estimates as reported
	in \cite{cheng2021optimal}, which is based on S\&P 500 and VIX data from January 2010 to the last
	day of December 2019: $ T=5 $, $ r=0.02 $, 
	$ \kappa=7.3479 $, $\bar{Z}=0.0328$, $ \nu=0.6612 $, $ z^0=0.04 $, $ \rho=-0.7689 $, $ m=2.9428, $ parameters $ a $ and $ b $ are chosen such that $ a\sqrt{z^0}+\frac{b}{\sqrt{z^0}}=\sqrt{z^0} $, let $ a = 0.9051 $, $ b=0.0023 $.
	
	For simplicity, we consider two insurers. The payment of  insurer 1 is low frequency, high payment and 
 the payment of   insurer 2  is high frequency, low payment, that is,  we set 
	$ \lambda_1 = 0.9,$  $
	\lambda_2 = 2.4,$  $\hat{\lambda} = 0.6,$  $ \eta_1 = 0.2,$  $
	\eta_2 = 0.2,$ $	\hat{\eta} = 0.25,  $  $
	\mu_{1,1} = 1,$  $
	\mu_{1,2} = 2,$  $
	\mu_{2,1} = 1/2,$  $
	\mu_{2,2} = 1/2,$  we then let
	$
	\delta_1 = 1.5,$  $
	\delta_2 = 1.3,$  $
	\theta_1 = 0.7,$  $
	\theta_2 = 0.7,$  $
	\psi_{1,1} = 5,$  $
	\psi_{1,2} = 7,$  $
	\psi_{1,3} = 5,$  $
	\psi_{1,4} = 7,$  $
	\psi_{2,1} = 5,$  $
	\psi_{2,2} = 7 ,$ $
	\psi_{2,3} = 5,$  $
	\psi_{2,4} = 7 $ be the default  values of the parameters for the two insurers.

\subsection{Robust equilibrium reinsurance strategy}

We are mainly concerned with the effects of competition, risk aversion parameters, and ambiguity aversion parameters. Figs.~\ref{lambda_hat_Reinsurance}-\ref{eta_hat_Reinsurance} shows the effects of $\hat{\lambda}$ and $\hat{\eta}$ on the reinsurance strategies, respectively. $\hat{\lambda}$ represents the intensity of the common insurance business. Because the insurance premium is calculated by the expected value principle, the insurers can earn profits by undertaking more insurance business. As such, the reinsurance strategy increases with $\hat{\lambda}$, which is illustrated in Fig.~\ref{lambda_hat_Reinsurance}. $\hat{\eta}$ characterizes the reinsurance premium. For a larger $\hat{\eta}$, the insurers have to pay more premiums to divide the insurance risk. Then the insurers will undertake more insurance risk by themselves. Fig.~\ref{eta_hat_Reinsurance} depicts the positive relationship between $\hat{\eta}$ and the reinsurance strategy. Because the financial market and insurance market are independent, we see from \eqref{a_star} that the ambiguity aversion parameters $\psi_{\cdot,1}$ and  $\psi_{\cdot,2}$  over financial risks do not influence the robust equilibrium reinsurance strategy. Figs.~\ref{theta_1_Reinsurance} and \ref{theta_2_Reinsurance} illustrate the effects of competition on insurer 1's reinsurance strategy. When insurer 1 is more cared about relative performance, insurer 1 will be more risk-seeking to perform better than others. Thus, in Fig.~\ref{theta_1_Reinsurance}, insure 1's reinsurance strategy increases with $\theta$, which means that insurer 1 undertakes more insurance risk when she/he is more concerned with relative performance. Fig.~\ref{theta_2_Reinsurance} shows insurer 2's competition attitude toward insurer 1's reinsurance strategy. We see that $a_1^*$ has a positive relationship with $\theta_2$. When $\theta_2$ increases, insurer 2 will choose a larger $a_2^*$. In response to insurer 2's action, insurer 1 will also adopt a larger $a_1^*$ in the competition game.

Fig.~\ref{delta_1_Reinsurance} shows the effect of $\delta_1$ on insurer 1's robust equilibrium strategy. $\delta_1$ represents insurer 1's risk aversion attitude. For a larger $\delta_1$, insurer 1 becomes more risk aversion and intends to undertake less insurance risk by itself. Then insurer 1 divides more insurance risk to the insurer and adopts a less retention level $a_1^*$, which is well illustrated in Fig.~\ref{delta_1_Reinsurance}.  The effect of insurer 2's risk aversion parameter $\delta_2$ on insurer 1's robust equilibrium reinsurance strategy is depicted in Fig.~\ref{delta_2_Reinsurance}. Insurer 2 will divide more insurance risk with a higher $\delta_2$. Fig.~\ref{delta_2_Reinsurance} shows that insurer 1 will also take similar action in the competition, i.e., $a_1^*$ decreases with insurer 2's risk aversion parameter.  

$\psi_{\cdot,3}$ and $\psi_{\cdot,4}$ characterize the ambiguity aversion attitudes towards common and idiosyncratic insurance risks. When insurer 1 is more ambiguity aversion, she/he has less faith in the financial model. Thus, to handle higher model uncertainty, insurer 1 retains less insurance risk and decreases $a^*_1$ whether $\psi_{1,3}$ or $\psi_{2,4}$ increases, which are revealed in   Figs.~\ref{psi1_3_Reinsurance} and \ref{psi1_4_Reinsurance}. Besides, we are also interested in the effect of insurer 2's ambiguity attitude on insurer 1's reinsurance strategy. Figs.~\ref{psi2_3_Reinsurance} and \ref{psi2_4_Reinsurance} show that insurer 1's reinsurance strategy has a positive relationship with insurer 2's ambiguity attitude. Similar to the explanations in  Figs.~\ref{theta_2_Reinsurance} and \ref{delta_2_Reinsurance}, when  $\psi_{2,3}$ or $\psi_{2,4}$ increases, insurers 2 and 1 will all undertake less insurance risk in the competition game. From Figs.~\ref{theta_1_Reinsurance}-\ref{psi2_4_Reinsurance}, we see that the behaviors of different insurers tend to be convergent in the competition system. When others are more aggressive and undertake more risks, the insurer will also retain more insurance risk in order to perform better than others. 

\begin{figure}[htbp]
	\centering
	\begin{minipage}[t]{0.45\linewidth} 
		\centering
		\includegraphics[width=0.9\linewidth]{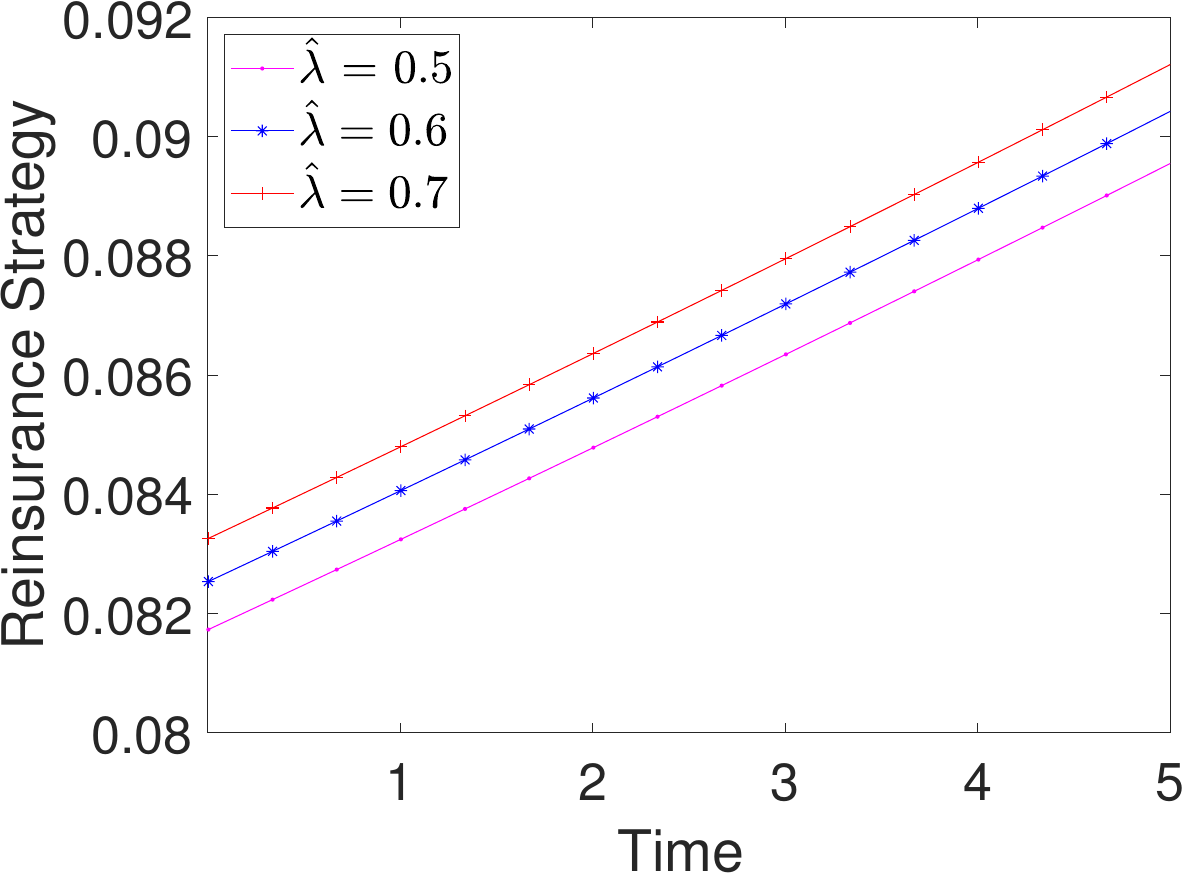}
		\caption{Effects of $ \hat{\lambda} $ on $ a^*_1(t) $}
		\label{lambda_hat_Reinsurance}
	\end{minipage}
	\centering
   \begin{minipage}[t]{0.45\linewidth} 
	\centering
	\includegraphics[width=0.9\linewidth]{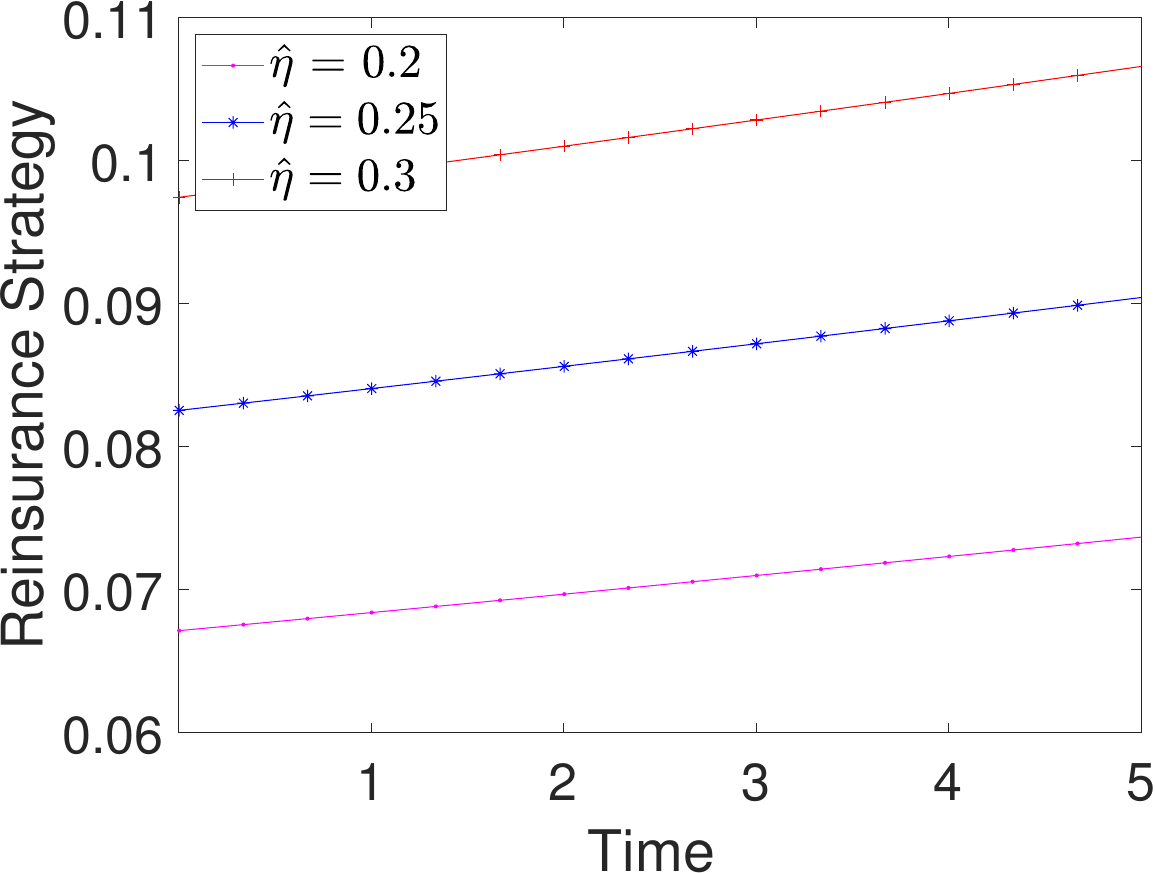}
	\caption{Effects of $ \hat{\eta} $ on $ a^*_1(t) $}
	\label{eta_hat_Reinsurance}
   \end{minipage}
\end{figure}

\begin{figure}[htbp]
	\centering
\begin{minipage}[t]{0.45\linewidth} 
	\centering
	\includegraphics[width=0.9\linewidth]{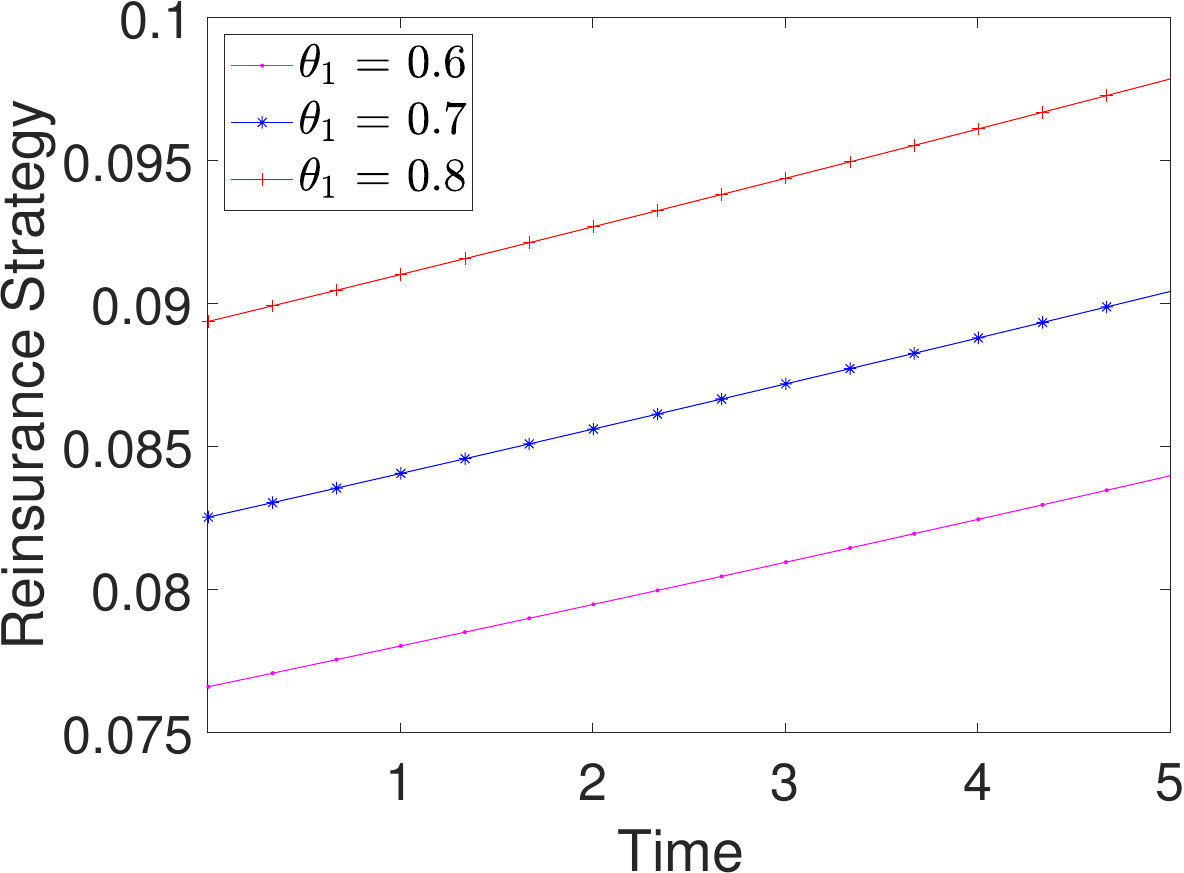}
	\caption{Effect of $ \theta_1 $ on $ a^*_1(t) $.}
	\label{theta_1_Reinsurance}
\end{minipage}
\begin{minipage}[t]{0.45\linewidth} 
	\centering
	\includegraphics[width=0.9\linewidth]{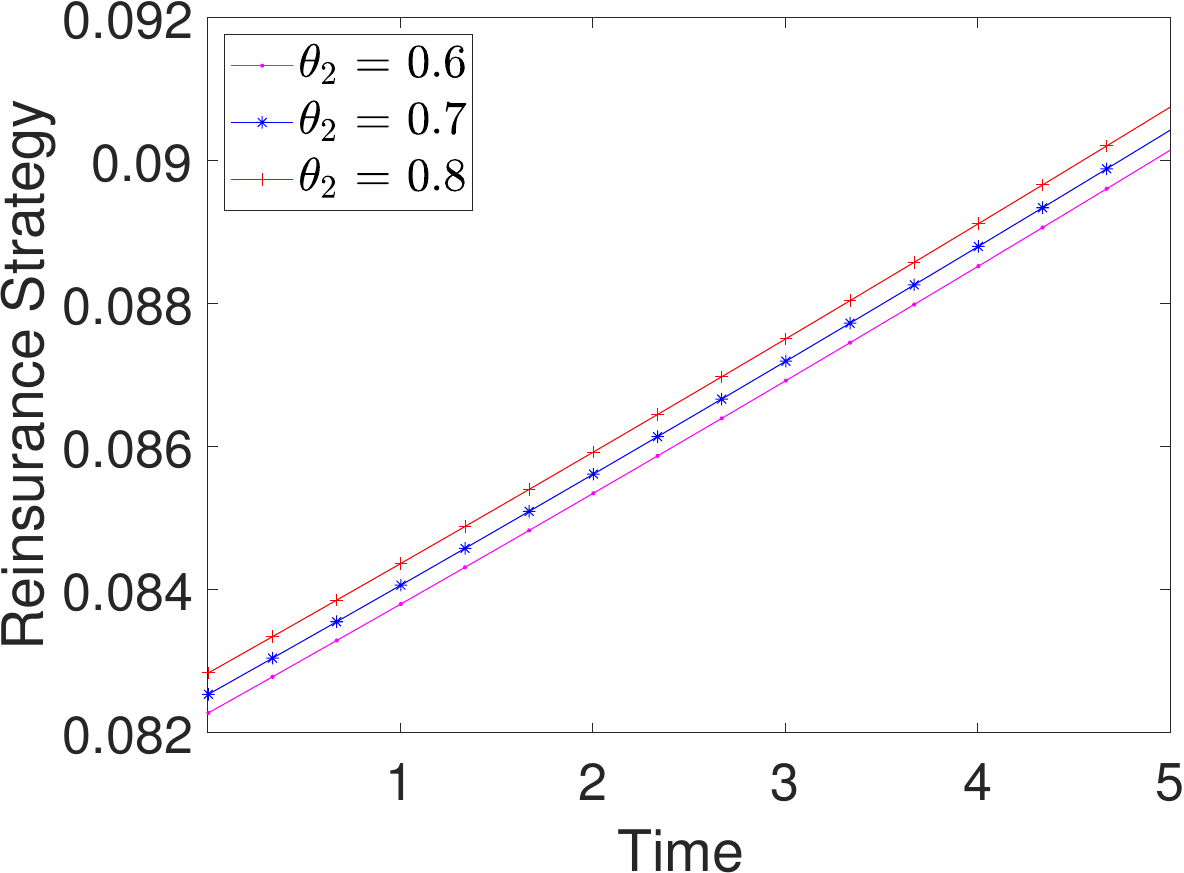}
	\caption{Effect of $ \theta_2$ on $ a^*_1(t) $.}
	\label{theta_2_Reinsurance}
\end{minipage}
\begin{minipage}[t]{0.45\linewidth} 
	\centering
	\includegraphics[width=0.9\linewidth]{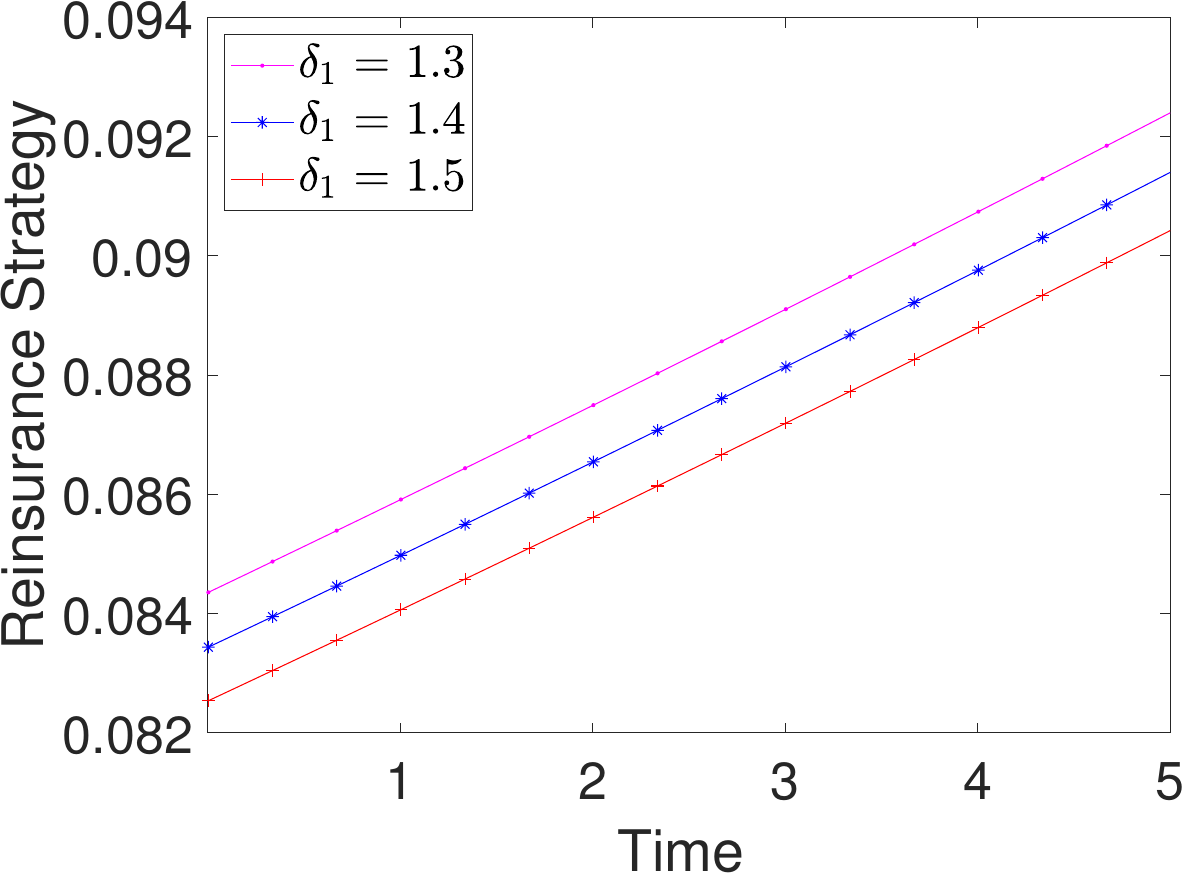}
	\caption{Effect of $ \delta_1 $ on $ a^*_1(t) $.}
	\label{delta_1_Reinsurance}
\end{minipage}
\begin{minipage}[t]{0.45\linewidth} 
	\centering
	\includegraphics[width=0.9\linewidth]{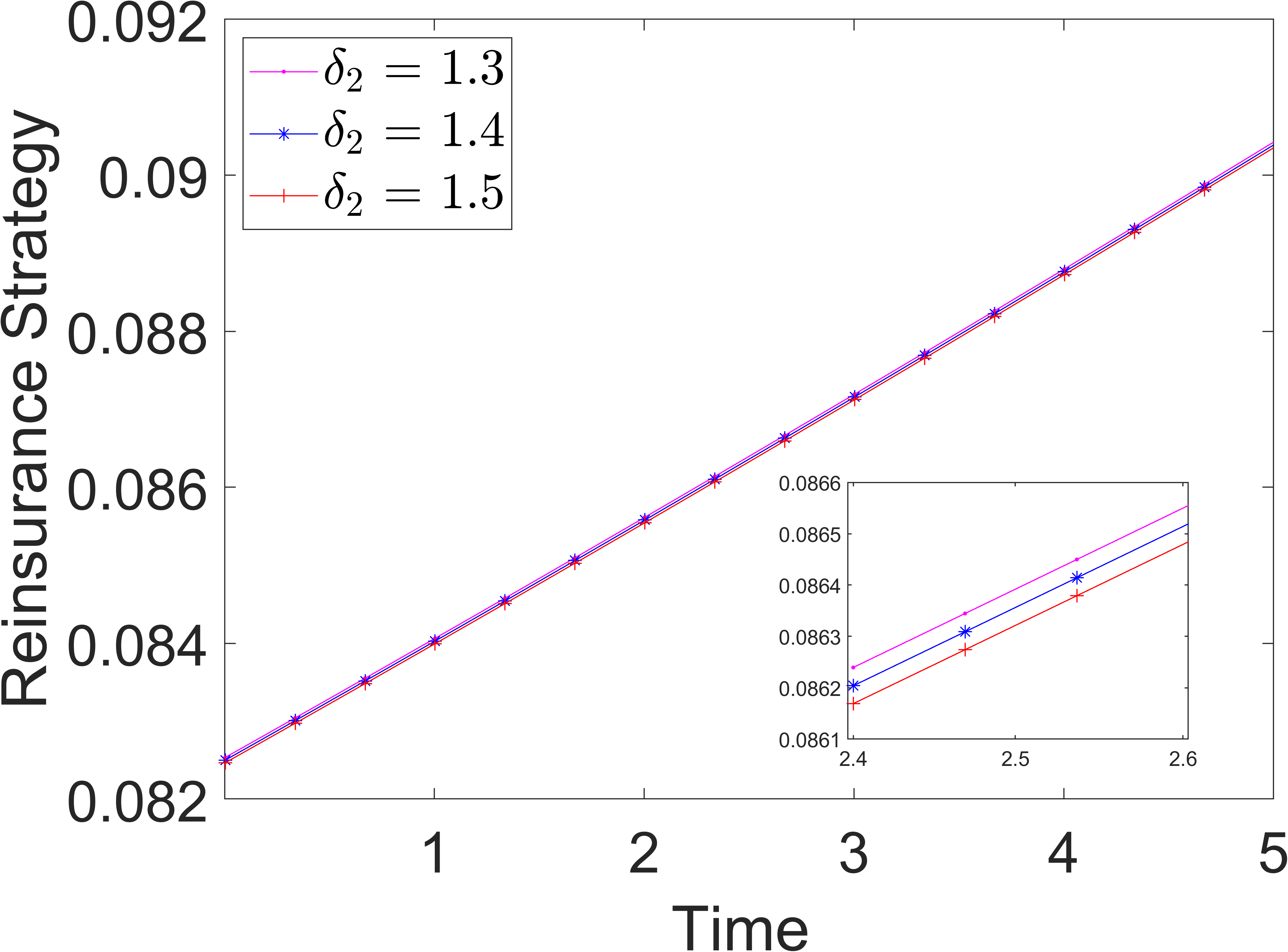}
	\caption{Effect of $ \delta_2$ on $ a^*_1(t) $.}
	\label{delta_2_Reinsurance}
\end{minipage}
\end{figure}

\begin{figure}[htbp]
	\centering
\begin{minipage}[t]{0.45\linewidth} 
	\centering
	\includegraphics[width=0.9\linewidth]{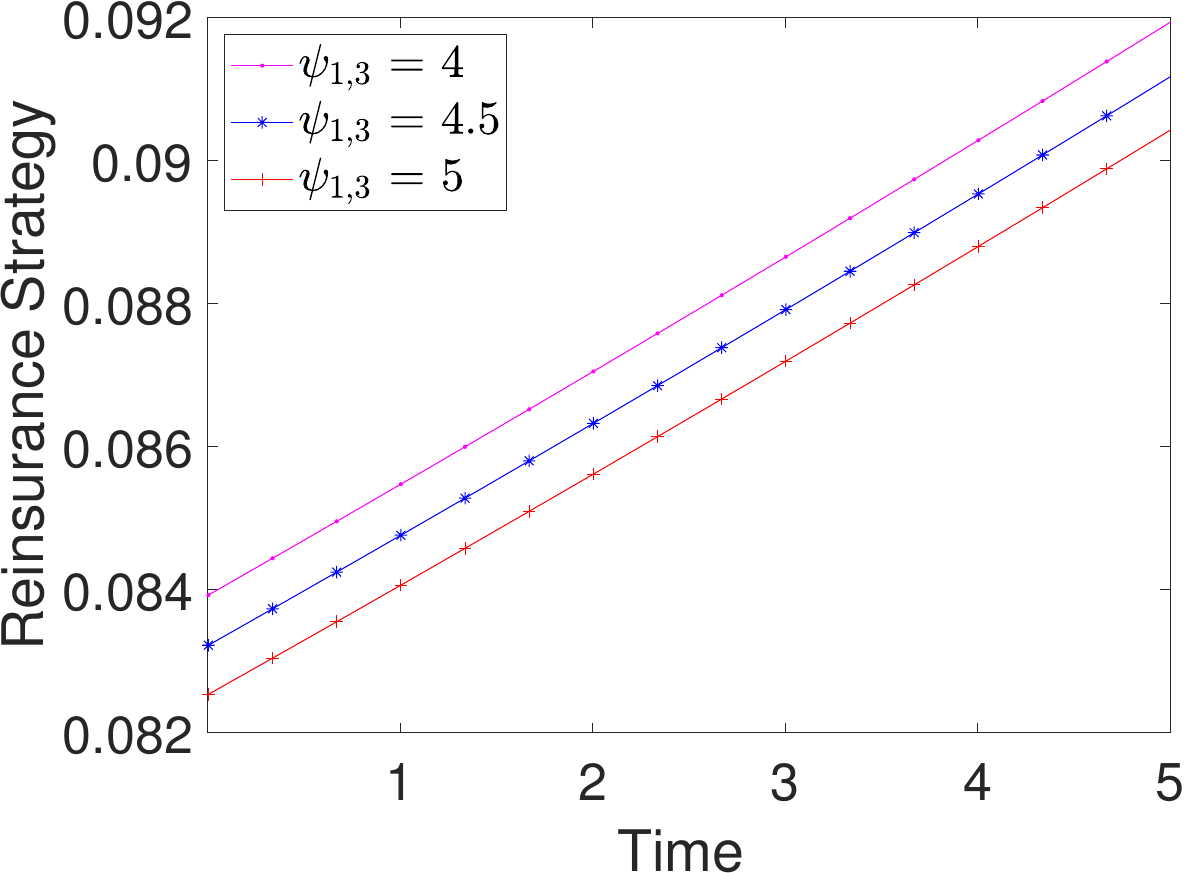}
	\caption{Effect of $ \psi_{1,3}$ on $ a^*_1(t) $.}
	\label{psi1_3_Reinsurance}
\end{minipage}
	\begin{minipage}[t]{0.45\linewidth} 
	\centering
	\includegraphics[width=0.9\linewidth]{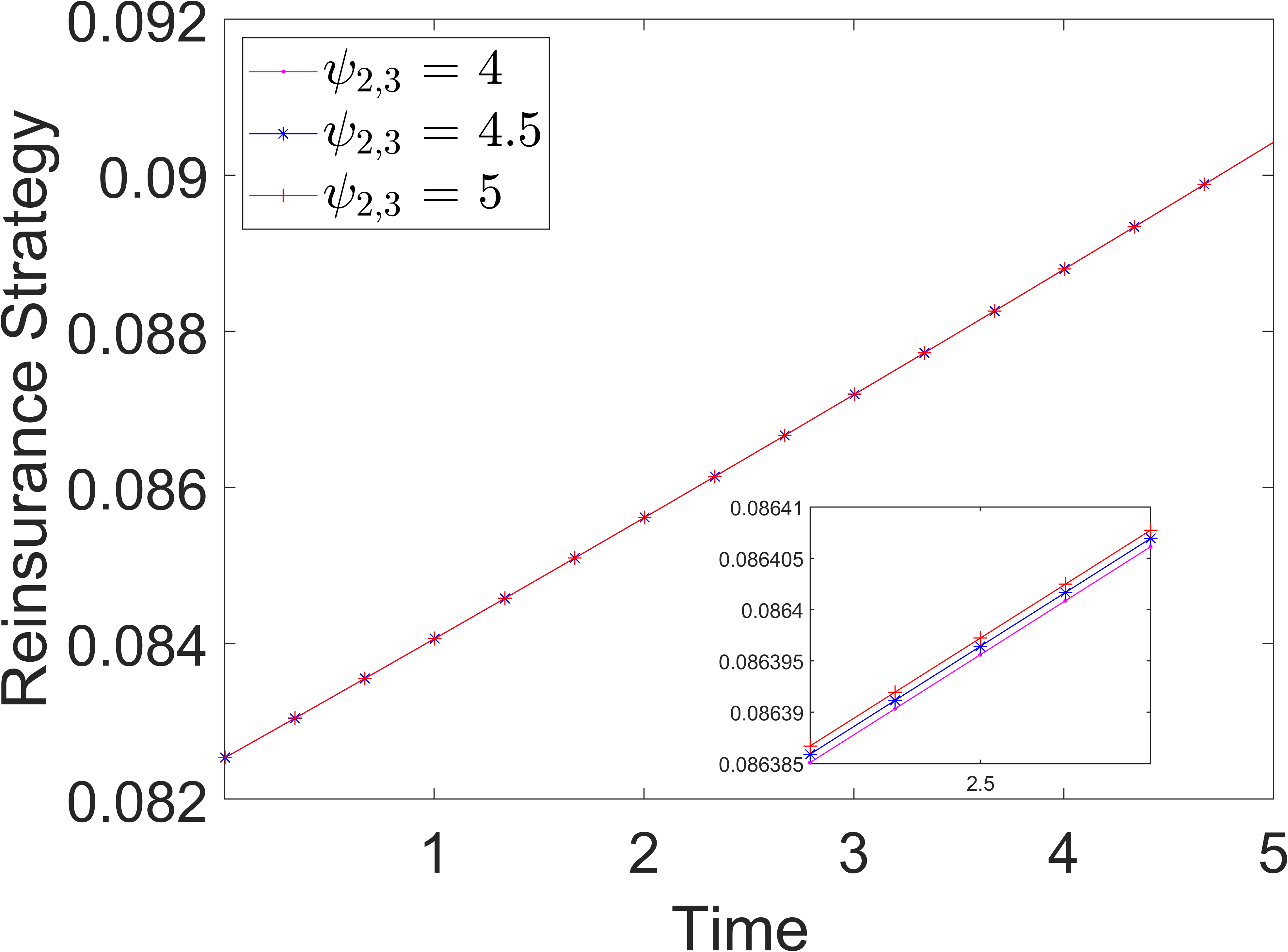}
	\caption{Effect of $ \psi_{2,3}$ on $ a^*_1(t) $.}
	\label{psi2_3_Reinsurance}
\end{minipage}
\begin{minipage}[t]{0.45\linewidth}
	\centering
	\includegraphics[width=0.9\linewidth]{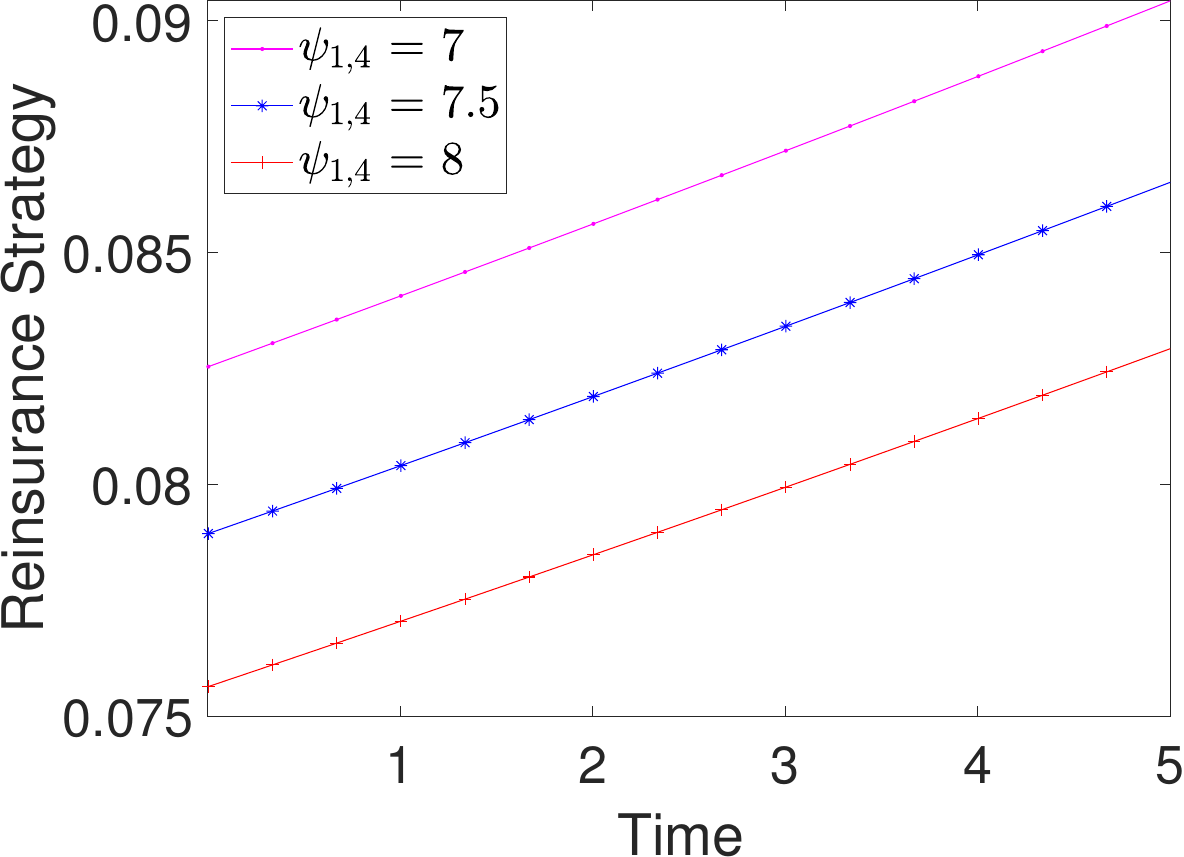}
	\caption{Effect of $ \psi_{1,4}$ on $ a^*_1(t) $.}
	\label{psi1_4_Reinsurance}
\end{minipage}
	\begin{minipage}[t]{0.45\linewidth} 
	\centering
	\includegraphics[width=0.9\linewidth]{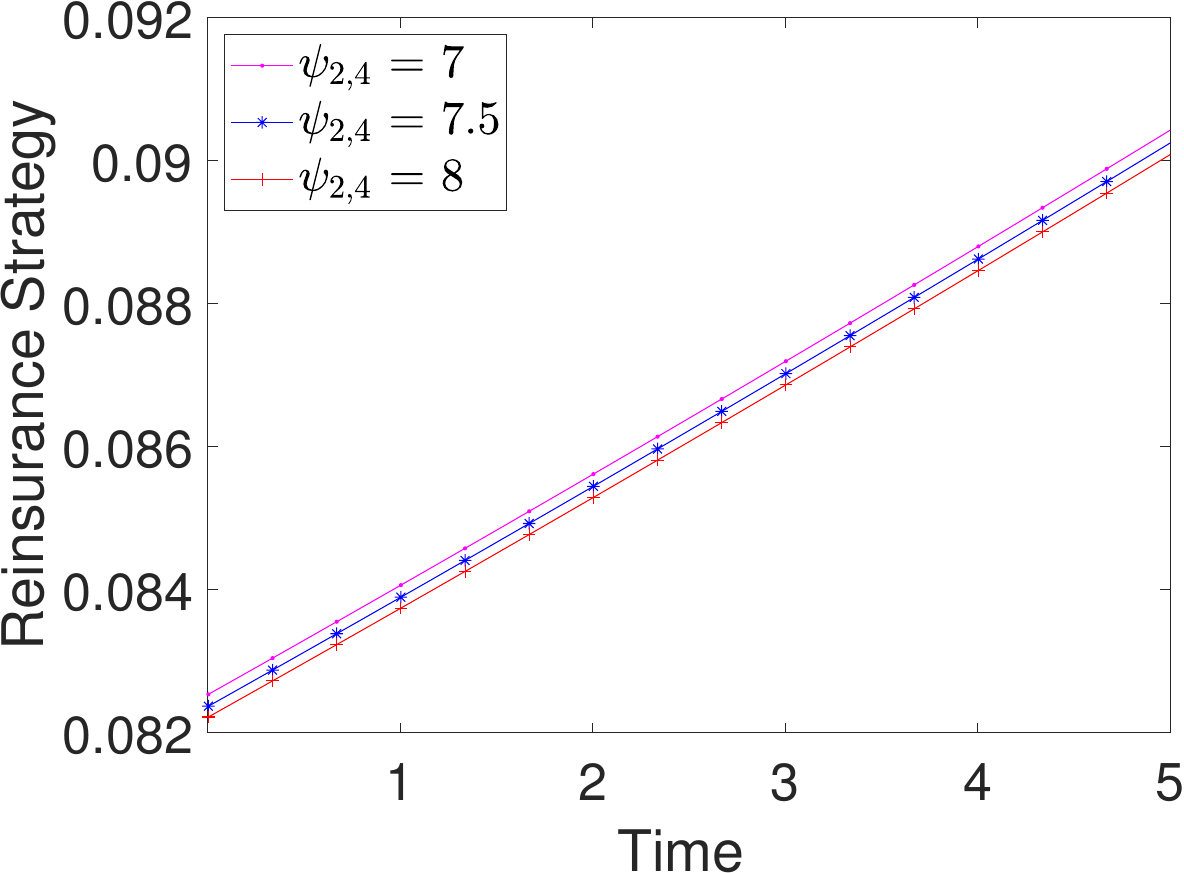}
	\caption{Effect of $ \psi_{2,4}$ on $ a^*_1(t) $.}
	\label{psi2_4_Reinsurance}
\end{minipage}
\end{figure}

\subsection{Robust equilibrium investment strategy}
Next, we present the sensitivity analyses of the robust equilibrium investment strategy. Due to the independence of the financial market and insurance market, the ambiguity attitudes over the insurance market have no effects on the robust equilibrium investment strategy, which can be observed in \eqref{n-opt}. 

The effects of competition on the robust equilibrium investment strategies are depicted in Figs.~\ref{theta_1_Portfolio} and \ref{theta_2_Portfolio}. Similar with Figs.~\ref{theta_1_Reinsurance} and \ref{theta_2_Reinsurance}, we observe that $\pi^*_1$ increases with $\theta_1$ and $\theta_2$. In the competition game, if one of the insurers in the system is more cared about the relative performance and takes risk-seeking strategy, the others will also allocate more to the risky asset. This phenomenon reveals that either in the insurance market or the financial market, competition makes the insurers more aggressive.  

Fig.~\ref{delta_1_Portfolio} describes the relationship between $\delta_1$ and $\pi_1^*$. Insurer 1 is more risk aversion when $\delta_1$ increases. Then insurer 1 will become conservative in the financial market and decrease allocation in the risky asset. Therefore, $\delta_1$ has a negative effect on insurer 1's robust equilibrium investment strategy. Besides, the effect of insurer 2's risk aversion attitude on insurer 1's investment strategy is presented in Fig.~\ref{delta_1_Portfolio}. Fig.~\ref{delta_2_Portfolio} shows that when insurer 2 is more aggressive/conservative, insurer 1 will also become more aggressive/conservative. The effects of $\delta_1$ and $\delta_2$ on $\pi^*_1$ (Figs.~\ref{delta_1_Portfolio} and \ref{delta_2_Portfolio}) are similar to Fig.~\ref{delta_1_Reinsurance} and \ref{delta_2_Reinsurance}. In the financial market, the behaviors of different insurers also tend to be convergent.  
	\begin{figure}[htbp]
	\centering
	\begin{minipage}[t]{0.45\linewidth}
		\centering
		\includegraphics[width=0.9\linewidth]{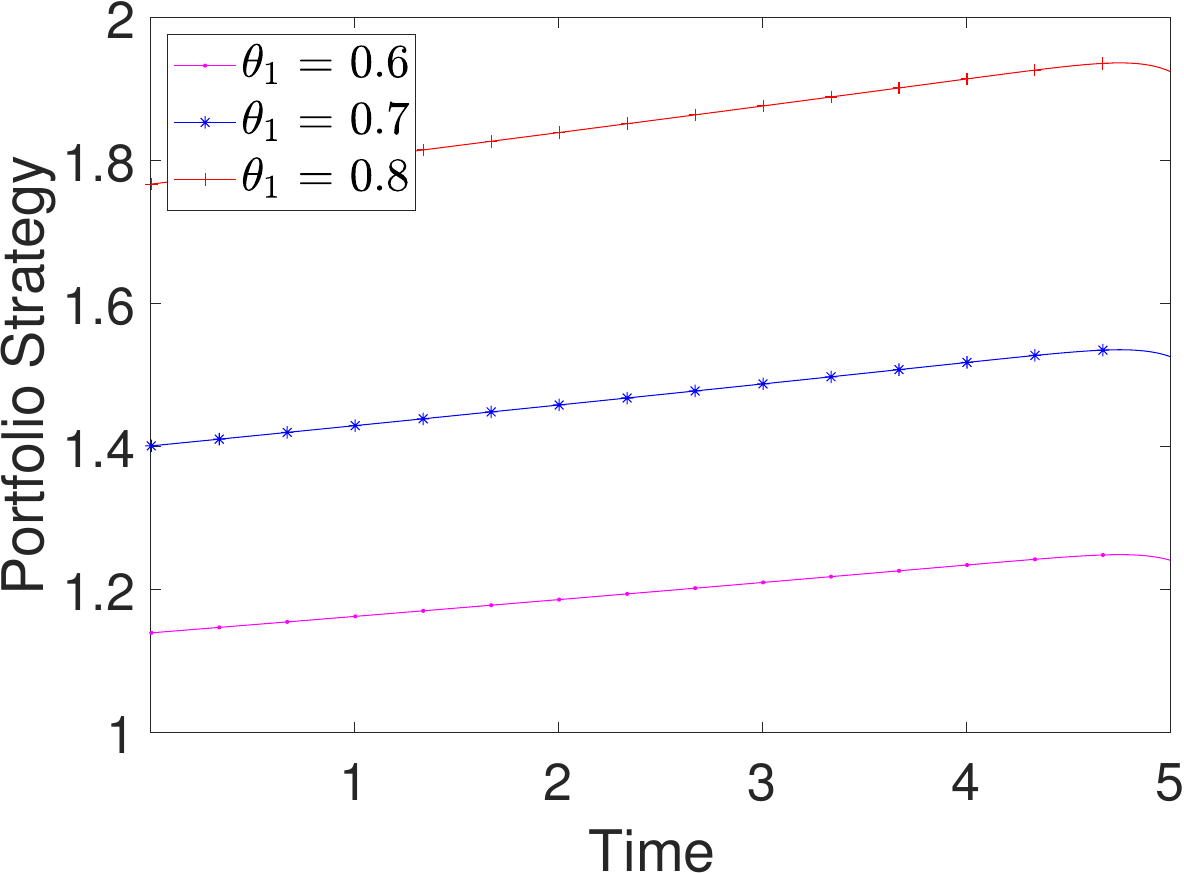}
		\caption{Effect of $ \theta_1$ on $ \pi^*_1(t) $.}
		\label{theta_1_Portfolio}
	\end{minipage}
\begin{minipage}[t]{0.45\linewidth}
	\centering
	\includegraphics[width=0.9\linewidth]{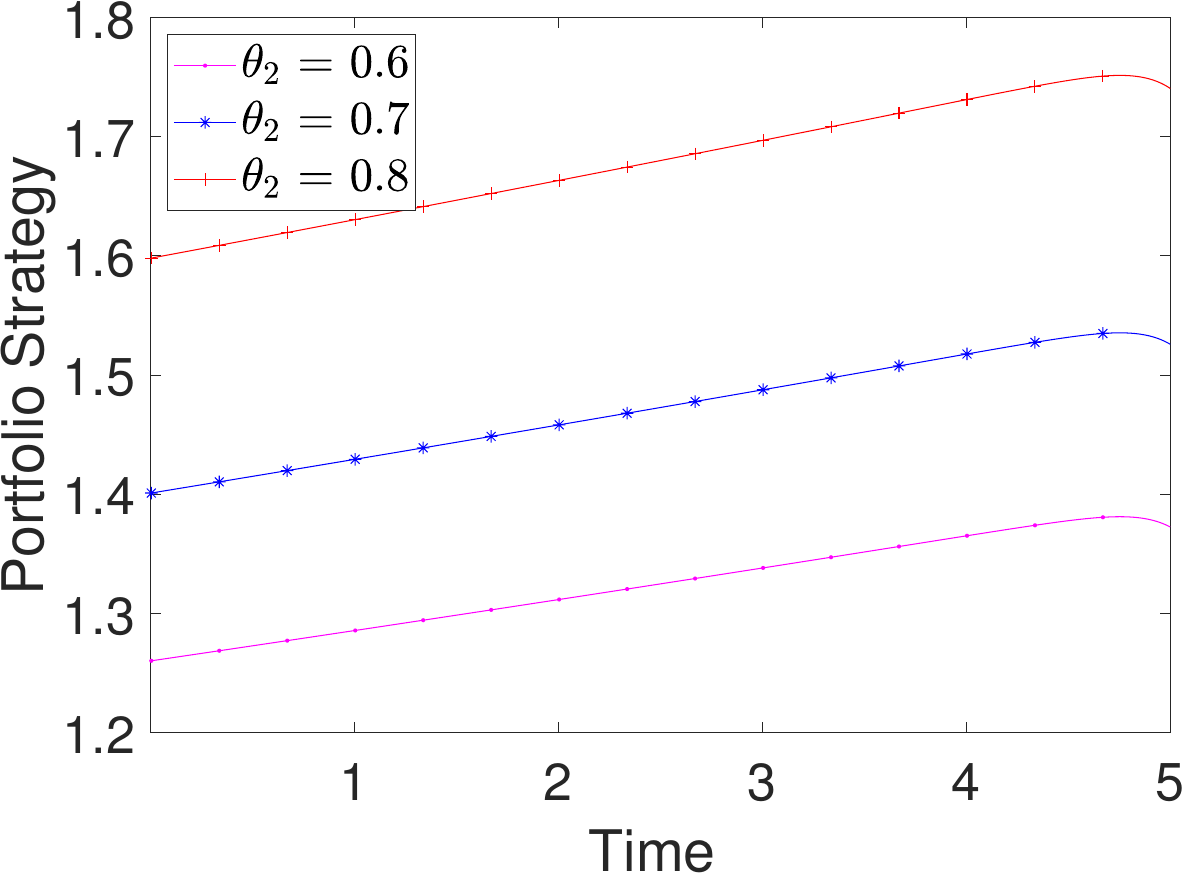}
	\caption{Effect of $\theta_2$ on $ \pi^*_1(t) $.}
	\label{theta_2_Portfolio}
\end{minipage}
\begin{minipage}[t]{0.45\linewidth}
	\centering
	\includegraphics[width=0.9\linewidth]{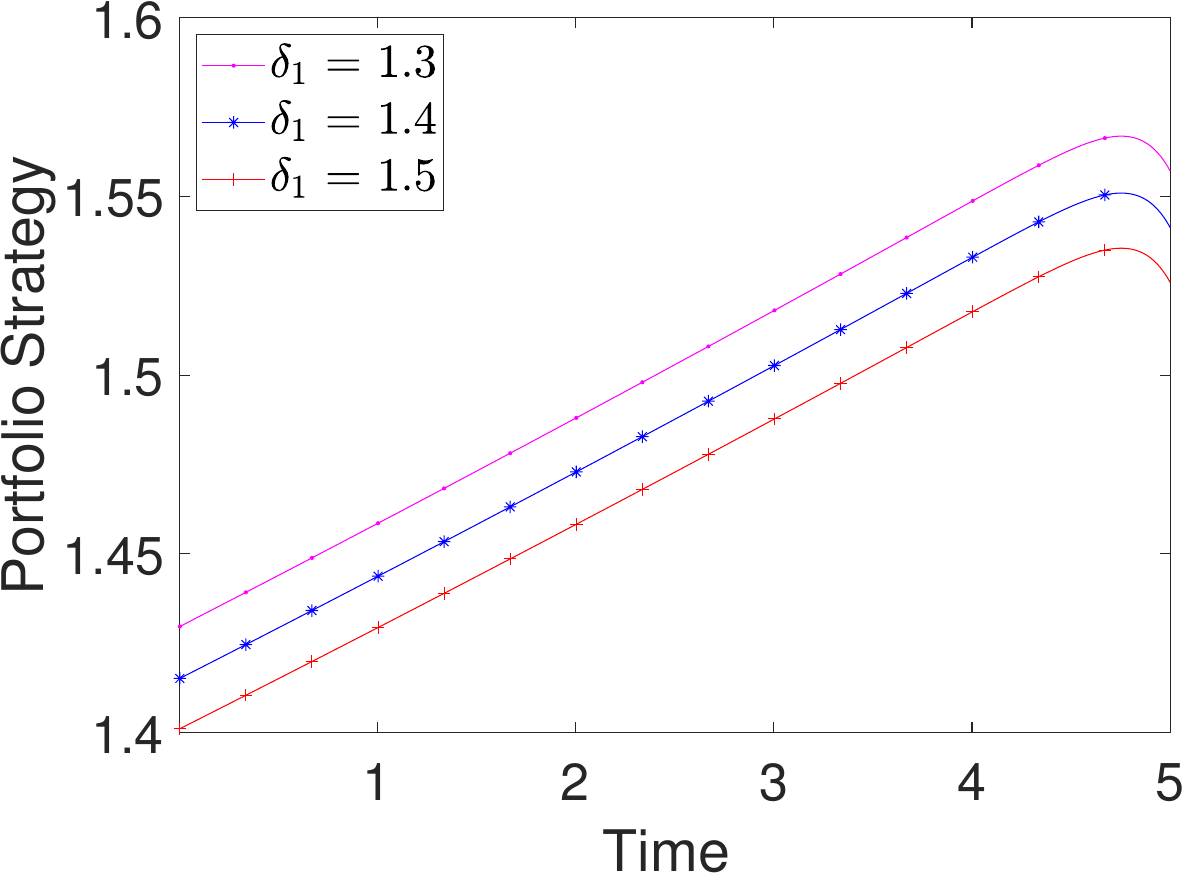}
	\caption{Effect of $ \delta_1$ on $ \pi^*_1(t) $.}
	\label{delta_1_Portfolio}
\end{minipage}
	\begin{minipage}[t]{0.45\linewidth}
	\centering
	\includegraphics[width=0.9\linewidth]{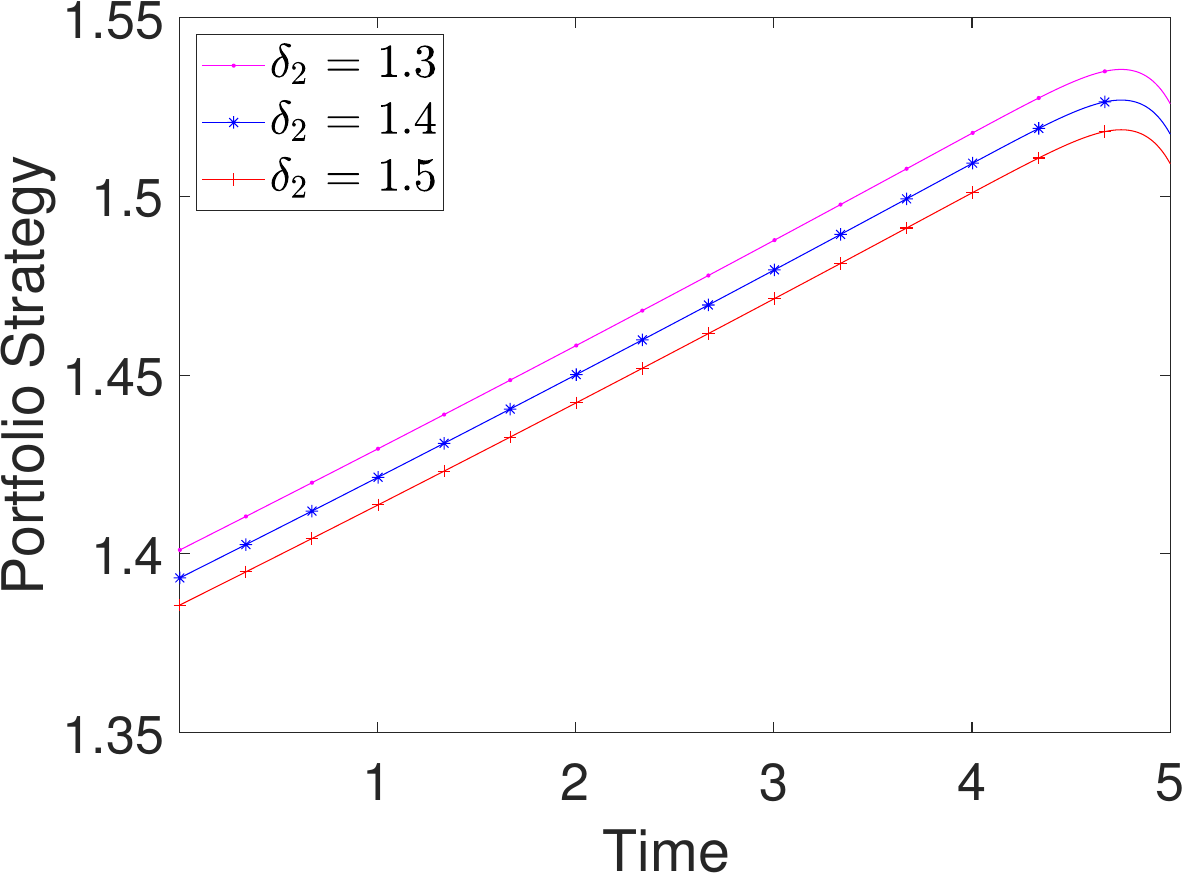}
	\caption{Effect of $\delta_2$ on $ \pi^*_1(t) $.}
	\label{delta_2_Portfolio}
\end{minipage}
\end{figure}

Fig.~\ref{psi1_1_Portfolio} illustrates the effect of ambiguity attitude towards equity risk on $\pi_1^*$. Obviously, if insurer 1 is more ambiguity aversion, she/he will have less faith in the financial market. Therefore, $\pi_1^*$ decreases with $\psi_{1,1}$. Meanwhile, insurer 2's ambiguity aversion coefficient $\psi_{2,1}$ has a similar influence on $\pi_1^*$. When $\psi_{2,1}$ increases, insurer 2 will undertake less equity risk and decrease the risky allocation. Simultaneously, insurer 1 will also decrease investment in the stock, which can be observed in Fig.~\ref{psi2_1_Portfolio}. Figs.~\ref{psi1_2_Portfolio} and \ref{psi2_2_Portfolio} depict the effects of ambiguity aversion attitudes towards volatility risk. We can observe from Fig.~\ref{psi1_2_Portfolio} that when insurer 1 is more ambiguity-averse interest risk, she/he will decrease the investment in the risky asset. Meanwhile, the ambiguity attitude of insurer 2 also has a similar effect on insurer 1's investment strategy. Fig.~\ref{psi2_2_Portfolio} shows that insurer 1's investment strategy decreases with $\psi_{2,2}$. This phenomenon again confirms the convergent behaviors of insurers in the competition environment, i.e., when others increase/decrease risk exposures, the insurer will also adjust the strategy in the same direction.
\begin{figure}[htbp]
	\centering
	\begin{minipage}[t]{0.45\linewidth} 
	\centering
	\includegraphics[width=0.9\linewidth]{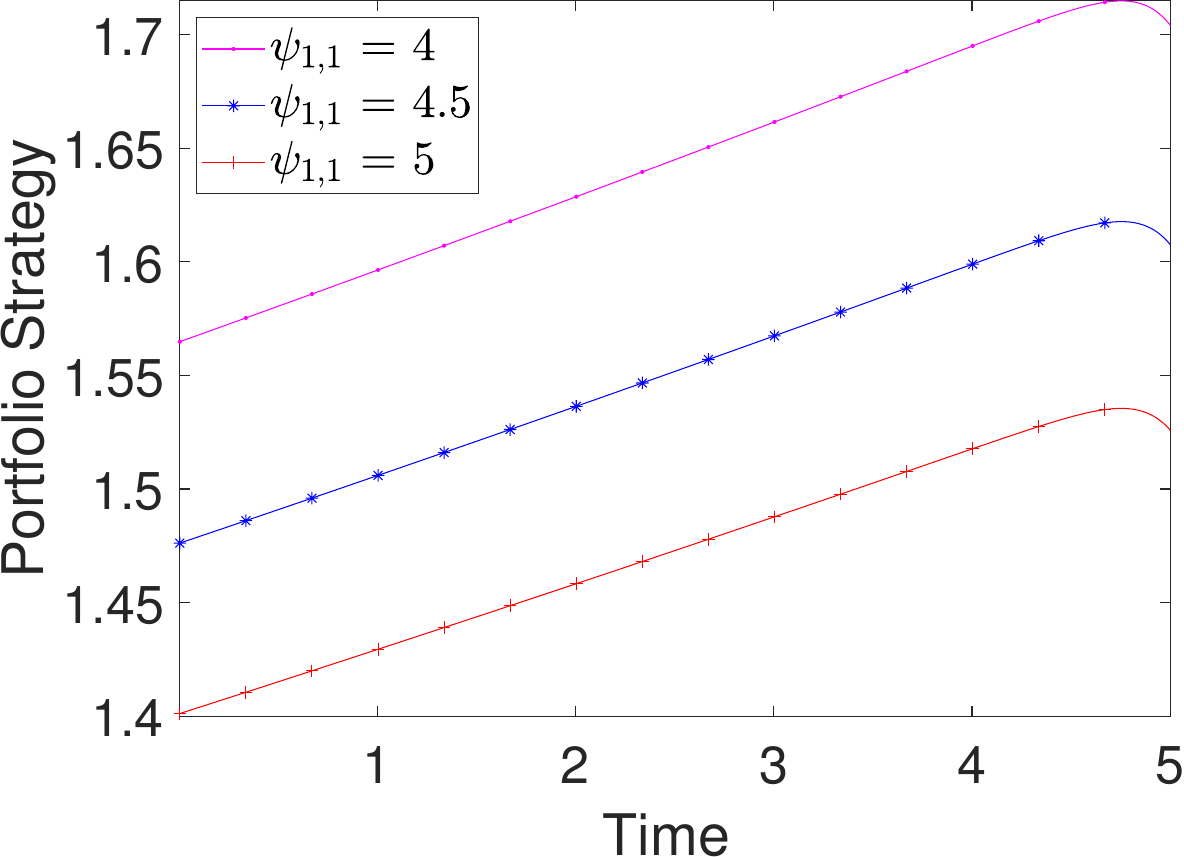}
	\caption{Effect of $ \psi_{1,1}$ on $ \pi^*_1(t) $.}
	\label{psi1_1_Portfolio}
\end{minipage}
	\begin{minipage}[t]{0.45\linewidth} 
	\centering
	\includegraphics[width=0.9\linewidth]{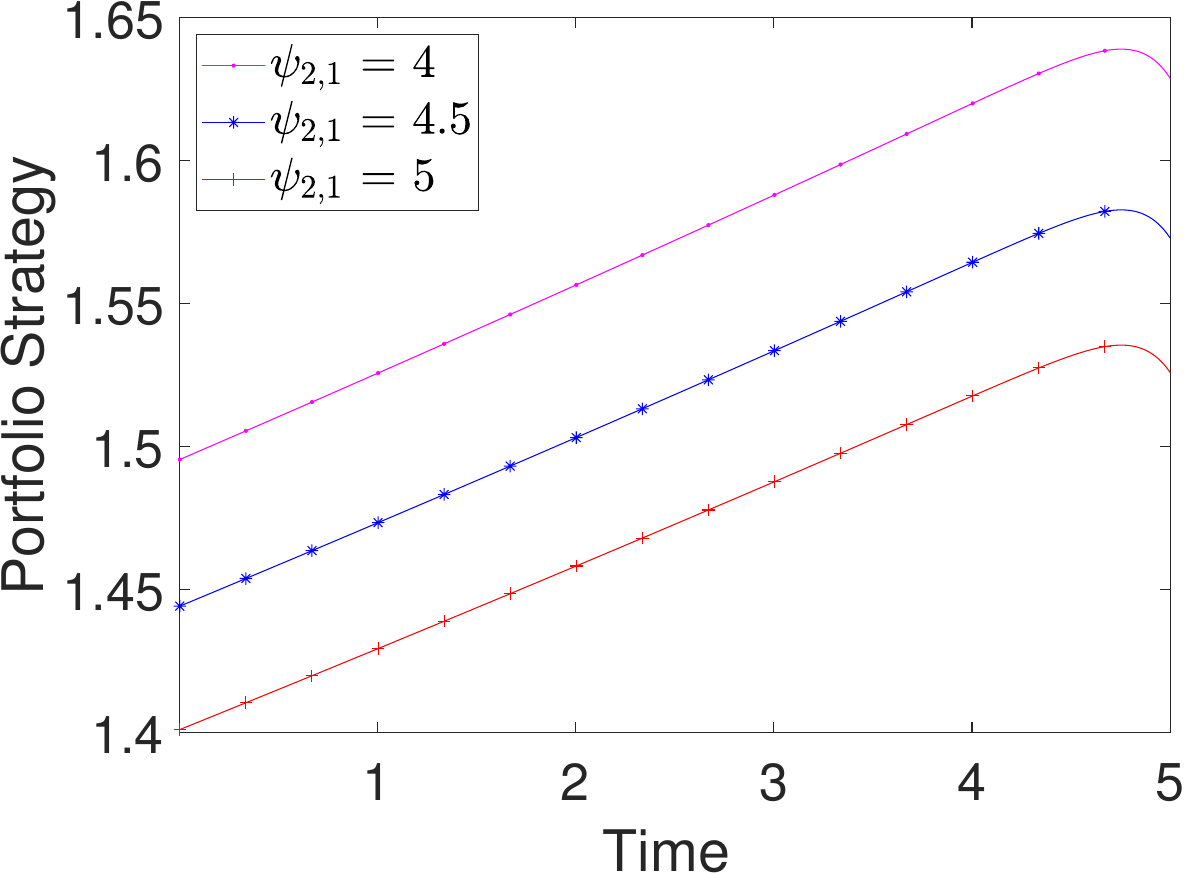}
	\caption{Effect of $ \psi_{2,1}$ on $ \pi^*_1(t) $.}
	\label{psi2_1_Portfolio}
\end{minipage}
	\begin{minipage}[t]{0.45\linewidth} 
	\centering
	\includegraphics[width=0.9\linewidth]{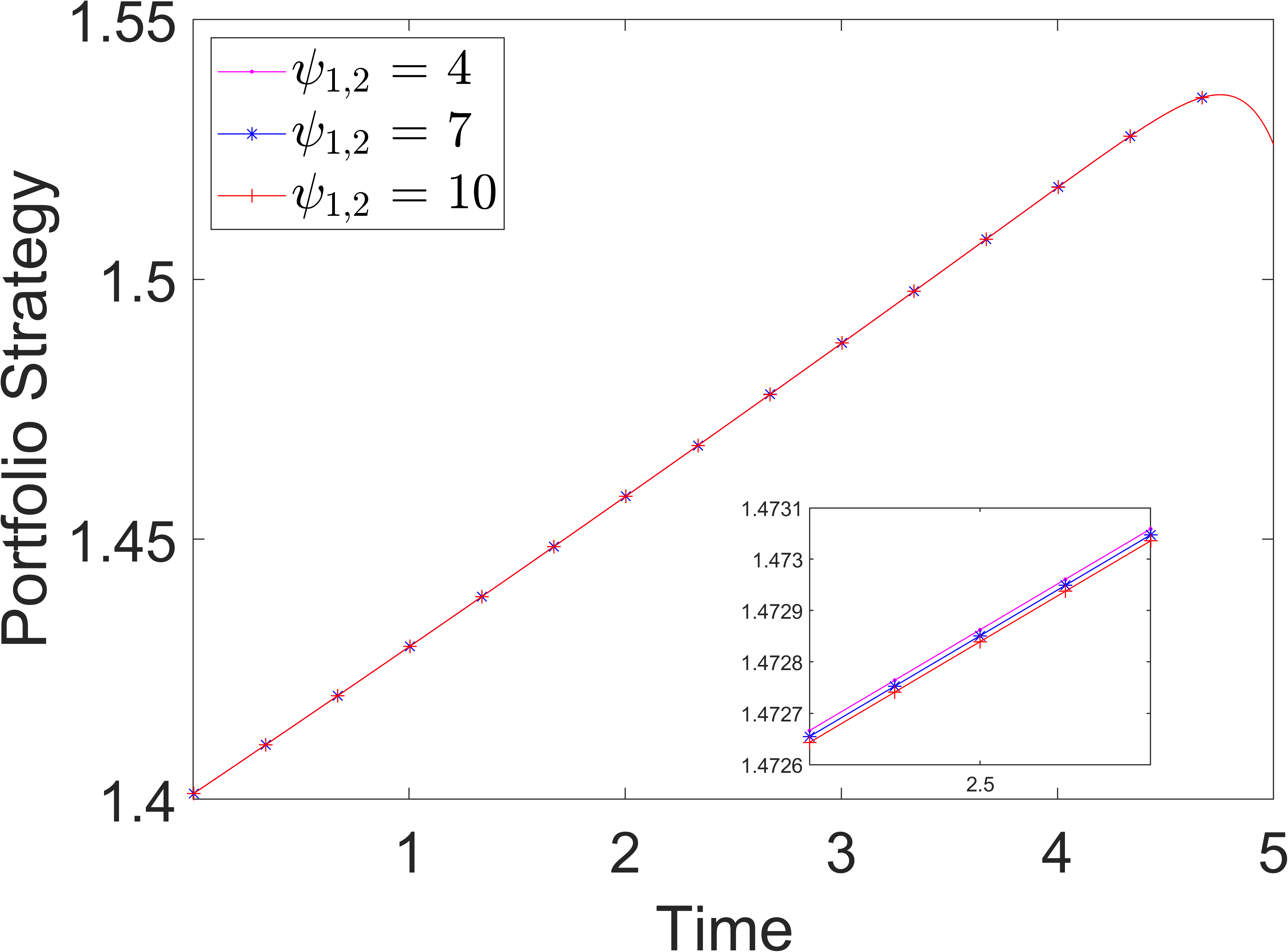}
	\caption{Effect of $ \psi_{1,1}$ on $ \pi^*_1(t) $.}
	\label{psi1_2_Portfolio}
\end{minipage}
\begin{minipage}[t]{0.45\linewidth} 
	\centering
	\includegraphics[width=0.9\linewidth]{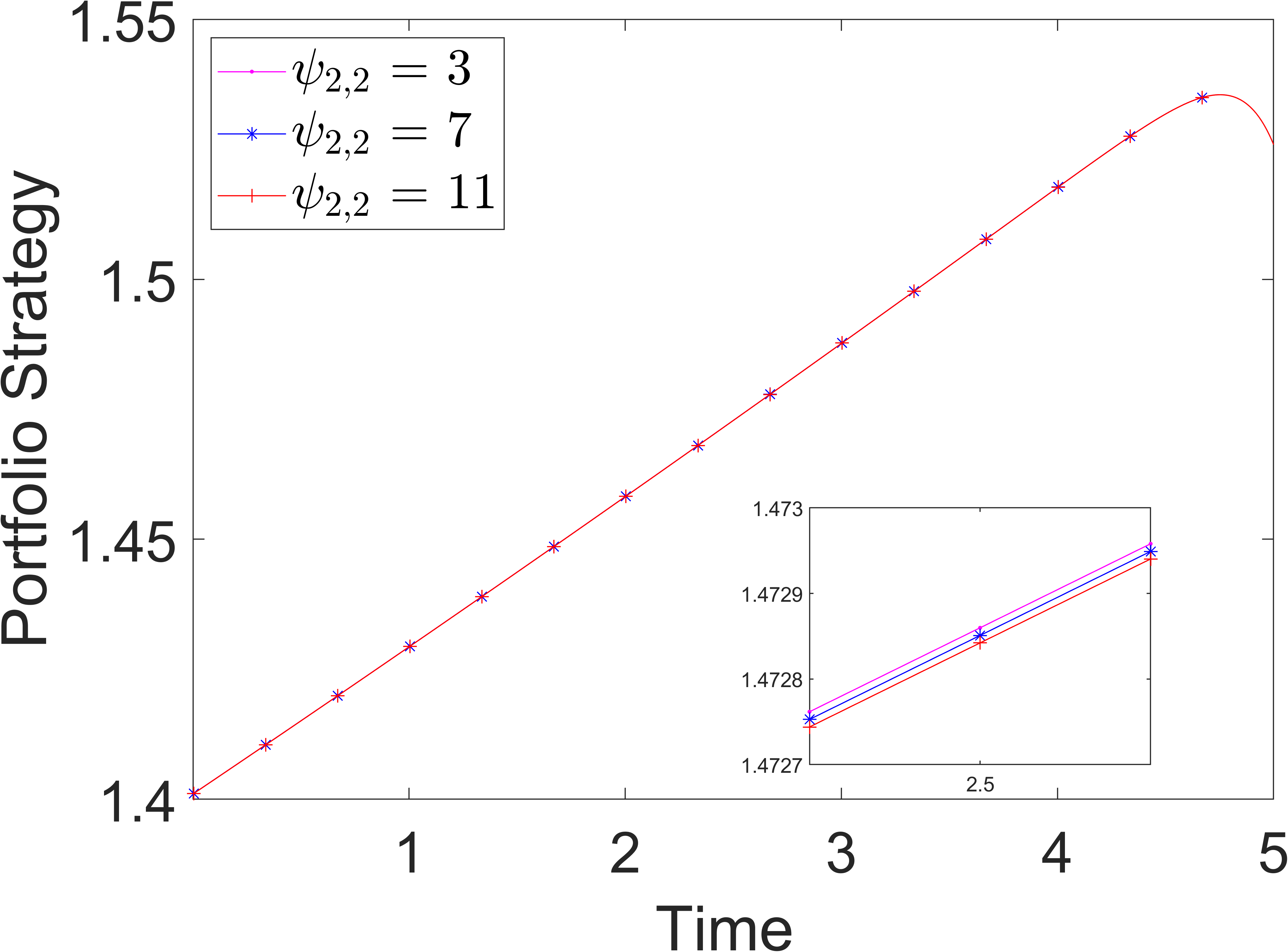}
	\caption{Effect of $ \psi_{2,2}$ on $ \pi^*_1(t) $.}
	\label{psi2_2_Portfolio}
\end{minipage}
\end{figure}

		\section{\bf Conclusion}
		
		In this paper, we investigate robust $n$-insurer and mean-field games for insurers under model uncertainty within incomplete markets. In the financial market, insurers allocate cash and stocks, with stock dynamics governed by the 4/2 stochastic volatility model. These insurers exhibit ambiguity aversion in the market and are driven by concerns about relative performance.
		
		Our work presents a solvable robust mean-field game in a non-linear financial system with a non-quadratic goal. The robust equilibrium strategies are determined through $n$-dimensional HJBI equations. We derive and rigorously verify the $n$-insurer equilibrium and mean-field equilibrium robust reinsurance and investment strategies. Notably, as $n$ approaches infinity, the $n$-insurer equilibrium converges to the mean-field equilibrium. Closed-form solutions for the $n$-agent and robust mean-field games are presented. Suitable conditions are outlined in the verification theorem.
		
		The robust equilibrium investment strategies comprise four components, all exhibiting nonlinear dependence on the risk aversion and ambiguity aversion coefficients. Two of these components stem from competition and are influenced by competition coefficients. Simultaneously, the robust reinsurance proportions consist of two components. Numerical results highlight the herd effect of competition on insurers, demonstrating their tendency to emulate each other's behaviors.
		
%

	\vskip 15pt
	{\bf Acknowledgements.} The authors acknowledge the support from the National Natural Science Foundation of China (Grant No.12271290, No.11901574, No.11871036), the MOE Project of Key Research Institute of Humanities and Social Sciences (22JJD910003). The authors thank the members of the group of Actuarial Science and Mathematical Finance at the Department of Mathematical Sciences, Tsinghua University for their feedbacks and useful conversations.
	\vskip 15pt
	\appendix
	\renewcommand{\theequation}{\thesection.\arabic{equation}}
	\section{\bf Proof of Proposition \ref{solution-HJBI}}\label{proof-solution-HJBI}
For  $ (Y_i(t), Z(t))=(y,z) $, \begin{equation}\label{HJBI1}
		\mathcal{A}^{\left\{(\pi_{k}^*,a_k^*)_{k\neq i},(\pi_{i},a_i)\right\},(\varphi_i,\chi_i,\phi_i,\vartheta_{i})}V(t,y,z)+  \left[ \frac{\varphi_{i}^{2}(t)}{2 \beta_{i,1}}	 +\frac{\chi_{i}^2(t)}{2 \beta_{i,2}}+   \frac{\phi_{i}^{2}(t)}{2 \beta_{i,3}} +\frac{\vartheta_{i}(t)^T\vartheta_{i}(t)}{2 \beta_{i,4}}\right]	V(t,  {y},z)
	\end{equation} is a quadratic function of $ \varphi_i$, $\chi_{i}$,  $\phi_i $ and  $ \vartheta_i $.   
The first-order condition of \eqref{HJBI-opt} is 
	\begin{equation*}
		\left\{\begin{aligned}
			&\nu\sqrt{z}\rho v^{(i)}_{z}+((1-\frac{\theta_{i}}{n})\pi_i-\frac{\theta_{i}}{n}\sum_{k\neq i}\pi^*_{k})\sigma v^{(i)}_y+\frac{\varphi^\circ_{{i,\pi_i,a_i}}}{\beta_{i,1}}v^{(i)}=0,\\
			&\nu\sqrt{z}\sqrt{1-\rho^2} v^{(i)}_{z}+\frac{\chi^\circ_{i,\pi_i,a_i}}{\beta_{i,2}}v^{(i)}=0,\\
			&\sqrt{\hat{\lambda}}(1-\frac{\theta_i}{n})\mu_{i1}a_{i}(t)v^{(i)}_y-\sqrt{\hat{\lambda}}\frac{\theta_{i}}{n}\sum_{k\neq i}\mu_{k1}a_{k}^*(t)v^{(i)}_y+\frac{\phi^\circ_{{i,\pi_i,a_i}}}{\beta_{i,3}}v^{(i)}=0,\\
			&(1-\frac{\theta_i}{n})a_{i}(t)\sqrt{(\hat{\lambda}+\lambda_{i})\mu_{i 2}-\hat{\lambda}\mu_{i 1}^2}v^{(i)}_y+\frac{\vartheta^\circ_{{i,i,\pi_i,a_i}}}{\beta_{i,4}}v^{(i)}=0,\\&-\frac{\theta_i}{n}a_{k}^*(t)\sqrt{(\hat{\lambda}+\lambda_{k})\mu_{k 2}-\hat{\lambda}\mu_{k 1}^2}v^{(i)}_y+\frac{\vartheta^\circ_{{i,k,\pi_i,a_i}}}{\beta_{i,4}}v^{(i)}=0,~k\neq i.
		\end{aligned}\right.
	\end{equation*}
By the last equations,  
we obtain $\left(\varphi_{i,\pi_i,a_i}^\circ, \chi_{i,\pi_i,a_i}^\circ,\phi_{i,\pi_i,a_i}^\circ, \vartheta_{i,\pi_i,a_i}^\circ\right)$ as follows. (which is expressed by $\left\{(\pi_{k}^*,a_k^*)_{k\neq i}\right\},(\pi_{i},a_i), v^{(i)}$, etc.)
\begin{equation*}
		\left\{\begin{aligned}
			&{\varphi^\circ_{i,\pi_i,a_i}}=-\nu\sqrt{z}\rho v^{(i)}_{z}/v^{(i)}\beta_{i,1}-((1-\frac{\theta_{i}}{n})\pi_i-\frac{\theta_{i}}{n}\sum_{k\neq i}\pi^*_{k})\sigma v^{(i)}_y/v^{(i)}\beta_{i,1},\\
			&{\chi^\circ_{i,\pi_i,a_i}}=-\nu\sqrt{z}\sqrt{1-\rho^2} v^{(i)}_{z}/v^{(i)}\beta_{i,2},\\
			&\phi^\circ_{{i,\pi_i,a_i}}=-\sqrt{\hat{\lambda}}\left[(1-\frac{\theta_i}{n})\mu_{i1}a_{i}(t)-\frac{\theta_{i}}{n}\sum_{k\neq i}\mu_{k1}a_{k}^*(t)\right]v^{(i)}_y/v^{(i)}{\beta_{i,3}},\\
			&\vartheta^\circ_{{i,i,\pi_i,a_i}}=-(1-\frac{\theta_i}{n})a_{i}(t)\sqrt{(\hat{\lambda}+\lambda_{i})\mu_{i 2}-\hat{\lambda}\mu_{i 1}^2}v^{(i)}_y/v^{(i)}{\beta_{i,4}},\\
			&\vartheta^\circ_{{i,k,\pi_i,a_i}}=\frac{\theta_i}{n}a_{k}^*(t)\sqrt{(\hat{\lambda}+\lambda_{k})\mu_{k 2}-\hat{\lambda}\mu_{k 1}^2}v^{(i)}_y/v^{(i)}{\beta_{i,4}},~k\neq i.
		\end{aligned}\right.
	\end{equation*}

	Substituting the expressions of $\left(\varphi_{i,\pi_i,a_i}^\circ, \chi_{i,\pi_i,a_i}^\circ,\phi_{i,\pi_i,a_i}^\circ, \vartheta_{i,\pi_i,a_i}^\circ\right)$ into \eqref{HJBI1}.  \eqref{HJBI1}  is also quadratic about $ \pi_{i}$ and $a_i $, so we obtain $\left(\hat{\pi}^\circ_i, \hat{a}^\circ_i\right)$ by the first-order conditions as follows
\begin{equation}\label{equ:a1}
		\left\{	\begin{aligned}
			&\nu\rho\sqrt{z}( v^{(i)}_{yz}-v^{(i)}_yv^{(i)}_{z}/v^{(i)}\beta_{i,1})+m\sqrt{z} v^{(i)}_y+\sigma((1-\frac{\theta_i}{n})\hat{\pi}^\circ_i-\frac{\theta_i}{n}\sum_{k\neq i}\pi^*_k)(v^{(i)}_{yy}-(v^{(i)}_y)^2/v^{(i)}\beta_{i,1})=0,\\
			&2\hat{\eta}(\lambda_i+\hat{\lambda})\mu_{i 2}(1-\hat{a}^\circ_i(t))v^{(i)}_y+\hat{\lambda}\frac{\theta_{i}}{n}\sum_{k\neq i}a^*_{k}(t)\mu_{i 1}\mu_{k 1}(\beta_{i,3}(v^{(i)}_y)^2/v^{(i)}-v^{(i)}_{yy})\\&+ (1-\frac{\theta_i}{n})\hat{a}^\circ_i(t)\left[(\lambda_i+\hat{\lambda})\mu_{i2}v^{(i)}_{yy}-({(\hat{\lambda}+\lambda_{i})\mu_{i 2}{\beta_{i,4}}+\hat{\lambda}\mu_{i 1}^2}({\beta_{i,3}}-{\beta_{i,4}}))(v^{(i)}_y)^2/v^{(i)}\right]=0
		\end{aligned}\right.
	\end{equation}

	We guess that\begin{equation*}
		v^{(i)}(t, y,z)= -\frac{1}{\delta_i} \exp\left(f_i(t)-\delta_i yg_i(t)+h_i(t)z\right)
	\end{equation*}where $ f_i(T)=h_i(T)=0$, $ g_i(T)=1 $.
	Then $ v^{(i)}_y=-\delta_{i}g_i(t)v^{(i)}$, $v^{(i)}_z=h_i(t)v^{(i)}$, $v^{(i)}_t=(f'_i(t)+h'_i(t)z-\delta_i yg'_i(t))v^{(i)} $, and \eqref{equ:a1} becomes
\begin{equation*}
	\left\{	\begin{aligned}
		&\nu\rho\sqrt{z}(1-\beta_{i,1})(-\delta_i g_i(t)h_i(t))+m\sqrt{z} (-\delta_i g_i(t))+\sigma((1-\frac{\theta_i}{n})\hat{\pi}^\circ_i-\frac{\theta_i}{n}\sum_{k\neq i}\pi^*_k)(1-\beta_{i,1})(\delta_i^2 g_i^2(t))=0,\\
		&\left(\frac{2}{\delta_{i}g_i(t)}\hat{\eta}(\lambda_i+\hat{\lambda})\mu_{i 2}+ (1-\frac{\theta_i}{n})\left[(\lambda_i+\hat{\lambda})\mu_{i2}(1-{\beta_{i,4}})+\hat{\lambda}\mu_{i 1}^2({\beta_{i,4}}-{\beta_{i,3}})\right]\right)\hat{a}^\circ_i(t)\\&=\frac{2}{\delta_{i}g_i(t)}\hat{\eta}(\lambda_i+\hat{\lambda})\mu_{i 2}+\hat{\lambda}\frac{\theta_{i}}{n}\sum_{k\neq i}a^*_{k}(t)\mu_{i 1}\mu_{k 1}(1-\beta_{i,3}).
	\end{aligned}\right.
\end{equation*}
By the last equation, we obtain
	\begin{equation*}
		\left\{\begin{aligned}
			&(1-\frac{\theta_i}{n})\hat{\pi}^\circ_i(t)-\frac{\theta_i}{n}\sum_{k\neq i}\pi^*_k(t)=(\frac{m}{1-\beta_{i,1}}+\nu\rho h_i(t))\frac{\sqrt{z}}{\sigma \delta_{i}g_i(t)},\\
			&\hat{a}^\circ_i(t)R_i(t)=Q_i(t)\frac{1}{n}\sum_{k\neq i}\mu_{k 1}a^*_k(t)+P_i(t),
		\end{aligned}\right.
	\end{equation*}
	where
	\begin{equation*}
	\begin{aligned}
		&	P_i(t)=\frac{2\hat{\eta}}{ \delta_{i}g_i(t)}(\lambda_i+\hat{\lambda})\mu_{i 2},\quad Q_i(t)=\hat{\lambda}\theta_i\mu_{i 1}(1-\beta_{i,3}),\\& R_i(t)=\frac{2}{\delta_{i}g_i(t)}\hat{\eta}(\lambda_i+\hat{\lambda})\mu_{i 2}+ (1-\frac{\theta_i}{n})\left[(\lambda_i+\hat{\lambda})\mu_{i2}(1-{\beta_{i,4}})+\hat{\lambda}\mu_{i 1}^2({\beta_{i,4}}-{\beta_{i,3}})\right].
\end{aligned}	\end{equation*}
	As $ ( \pi_{i},a_i) $ are restricted in the admissible set, we should adjust  $\left(\hat{\pi}^\circ_i, \hat{a}^\circ_i\right)$ which are obtained by the first-order condition
	to  ensure $\left({\pi}^\circ_i, {a}^\circ_i\right)\in\mathscr{U}_i$, then we have \begin{equation*}
		\left\{\begin{aligned}
			&{\pi}^\circ_i(t)=\frac{\theta_i}{n-\theta_i}\sum_{k\neq i}\pi^*_k(t)+\frac{n}{n-\theta_i}(\frac{m}{1-\beta_{i,1}}+\nu\rho h_i(t))\frac{\sqrt{z}}{\sigma \delta_{i}g_i(t)},\\
			&a_i^\circ(t)=(\hat{a}^\circ_i(t)\vee0)\wedge1\\
			&\qquad=\left(\left(\frac{Q_i(t)}{R_i(t)}\frac{1}{n}\sum_{k\neq i}\mu_{k 1}a_k^*(t)+\frac{P_i(t)}{R_i(t)}\right)\vee0\right)\wedge1\\
			&\qquad=\left(\frac{Q_i(t)}{R_i(t)}\frac{1}{n}\sum_{k\neq i}\mu_{k 1}a_k^*(t)+\frac{P_i(t)}{R_i(t)}\right)\wedge1,
		\end{aligned}\right.
	\end{equation*}
and
 \begin{equation*}
	\left\{\begin{aligned}
		&{\varphi^\circ_{i,\pi_i,a_i}}=-\nu\sqrt{z}\rho v^{(i)}_{z}/v^{(i)}\beta_{i,1}-((1-\frac{\theta_{i}}{n})\pi_i-\frac{\theta_{i}}{n}\sum_{k\neq i}\pi^*_{k})\sigma v^{(i)}_y/v^{(i)}\beta_{i,1},\\
		&{\chi^\circ_{i,\pi_i,a_i}}=-\nu\sqrt{z}\sqrt{1-\rho^2} v^{(i)}_{z}/v^{(i)}\beta_{i,2},\\
		&\phi^\circ_{{i,\pi_i,a_i}}=-\sqrt{\hat{\lambda}}\left[(1-\frac{\theta_i}{n})\mu_{i1}a_{i}(t)-\frac{\theta_{i}}{n}\sum_{k\neq i}\mu_{k1}a_{k}^*(t)\right]v^{(i)}_y/v^{(i)}{\beta_{i,3}},\\
		&\vartheta^\circ_{{i,i,\pi_i,a_i}}=-(1-\frac{\theta_i}{n})a_{i}(t)\sqrt{(\hat{\lambda}+\lambda_{i})\mu_{i 2}-\hat{\lambda}\mu_{i 1}^2}v^{(i)}_y/v^{(i)}{\beta_{i,4}},\\
		&\vartheta^\circ_{{i,k,\pi_i,a_i}}=\frac{\theta_i}{n}a_{k}^*(t)\sqrt{(\hat{\lambda}+\lambda_{k})\mu_{k 2}-\hat{\lambda}\mu_{k 1}^2}v^{(i)}_y/v^{(i)}{\beta_{i,4}},~k\neq i.
	\end{aligned}\right.
\end{equation*}
	\begin{equation*}
		\left\{\begin{aligned}
			&{\varphi^\circ_{i}(t)}={\varphi^\circ_{i,\pi_i,a_i}}(t)=-\nu\sqrt{z}\rho \beta_{i,1}h_i(t)+((1-\frac{\theta_{i}}{n})\pi_i^\circ(t)-\frac{\theta_{i}}{n}\sum_{k\neq i}\pi^*_{k}(t))\sigma \beta_{i,1}\delta_{i}g_i(t)=\frac{\beta_{i,1}}{1-\beta_{i,1}}m\sqrt{z},\\
			&{\chi^\circ_{i}(t)}={\chi^\circ_{i,\pi_i,a_i}}(t)=-\nu\sqrt{z}\sqrt{1-\rho^2} \beta_{i,2}h_i(t),\\
			&\phi^\circ_{{i}}(t)=\sqrt{\hat{\lambda}}\left[(1-\frac{\theta_i}{n})\mu_{i1}a^\circ_{i}(t)-\frac{\theta_{i}}{n}\sum_{k\neq i}\mu_{k1}a_{k}^*(t)\right]\delta_{i}g_i(t){\beta_{i,3}},\\
			&\vartheta^\circ_{{i,i}}(t)=(1-\frac{\theta_i}{n})a^\circ_{i}(t)\sqrt{(\hat{\lambda}+\lambda_{i})\mu_{i 2}-\hat{\lambda}\mu_{i 1}^2}\delta_{i}g_i(t){\beta_{i,4}},\\
			&\vartheta^\circ_{{i,k}}(t)=-\frac{\theta_i}{n}a_{k}^*(t)\sqrt{(\hat{\lambda}+\lambda_{k})\mu_{k 2}-\hat{\lambda}\mu_{k 1}^2}\delta_{i}g_i(t){\beta_{i,4}}.~k\neq i
		\end{aligned}\right.
	\end{equation*}
 
 { Under the compatible conditions (I) and (II), we see
 	\begin{align*}
 		\left|\frac{\varphi^\circ_{i}(t)}{Z(t)}\right|^2\leqslant \sup_{1\leqslant i\leqslant n}\frac{\beta_{i,1}^2}{\left(1-\beta_{i,1}\right)^2}m^2<\frac{\kappa^2}{2\nu^2},
\end{align*}\begin{align*}
 		\left|\frac{\chi^\circ_{i}(t)}{Z(t)}\right|^2\leqslant 	\sup_{1\leqslant i\leqslant n}c_{i,1}^2\frac{\left(1-e^{c_{i,2} T}\right)^2}{\left(1 + c_{i,3}e^{c_{i,2} T}\right)^2}\nu^2\beta_{i,2}^2(1-\rho^2)<\frac{\kappa^2}{2\nu^2}.
 	\end{align*}
As  \begin{equation*}
	\sup_{t\in[0,T]}|\phi^\circ_{{i}}(t)|<\infty,\quad\sup_{t\in[0,T]}|\vartheta^\circ_{{i,i}}(t)|<\infty,\quad\sup_{t\in[0,T]}|\vartheta^\circ_{{i,i}}(t)|<\infty,
\end{equation*}	 we see $\left({\varphi}^\circ_i, {\chi}^\circ_i, \phi_i^\circ, \vartheta_i^\circ \right)\in\mathscr{A}$ and $ \left(\pi_i^\circ,a_i^\circ\right)\in\mathscr{U}_i $.}

	Plugging the explicit form of $\left({\pi}^\circ_i, {a}^\circ_i\right)$ and $\left(\varphi^\circ_{i},\chi^\circ_{i},\phi^\circ_{i},\vartheta^\circ_{i}\right)$ into   \eqref{HJBI-v},  we have

\begin{equation*}
\begin{aligned}
	&-\frac{\beta_{i,1}}{2(1-\beta_{i,1})^2}m^2{z}-\frac{1}{2}\nu^2{z}{(1-\rho^2)} \beta_{i,2}h_i(t)^2-\frac{1}{2}{\hat{\lambda}}\left[(1-\frac{\theta_i}{n})\mu_{i1}a^\circ_{i}(t)-\frac{\theta_{i}}{n}\sum_{k\neq i}\mu_{k1}a_{k}^*(t)\right]^2\delta_{i}^2g_i^2(t){\beta_{i,3}}\\
	&-\frac{1}{2}(1-\frac{\theta_i}{n})^2(a^\circ_{i}(t))^2((\hat{\lambda}+\lambda_{i})\mu_{i 2}-\hat{\lambda}\mu_{i 1}^2)\delta_{i}^2g_i^2(t){\beta_{i,4}}\!-\!\frac{\theta_i^2}{2n^2}\sum_{k\neq i}(a_{k}^*(t))^2((\hat{\lambda}+\lambda_{k})\mu_{k 2}-\hat{\lambda}\mu_{k 1}^2)\delta_{i}^2g_i^2(t){\beta_{i,4}} \\
	=&\mathcal{A}^{\{(\pi_k,a_k)\}_{k=1}^n,(\varphi_i,\chi_i,\phi_i,\vartheta_{i})}v(t,y,z)/v(t,y,z)\\
	\end{aligned}
\end{equation*} 
	\begin{equation*}
\begin{aligned}
	=&f'_i(t)+h'_i(t)z-\delta_i yg'_i(t)\!+\!\left[{\kappa}(\bar{Z}\!-\!z)+{\nu}{z}{\rho}\frac{\beta_{i,1}}{1-\beta_{i,1}}m\!\right]h_i(t)\!-{\nu}^2\!{(1-{\rho}^2)}\beta_{i,2}h_i^2(t)z+\!\nu^2\frac{1}{2}zh_i^2(t)\\
	&-ry\delta_{i}g_i(t)-(1-\frac{\theta_i}{n})\left[\eta_{i}\left(\lambda_{i}+\hat{\lambda}\right) \mu_{i 1}-\hat{\eta}(1-a^\circ_i(t))^2\left(\lambda_{i}+\hat{\lambda}\right) \mu_{i 2}\right] \delta_{i}g_i(t)\\
	&+\frac{\theta_{i}}{n}\sum_{k\neq i}\left[\eta_{k}\left(\lambda_{k}+\hat{\lambda}\right) \mu_{k 1}-\hat{\eta}(1-a^*_k(t))^2\left(\lambda_{k}+\hat{\lambda}\right) \mu_{k 2}\right]\delta_{i}g_i(t)\\
	&+\!(1-\frac{\theta_i}{n})a_{i}^\circ(t)\!\left(\!2{\hat{\lambda}}\mu_{i1}\frac{\theta_{i}}{n}\sum_{k\neq i}\mu_{k1}a_{k}^*(t){\beta_{i,3}}\!-\!(1\!-\!\frac{\theta_i}{n})a_{i}^\circ(t)({(\hat{\lambda}+\lambda_{i})\mu_{i 2}{\beta_{i,4}}+\hat{\lambda}\mu_{i 1}^2}({\beta_{i,3}}\!-\!{\beta_{i,4}}))\!\right)\!\delta_i^2g_i^2(t)\\
	&-\frac{\theta_{i}}{n}\sum_{k\neq i}a^*_{k}(t) \left({\hat{\lambda}}\mu_{k1}\left[\frac{\theta_{i}}{n}\sum_{k\neq i}\mu_{k1}a_{k}^*(t)\right]{\beta_{i,3}}+\frac{\theta_i}{n}a_{k}^*(t){((\hat{\lambda}+\lambda_{k})\mu_{k 2}-\hat{\lambda}\mu_{k 1}^2)}{\beta_{i,4}} \right)\delta_i^2g_i^2(t)\\
		&+\frac{1}{2}(1\!-\!\frac{\theta_i}{n})^2(a_{i}^\circ(t))^2 {\left(\hat{\lambda}\!+\!\lambda_{i}\right) \mu_{i 2}}\delta_{i}^2g_i^2(t)+\frac{\theta_{i}^2}{2n^2}\sum_{k\neq i}(a^*_{k}(t) )^2\left(\left(\hat{\lambda}\!+\!\lambda_{k}\right) \mu_{k 2}-\hat{\lambda}\mu_{k1}^2\right)\delta_{i}^2g_i^2(t)\\
	&+\hat{\lambda}\frac{\theta_i^2}{2n^2}\left[\sum_{k\neq i}a^*_k(t)\mu_{k 1}\right]^2\!\!\delta_{i}^2g_i^2(t)-\hat{\lambda}\frac{\theta_{i}}{n}(1-\frac{\theta_i}{n})a^\circ_i(t)\sum_{k\neq i}a^*_{k}(t)\mu_{i 1}\mu_{k 1}\delta_{i}^2g_i^2(t)-\frac{1}{2}(\frac{m}{1\!-\!\beta_{i,1}}\!+\!\nu\rho h_i(t))^2z.
\end{aligned}
\end{equation*} 

	Compare the coefficients of $ z $ and $ y $, and based on the values of $ g_i(T) $, $ h_i(T) $ and $ f_i(T) $,	 we have

 \begin{equation*}
	\left\{\begin{aligned}
		g_i(t)=&e^{r(T-t)},\\
		h_i(t)=&c_{i,1}\frac{e^{c_{i,2} t}-e^{c_{i,2} T}}{e^{c_{i,2} t} + c_{i,3}e^{c_{i,2} T}},\\
		f_i(t)=&\int_{t}^{T}\!\left\{{\kappa}\bar{Z}h_i(s)-(1-\frac{\theta_i}{n})\left[\eta_{i}\left(\lambda_{i}+\hat{\lambda}\right) \mu_{i 1}-\hat{\eta}(1-a^\circ_i(s))^2\left(\lambda_{i}+\hat{\lambda}\right) \mu_{i 2}\right] \delta_{i}g_i(s)\right.\\&\quad+\frac{\theta_{i}}{n}\sum_{k\neq i}\left[\eta_{k}\left(\lambda_{k}+\hat{\lambda}\right) \mu_{k 1}-\hat{\eta}(1-a^*_k(s))^2\left(\lambda_{k}+\hat{\lambda}\right) \mu_{k 2}\right]\delta_{i}g_i(s)\\
		&\quad+\frac{1}{2}(1-\frac{\theta_i}{n})^2(a^\circ_{i}(s))^2((\hat{\lambda}+\lambda_{i})\mu_{i 2}(1-{\beta_{i,4}})+\hat{\lambda}\mu_{i 1}^2({\beta_{i,4}}-{\beta_{i,3}}))\delta_{i}^2e^{2r(T-s)}\\
		&\quad+\frac{\theta_i^2}{2n^2}\sum_{k\neq i}(a_{k}^*(s))^2((\hat{\lambda}+\lambda_{k})\mu_{k 2}-\hat{\lambda}\mu_{k 1}^2)(1-{\beta_{i,4}})\delta_{i}^2e^{2r(T-s)} \\
		&\quad+\hat{\lambda}\frac{\theta_i^2}{2n^2}\left[\sum_{k\neq i}a^*_k(t) \mu_{k 1}\right]^2 \delta_{i}^2e^{2r(T-s)}(1-{\beta_{i,3}})\\
		&\quad\left.-\hat{\lambda}\frac{\theta_{i}}{n}(1-\frac{\theta_i}{n})a^\circ_i(s)\sum_{k\neq i}a^*_{k}(s)\mu_{i 1}\mu_{k 1}\delta_{i}^2e^{2r(T-s)}(1-{\beta_{i,3}})\right\}\rd s,
	\end{aligned}\right.
\end{equation*}
	where $ c_{i,1}=\frac{{\kappa}+m\nu\rho +\sqrt{({\kappa}+m\nu\rho )^2+\frac{1-\beta_{i,2}}{1-\beta_{i,1}}\nu^2(1-\rho^2)m^2}}{\nu^2(1-\rho^2)(1-\beta_{i,2})}$, $c_{i,3}=2(1-\beta_{i,1})({\kappa}+m\nu\rho )c_{i,1}+1$, $c_{i,2}=\frac{c_{i,3}+1}{2(1-\beta_{i,1})c_{i,1}} $
	\section{\bf Proof of Proposition \ref{pi-circ}}\label{proof-pi-circ}
We first prove the following lemma to prove Proposition \ref{pi-circ}.

\begin{lemma}\label{Quad}
	For an indefinite quadratic form $$  F(x,y)= x^2 +axy+bx+cy+\frac{d}{2} y^2, d<0,$$
it's easy to see  that \begin{equation*}
z(x):=\arg	\sup_{y\in\mathbb{R}}F(x,y)=-\frac{ax+c}{d}
\end{equation*}
and \begin{equation*}
	x_0:=\arg\inf_{x\in\mathbb{R}}F(x,z(x))=-\frac{bd-ac}{2d-a^2},
\end{equation*}
if $x_0>1  $, then \begin{equation*}
	\hat{x}:=\arg\inf_{x\in[0,1]}F(x,z(x))=1
\end{equation*}
Let $ z_0= z(x_0)$, $ \hat{z}=z(\hat{x}) $, $ \bar{x}_0=\arg\inf_{x\in\mathbb{R}}F(x,z_0) $, $ \bar{x}=\arg\inf_{x\in[0,1]}F(x,\hat{z}) $,
then $ \bar{x}_0={x}_0 $, $ \bar{x}= \hat{x}$.
\end{lemma}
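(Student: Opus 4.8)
The plan is to verify the four claimed identities by direct computation with the explicit quadratic form, handling the unconstrained case first and then the constrained case. First I would compute $z(x) = \arg\sup_{y\in\mathbb{R}} F(x,y)$: since $d<0$, the map $y\mapsto F(x,y)$ is strictly concave, so the supremum is attained at the unique critical point $\partial_y F = ax + c + dy = 0$, giving $z(x) = -(ax+c)/d$, as claimed. Substituting back, $F(x,z(x))$ is a quadratic in $x$ alone; I would expand $F(x,z(x)) = x^2 + ax z(x) + bx + c z(x) + \tfrac{d}{2}z(x)^2$ and collect the coefficient of $x^2$. A short calculation gives leading coefficient $1 - a^2/(2d)\cdot(\text{something})$; more carefully, the $x^2$-coefficient is $1 - \frac{a^2}{2d}$... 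I would track this precisely to confirm it equals $\frac{2d-a^2}{2d}$, which is positive (since $d<0$ and $-a^2\le 0$ make $2d - a^2 < 0$ and $2d<0$, so the ratio is positive), so $x\mapsto F(x,z(x))$ is strictly convex and its infimum over $\mathbb{R}$ is at the unique critical point. Setting the derivative to zero yields $x_0 = -\frac{bd - ac}{2d - a^2}$.

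Next, for the constrained infimum over $[0,1]$: because $x\mapsto F(x,z(x))$ is strictly convex with unconstrained minimizer $x_0$, standard single-variable reasoning shows the minimizer over $[0,1]$ is the projection of $x_0$ onto $[0,1]$; in particular if $x_0 > 1$ then $\hat{x} = 1$. This needs only monotonicity of a convex function's derivative, so I would state it in one line.

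The last two identities compare iterated optimization in a different order: $\bar{x}_0 = \arg\inf_{x\in\mathbb{R}} F(x,z_0)$ with $z_0 = z(x_0)$ fixed, and $\bar{x} = \arg\inf_{x\in[0,1]} F(x,\hat{z})$ with $\hat{z} = z(\hat{x})$. For $\bar{x}_0 = x_0$: since $\partial_x F(x,y) = 2x + ay + b$ is linear in $x$ with positive leading coefficient $2>0$, $F(\cdot, z_0)$ is strictly convex, minimized at $x$ solving $2x + a z_0 + b = 0$. But $(x_0, z_0)$ satisfies the pair of first-order conditions $\partial_x F = 0$, $\partial_y F = 0$ (that is exactly how $x_0$ and $z_0= z(x_0)$ were produced), so $x_0$ solves $2x + a z_0 + b = 0$ as well; hence $\bar x_0 = x_0$. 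For $\bar{x} = \hat{x}$: when $x_0 \le 1$ we have $\hat x = x_0$, $\hat z = z_0$, and the same argument plus the fact that $x_0\in[0,1]$ gives $\bar x = x_0 = \hat x$. When $x_0 > 1$ we have $\hat x = 1$ and $\hat z = z(1) = -(a+c)/d$; I would check that $\arg\inf_{x\in[0,1]} F(x,\hat z)$ is still $1$, which follows because the unconstrained minimizer of the convex function $F(\cdot,\hat z)$ is $x^\dagger = -(a\hat z + b)/2$, and I would argue $x^\dagger \ge 1$ using $x_0 > 1$ together with the relationship between $\hat z = z(1)$ and $z_0 = z(x_0)$ — concretely, one shows the two-step map $x \mapsto z(x) \mapsto -(a\,z(x)+b)/2$ is a contraction-like monotone map whose only way to have $x_0 > 1$ is for $-(a z(1)+b)/2 \ge 1$.

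The main obstacle I anticipate is this last point: verifying $\bar{x} = \hat{x} = 1$ in the clipped case requires showing the constrained re-optimization at the frozen worst-case control $\hat z$ does not push the optimal strategy back into the interior. I expect this to come down to checking a sign/monotonicity condition linking the coefficients $a, b, d$ (equivalently, that the derivative of $F(\cdot,\hat z)$ at $x=1$ is $\le 0$), which should follow algebraically from $x_0>1$ but needs care because $z$ depends on $x$ nonlinearly through the $1/d$ factor; everything else is routine single-variable calculus with the explicit quadratic.
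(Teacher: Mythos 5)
Your proposal is correct and follows essentially the same route as the paper: explicit first-order conditions, strict convexity of $F(\cdot,z(\cdot))$ from the positive leading coefficient $1-\frac{a^2}{2d}$, and the saddle-point/first-order identity giving $\bar{x}_0=x_0$. The one step you flag as delicate closes exactly as you anticipate: the composite map $G(x)=-\frac{a\,z(x)+b}{2}=\frac{a^2x+ac-bd}{2d}$ is affine with slope $\frac{a^2}{2d}\leqslant 0$ and fixed point $x_0$, so $x_0>1$ forces $G(1)\geqslant G(x_0)=x_0>1$; hence the unconstrained minimizer of $F(\cdot,\hat{z})$ lies to the right of $1$ and $\bar{x}=1=\hat{x}$, which is precisely the paper's computation that $\tilde{x}=\frac{a^2+ac-bd}{2d}>1$.
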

\begin{proof}
	It's easy to verify that $ \frac{\partial }{\partial y}F(x,y)=dy+ax+c $, so based on the first-order condition, we see that \begin{equation*}
	\arg	\sup_{y\in\mathbb{R}}F(x,y)=-\frac{ax+c}{d},
	\end{equation*}
Moreover, we observe\begin{equation*}
	F(x,z(x))=(1-\frac{a^2}{2d})x^2+(b-\frac{ac}{d})x-\frac{c^2}{2d}.
\end{equation*}
as $ 1-\frac{a^2}{2d}>0 $. Based on the property of quadratic function, we have \begin{equation*}
\arg\inf_{x\in\mathbb{R}}F(x,z(x))=-\frac{bd-ac}{2d-a^2},
\end{equation*} and if  $x_0>1  $, then \begin{equation*}
\hat{x}:=\arg\inf_{x\in[0,1]}F(x,z(x))=1.
\end{equation*}Furthermore, we see \begin{equation*}
F(x,z_0) = x^2+axz_0+bx+cz_0+\frac{d}{2}z_0^2,
\end{equation*}
so $$  \bar{x}_0=-\frac{az_0+b}{2}=-\frac{-a\frac{ax_0+c}{d}+b}{2}=-\frac{bd-ac}{2d-a^2}=x_0.  $$
And \begin{equation*}
	F(x,\hat{z}) = x^2+ax\hat{z}+bx+c\hat{z}+\frac{d}{2}\hat{z}^2
\end{equation*} is monotonically decreasing for $ x<\tilde{x} $, where $ \tilde{x}=-\frac{a\hat{z}+b}{2} $. When $ x_0=-\frac{bd-ac}{2d-a^2}>1 $, we see $ \hat{z}=-\frac{a+c}{d} $, and $ \tilde{x}=-\frac{a\hat{z}+b}{2}=\frac{a^2+ac-bd}{2d} >1$, so we see $ \bar{x}=1=\hat{x} $.
\end{proof}
As \begin{equation*}
	\mathcal{H}\left({\left\{(\pi_{k}^*,a_k^*)_{k\neq i},(\pi_{i},a_i)\right\},(\varphi_i,\chi_i,\phi_i,\vartheta_{i})},V,(\beta_{i,j}V)_{j=1}^4,(t,y,z)\right)
\end{equation*} is an indefinite quadratic form about $ (\pi_{i},a_i) $ and $ (\varphi_i,\chi_i,\phi_i,\vartheta_{i}) $, and\begin{equation}\nonumber
\begin{aligned}
	(\pi^\circ_i,a^\circ_i)=\arg	\sup _{(\pi_i,a_i)\in\mathscr{U}_i} \inf _{(\varphi_i,\chi_i,\phi_i,\vartheta_{i})\in\mathscr{A}}&\mathcal{H}\left({\left\{(\pi_{k}^*,a_k^*)_{k\neq i},(\pi_{i},a_i)\right\},(\varphi_i,\chi_i,\phi_i,\vartheta_{i})},V,(\beta_{i,j}V)_{j=1}^4,(t,y,z)\right),\\
	( \varphi^\circ_{i},\chi^\circ_{i},\phi_i^\circ,\vartheta_{i}^\circ)=\arg	 \inf _{(\varphi_i,\chi_i,\phi_i,\vartheta_{i})\in\mathscr{A}}&\mathcal{H}\left({\left\{(\pi_{k}^*,a_k^*)_{k\neq i},(\pi_{i}^\circ,a_i^\circ)\right\},(\varphi_i,\chi_i,\phi_i,\vartheta_{i})},V,(\beta_{i,j}V)_{j=1}^4,(t,y,z)\right),
\end{aligned}
\end{equation} based on Lemma \ref{Quad}, we see that \begin{equation*}
(\pi^\circ_{i},a^\circ_i)=\arg\sup _{(\pi_i,a_i)\in\mathscr{U}_i}\mathcal{H}\left({\left\{(\pi_{k}^*,a_k^*)_{k\neq i},(\pi_{i},a_i)\right\},(\varphi_i^\circ,\chi_i^\circ,\phi_i^\circ,\vartheta_{i}^\circ)},V,(\beta_{i,j}V)_{j=1}^4,(t,y,z)\right).
\end{equation*}
	\section{\bf Proof of Theorem \ref{Verification}}\label{proof-verification}
	For $ \left(\varphi_i(t), \chi_i(t),\phi_i(t),\vartheta_{i}(t)\right)\in\mathscr{A} $, denote $ \left(\varphi_i(t), \chi_i(t)\right)=\left(\hslash_i(t)\sqrt{Z(t)},\hbar_i(t)\sqrt{Z(t)}\right) $, under the measure $  \mathbb{Q}^{\varphi_i,\chi_i,\phi_i,\vartheta_{i}} $,	using It\^{o}'s  formula, we have\begin{equation*}
		\begin{aligned}
			&v^{(i)}(T, Y(T),Z(T))\\
			=&v^{(i)}(s, Y(s),Z(s))+\int_{s}^{T}\mathcal{A}^{\left\{(\pi_{k}^*,a_k^*)_{k\neq i},(\pi_{i},a_i)\right\},(\varphi_i,\chi_{i},\phi_i,\vartheta_{i})}v^{(i)}(t, Y(t),Z(t))\rd t\\
			&+\int_{s}^{T}v_z^{(i)}(t, Y(t),Z(t))\nu \sqrt{Z(t)}\rho\rd W^{ \mathbb{Q}^{\varphi_i,\chi_i,\phi_i,\vartheta_{i}}}(t)+\int_{s}^{T}v_z^{(i)}(t, Y(t),Z(t))\nu \sqrt{Z(t)}\sqrt{1-\rho^2}\rd B^{ \mathbb{Q}^{\varphi_i,\chi_i,\phi_i,\vartheta_{i}}}(t)\\
			&+\int_{s}^{T}v_y^{(i)}(t, Y(t),Z(t))\sqrt{\hat{\lambda}}\left(a_{i}(t)\mu_{i 1}-\frac{\theta_{i}}{n}\sum_{k=1}^na^*_{k}(t)\mu_{k1}\right)\rd  \tilde{W}(t)\\
			&+\int_{s}^{T}v_y^{(i)}(t, Y(t),Z(t))(1-\frac{\theta_i}{n})a_{i}(t)\sqrt{(\hat{\lambda}+\lambda_{i})\mu_{i 2}-\hat{\lambda}\mu_{i 1}^2}\rd \hat{W}_i(t)\\
			&+\int_{s}^{T}v_y^{(i)}(t, Y(t),Z(t))\left((1-\frac{\theta_{i}}{n})\pi_{i}(t)-\frac{\theta_{i}}{n}\sum_{k\neq i}\pi^*_{k}(t)\right){\Sigma}(t)\rd {W}^{ \mathbb{Q}^{\varphi_i,\chi_i,\phi_i,\vartheta_{i}}}(t)\\
			&-\int_{s}^{T}v_y^{(i)}(t, Y(t),Z(t))\frac{\theta_{i}}{n}\sum_{k\neq i}^na^*_{k}(t) \sqrt{(\hat{\lambda}+\lambda_{k})\mu_{k 2}-\hat{\lambda}\mu_{k 1}^2}\rd \hat{W}_k(t)
		\end{aligned}
	\end{equation*}
	
 and \begin{equation*}
		\begin{aligned}
			\rd Z(t)\!=&{\kappa}(\bar{Z}-Z(t))\rd t+{\nu}\sqrt{Z(t)}\left[{\rho}(\rd W^{ \mathbb{Q}^{\varphi_i,\chi_i,\phi_i,\vartheta_{i}}}(t)+\varphi_i(t)\rd t)+\sqrt{1-{\rho}^2}(\rd B^{ \mathbb{Q}^{\varphi_i,\chi_i,\phi_i,\vartheta_{i}}}(t)+\chi_i(t)\rd t)\right]\\
			=&(\kappa\bar{Z}\!-\!\left(\!{\kappa}\!-\!\nu \rho\hslash_i(t)\!-\!\nu \sqrt{1-{\rho}^2}\hbar_i(t)\!\right)\!Z(t))\rd t\!+\!{\nu}\sqrt{Z(t)}\!\left[{\rho}\rd W^{ \mathbb{Q}^{\varphi_i,\chi_i,\phi_i,\vartheta_{i}}}\!(t)\!+\!\sqrt{1-{\rho}^2}\rd B^{ \mathbb{Q}^{\varphi_i,\chi_i,\phi_i,\vartheta_{i}}}(t)\right]
		\end{aligned}
	\end{equation*}
	let $ \bar{\kappa}= {\kappa}-\nu (\rho+\sqrt{1-{\rho}^2})\mathcal{C} $, as $ \mathcal{C}^2<\frac{\kappa^2}{2\nu^2} $, then $\mathcal{C}< \frac{{\kappa}}{\nu (\rho+\sqrt{1-{\rho}^2})} $, then $ 0<\bar{\kappa}\leqslant {\kappa}-\nu \rho\hslash_i(t)-\nu \sqrt{1-{\rho}^2}\hbar_i(t)$. $\forall s\in[0,T].$. Let $ \hat{Z}^s(t) $ be determined by the CIR model as follows \begin{equation*}
		\rd \hat{Z}^s(t)\!=\!(\kappa\bar{Z}-\bar{\kappa}\hat{Z}^s(t))\rd t+{\nu}\sqrt{\hat{Z}^s(t)}\left[{\rho}\rd W^{ \mathbb{Q}^{\varphi_i,\chi_i,\phi_i,\vartheta_{i}}}(t)\!+\!\sqrt{1-{\rho}^2}\rd B^{ \mathbb{Q}^{\varphi_i,\chi_i,\phi_i,\vartheta_{i}}}(t)\right]\!,~~ t\in[s,T],\hat{Z}^s(s)={Z}(s).
	\end{equation*}
	Then by the comparison theorem, we see that \begin{equation*}
		Z(t)\leqslant\hat{Z}^s(t), \forall t\in[s,T],\quad a.s.
	\end{equation*}
	Thus,  
	\begin{equation*}
		\mathbb{E}^{ \mathbb{Q}^{\varphi_i,\chi_i,\phi_i,\vartheta_{i}}}_{s,y,z}\left[\int_{s}^{T}{Z(t)}\rd t\right]\leqslant\mathbb{E}^{ \mathbb{Q}^{\varphi_i,\chi_i,\phi_i,\vartheta_{i}}}_{s,y,z}\left[\int_{s}^{T}{\hat{Z}^s(t)}\rd t\right]=\int_{s}^Tze^{-\bar{\kappa}(t-s)}+\frac{\kappa}{\bar{\kappa}}\bar{Z}(1-e^{-\bar{\kappa}(t-s)})\rd t<\infty.
	\end{equation*}
	While for $ (\pi_i,a_i)\in\mathscr{U}_i$, $ \left\{(\pi_{k}^*,a_k^*)_{k\neq i}\right\}\in\mathscr{U}_{-i} $,  $  \exists \{\mathcal{C}_j\}_{j=1}^n\subseteq\mathbb{R}_+$, $ \mathcal{M}:=\sup_{1\leqslant j\leqslant n}\mathcal{C}_j<\infty, $ such that  $ \pi_k^*(t)=\ell_k(t)\frac{Z(t)}{aZ(t)+b}$, $|\ell_k(t)|\leqslant \mathcal{C}_k$, $\forall k\neq i$, $ \pi_i(t)=\ell_i(t)\frac{Z(t)}{aZ(t)+b}$, $|\ell_i(t)|\leqslant \mathcal{C}_i$, $ \forall t\in[0,T]$, $0\leqslant a_i\leqslant1$, $0\leqslant a_k^*\leqslant1 $ and it's easy to see that $$ \sup_{t\in[s,T]} |v_z^{(i)}(t, Y(t),Z(t))|<\infty,\quad\sup_{t\in[s,T]} |v_y^{(i)}(t, Y(t),Z(t))|<\infty.$$Then \begin{equation*}
		\begin{aligned}
			&\mathbb{E}^{ \mathbb{Q}^{\varphi_i,\chi_i,\phi_i,\vartheta_{i}}}_{s,y,z}\left[\int_{s}^{T}\left(v_y^{(i)}(t, Y(t),Z(t))\right)^2\left((1-\frac{\theta_{i}}{n})\pi_{i}(t)-\frac{\theta_{i}}{n}\sum_{k\neq i}\pi^*_{k}(t)\right)^2({\Sigma}(t))^2\rd t\right]\\
			\leqslant&\mathcal{M}^2\sup_{t\in[s,T]} |v_y^{(i)}(t, Y(t),Z(t))|^2	\mathbb{E}^{ \mathbb{Q}^{\varphi_i,\chi_i,\phi_i,\vartheta_{i}}}_{s,y,z}\left[\int_{s}^{T}{Z(t)}\rd t\right]<\infty.
		\end{aligned}
	\end{equation*}
and
 \begin{equation*}
		\mathbb{E}^{ \mathbb{Q}^{\varphi_i,\chi_i,\phi_i,\vartheta_{i}}}_{s,y,z}\left[\int_{s}^{T}\left(v_z^{(i)}(t, Y(t),Z(t))\right)^2{Z(t)}\rd t\right]\leqslant\! \sup_{t\in[s,T]} |v_z^{(i)}(t, Y(t),Z(t))|^2\mathbb{E}^{ \mathbb{Q}^{\varphi_i,\chi_i,\phi_i,\vartheta_{i}}}_{s,y,z}\!\left[\int_{s}^{T}{Z(t)}\rd t\right]\!<\infty.
	\end{equation*}We also see \begin{equation*}
		\left\{\begin{aligned}
			&\mathbb{E}\left[\int_{s}^{T}\left(v_y^{(i)}(t, Y(t),Z(t))\right)^2\left((1-\frac{\theta_{i}}{n})a_{i}(t)\mu_{i 1}-\frac{\theta_{i}}{n}\sum_{k\neq i}a^*_{k}(t)\mu_{k1}\right)^2\rd t\right]<\infty,\\
			&\mathbb{E}\left[\int_{s}^{T}\left(v_y^{(i)}(t, Y(t),Z(t))\right)^2\frac{1}{n}\sum_{k\neq i}(a_{k}^*(t))^2 {((\hat{\lambda}+\lambda_{k})\mu_{k 2}-\hat{\lambda}\mu_{k 1}^2)}\rd t\right]<\infty,\\
			&\mathbb{E}\left[\int_{s}^{T}\left(v_y^{(i)}(t, Y(t),Z(t))\right)^2a_{i}^2(t) {((\hat{\lambda}+\lambda_{i})\mu_{i 2}-\hat{\lambda}\mu_{i1}^2)}\rd t\right]<\infty.
		\end{aligned}\right.
	\end{equation*}
	
	Then $\forall \left(\varphi_i(t), \chi_i(t),\phi_i(t),\vartheta_{i}(t)\right)\in\mathscr{A} $, $ (\pi_i,a_i)\in\mathscr{U}_i$ and $ \left\{(\pi_{k}^*,a_k^*)_{k\neq i}\right\}\in\mathscr{U}_{-i} $, the following equality  holds:  \begin{equation*}
		\mathbb{E}^{ \mathbb{Q}^{\varphi_i,\chi_i,\phi_i,\vartheta_{i}}}_{s,y,z}\!\left[v^{(i)}(T, Y(T),Z(T))\right]
		\!=\!v^{(i)}(s, y,z)+\mathbb{E}^{ \mathbb{Q}^{\varphi_i,\chi_i,\phi_i,\vartheta_{i}}}_{s,y,z}\!\!\int_{s}^{T}\!\!\!\mathcal{A}^{\left\{(\pi_{k}^*,a_k^*)_{k\neq i},(\pi_{i},a_i)\right\},(\varphi_i,\chi_{i},\phi_i,\vartheta_i)}(t, Y(t),Z(t))\rd t.
	\end{equation*}
	On the one hand, as \begin{equation*}
		\inf _{(\varphi_i,\chi_i,\phi_i,\vartheta_{i})}\left\{\mathcal{A}^{\left\{(\pi_{k}^*,a_k^*)_{k\neq i},(\pi^\circ_{i},a^\circ_i)\right\},(\varphi_i,\chi_{i},\phi_i,\vartheta_{i})}v^{(i)}(t,y,z)+  \frac{\varphi_{i}^{2}(t)}{2 \Psi_1^{i}} +\frac{\chi_{i}^2(t)}{2 \Psi^{i}_2} +\frac{\phi_{i}^{2}(t)}{2 \Psi_3^{i}} +\frac{\vartheta_{i}(t)^T\vartheta_{i}(t)}{2 \Psi^{i}_4}\right\} =0,
	\end{equation*}
	we see \begin{equation*}
		-\mathcal{A}^{\left\{(\pi_{k}^*,a_k^*)_{k\neq i},(\pi^\circ_{i},a^\circ_i)\right\},(\varphi_i,\chi_{i},\phi_i,\vartheta_{i})}v^{(i)}(t,y,z)\leqslant \frac{\varphi_{i}^{2}(t)}{2 \Psi_1^{i}} +\frac{\chi_{i}^2(t)}{2 \Psi^{i}_2}+\frac{\phi_{i}^{2}(t)}{2 \Psi_3^{i}} +\frac{\vartheta_{i}(t)^T\vartheta_{i}(t)}{2 \Psi^{i}_4}.
	\end{equation*}
	Then \begin{equation*}
		\begin{aligned}
			&v^{(i)}(s, y,z)\\
			=&	-\mathbb{E}^{ \mathbb{Q}^{\varphi_i,\chi_i,\phi_i,\vartheta_{i}}}_{s,y,z}\!\!\int_{s}^{T}\!\!\mathcal{A}^{\left\{(\pi_{k}^*,a_k^*)_{k\neq i},(\pi^\circ_{i},a^\circ_i)\right\},(\varphi_i,\chi_{i},\phi_i,\vartheta_{i})}v^{(i)}(t, Y(t),Z(t))\rd t+\mathbb{E}^{ \mathbb{Q}^{\varphi_i,\chi_i,\phi_i,\vartheta_{i}}}_{s,y,z}\left[v^{(i)}(T, Y(T),Z(T))\right]\\
			\leqslant&\mathbb{E}^{ \mathbb{Q}^{\varphi_i,\chi_i,\phi_i,\vartheta_{i}}}_{s,y,z}\int_{s}^{T} \frac{\varphi_{i}^{2}(t)}{2 \Psi_1^{i}} +\frac{\chi_{i}^2(t)}{2 \Psi^{i}_2} +\frac{\phi_{i}^{2}(t)}{2 \Psi_3^{i}} +\frac{\vartheta_{i}(t)^T\vartheta_{i}(t)}{2 \Psi^{i}_4}\mathrm{d}t +\mathbb{E}^{ \mathbb{Q}^{\varphi_i,\chi_i,\phi_i,\vartheta_{i}}}_{s,y,z}\left[U_f\left(X_{i}(T) {-\theta_{i}}\bar{X}(T) ; \delta_{i}\right)\right]\\
			=&J_{i}\left(\left\{(\pi_{k}^*,a_k^*)_{k\neq i},(\pi^\circ_{i},a^\circ_i)\right\},(\varphi_i,\chi_{i},\phi_i,\vartheta_{i}),s, y,z\right),
		\end{aligned}
	\end{equation*}
where $ X_i(t) $ and $ \bar{X}(t) $ are driven by $ \left\{(\pi_{k}^*,a_k^*)_{k\neq i},(\pi_{i}^\circ,a_i^\circ)\right\} $.
Therefore,
 $$\begin{aligned}
		& v^{(i)}(s, y,z)\\
		\leqslant& \inf_{(\varphi_i,\chi_i,\phi_i,\vartheta_{i})}J_{i}\left(\left\{(\pi_{k}^*,a_k^*)_{k\neq i},(\pi^\circ_{i},a^\circ_i)\right\},(\varphi_i,\chi_i,\phi_i,\vartheta_{i}),s, y,z\right)\\
		\leqslant &\sup _{(\pi_i,a_i)}\inf_{(\varphi_i,\chi_i,\phi_i,\vartheta_{i})}J_{i}\left(\left\{(\pi_{k}^*,a_k^*)_{k\neq i},(\pi_{i},a_i)\right\},(\varphi_i,\chi_{i},\phi_i,\vartheta_{i}),s, y,z\right)\\
		=&V^{(i)}(s,  {y},z).
	\end{aligned}$$
	On the other hand, based on  Proposition \ref{pi-circ}, we see \begin{equation*}
		-\mathcal{A}^{\left\{(\pi_{k}^*,a_k^*)_{k\neq i},(\pi_{i},a_i)\right\},(\varphi_i^\circ,\chi_{i}^\circ,\phi_i^\circ,\vartheta_{i}^\circ)}v^{(i)}(t,y,z)\geqslant  \frac{(\varphi_{i}^\circ(t))^2}{2 \Psi_1^{i}} +\frac{(\chi_{i}^\circ(t))^2}{2 \Psi^{i}_2}+\frac{(\phi_{i}^\circ(t))^2}{2 \Psi_3^{i}} +\frac{\vartheta_{i}^\circ(t)^T\vartheta_{i}^\circ(t)}{2 \Psi^{i}_4},
	\end{equation*}
	then \begin{equation*}
		\begin{aligned}
			&	v^{(i)}(s, y,z)\\
			=&	-\mathbb{E}^{ \mathbb{Q}^{\varphi_i^\circ,\chi_{i}^\circ,\phi_i^\circ,\vartheta_{i}^\circ}}_{s,y,z}\int_{s}^{T}\mathcal{A}^{\left\{(\pi_{k}^*,a_k^*)_{k\neq i},(\pi_{i},a_i)\right\},(\varphi_i^\circ,\chi_{i}^\circ,\phi_i^\circ,\vartheta_{i}^\circ)}(t, Y(t),Z(t))\rd t\!+\!\mathbb{E}^{ \mathbb{Q}^{\varphi_i^\circ,\chi_{i}^\circ,\phi_i^\circ,\vartheta_{i}^\circ}}_{s,y,z}\left[v^{(i)}(T, Y(T),Z(T))\right]\\
			\geqslant&\mathbb{E}^{ \mathbb{Q}^{\varphi_i^\circ,\chi_{i}^\circ,\phi_i^\circ,\vartheta_{i}^\circ}}_{s,y,z}\!\!\int_{s}^{T}\!\!\frac{(\varphi_{i}^\circ(t))^2}{2 \Psi_1^{i}} \!+\!\frac{(\chi_{i}^\circ(t))^2}{2 \Psi^{i}_2}\!+\!\frac{(\phi_{i}^\circ(t))^2}{2 \Psi_3^{i}} \!+\!\frac{\vartheta_{i}^\circ(t)^T\vartheta_{i}^\circ(t)}{2 \Psi^{i}_4}\mathrm{d}t\! +\!\mathbb{E}^{ \mathbb{Q}^{\varphi_i^\circ,\chi_{i}^\circ,\phi_i^\circ,\vartheta_{i}^\circ}}_{s,y,z}\!\left[U_f\left(X_{i}(T) {-\theta_{i}}\bar{X}(T) ; \delta_{i}\right)\right]\\
			=&J_{i}\left(\left\{(\pi_{k}^*,a_k^*)_{k\neq i},(\pi_{i},a_i)\right\},(\varphi_i^\circ,\chi_{i}^\circ,\phi_i^\circ,\vartheta_{i}^\circ),s, y,z\right),
		\end{aligned}	
	\end{equation*}
where $ X_i(t) $ and $ \bar{X}(t) $ are driven by $ \left\{(\pi_{k}^*,a_k^*)_{k\neq i},(\pi_{i},a_i)\right\} $.

	Then $$\begin{aligned}
		& v^{(i)}(s, y,z)\\
		\geqslant& \sup_{(\pi_{i},a_i)}J_{i}\left(\left\{(\pi_{k}^*,a_k^*)_{k\neq i},(\pi_{i},a_i)\right\},(\varphi_i^\circ,\chi_{i}^\circ,\phi_i^\circ,\vartheta_{i}^\circ),s, y,z\right)\\
		\geqslant &\sup _{(\pi_i,a_i)}\inf_{(\varphi_i,\chi_i,\phi_i,\vartheta_{i})}J_{i}\left(\left\{(\pi_{k}^*,a_k^*)_{k\neq i},(\pi_i,a_i)\right\},\left(\varphi_i,\chi_i,\phi_i,\vartheta_{i}\right)s, y,z\right)\\
		=&V^{(i)}(s,  {y},z).
	\end{aligned}
	$$
	Thus we have\begin{equation*}
		v^{(i)}(s, y,z)=V^{(i)}(s,  {y},z),\quad (\pi^*_{i},a^*_i)=(\pi^\circ_{i},a^\circ_i),\quad \forall s,y,z, 1\leqslant i\leqslant n.
	\end{equation*}that is, $ \forall 1\leqslant i\leqslant n $, we obtain \begin{equation}\label{pi_star}
	{\pi}^*_i(t)=\frac{\theta_i}{n-\theta_i}\sum_{k\neq i}\pi^*_k(t)+\frac{n}{n-\theta_i}(\frac{m}{1-\beta_{i,1}}+\nu\rho h_i(t))\frac{\sqrt{Z(t)}}{\sigma \delta_{i}g_i(t)},
\end{equation} and
	\begin{equation*}
		a_i^*(t)=\left(\frac{Q_i(t)}{R_i(t)}\frac{1}{n}\sum_{k\neq i}\mu_{k 1}a_k^*(t)+\frac{P_i(t)}{R_i(t)}\right)\wedge1,
	\end{equation*}
	\begin{equation*}
		\left\{\begin{aligned}
			&{\varphi^*_{i}(t)}=\frac{\beta_{i,1}}{1-\beta_{i,1}}m\sqrt{Z(t)},\\
			&{\chi^*_{i}(t)}=-\nu\sqrt{1-\rho^2} \beta_{i,2}h_i(t)\sqrt{Z(t)},\\
			&\phi^*_{{i}}(t)=\sqrt{\hat{\lambda}}\left[(1-\frac{\theta_i}{n})\mu_{i1}a^*_{i}(t)-\frac{\theta_{i}}{n}\sum_{k\neq i}\mu_{k1}a_{k}^*(t)\right]\delta_{i}g_i(t){\beta_{i,3}},\\
			&\vartheta^*_{{i,i}}(t)=(1-\frac{\theta_i}{n})a^*_{i}(t)\sqrt{(\hat{\lambda}+\lambda_{i})\mu_{i 2}-\hat{\lambda}\mu_{i 1}^2}\delta_{i}g_i(t){\beta_{i,4}},\\
			&\vartheta^*_{{i,k}}(t)=-\frac{\theta_i}{n}a_{k}^*(t)\sqrt{(\hat{\lambda}+\lambda_{k})\mu_{k 2}-\hat{\lambda}\mu_{k 1}^2}\delta_{i}g_i(t){\beta_{i,4}},~k\neq i.
		\end{aligned}\right.
	\end{equation*}
	Solving \eqref{pi_star}) for $ 1\leqslant i\leqslant n $ simultaneously,  we obtain \begin{equation*}
		\pi^*_i(t)=\left[\theta_i\frac{\sum_{k=1}^n\frac{\frac{m}{1-\beta_{k,1}}+\nu\rho h_k(t)}{\delta_{k}g_k(t)}}{n-\sum_{k=1}^n\theta_k}+\frac{\frac{m}{1-\beta_{i,1}}+\nu\rho h_i(t)}{\delta_{i}g_i(t)}\right]\frac{\sqrt{Z(t)}}{\Sigma(t) },\quad \forall1\leqslant i\leqslant n,t\in[0,T].
	\end{equation*}
 Under the compatible conditions (I) and (II), $\left({\varphi}^*_i, {\chi}^*_i, \phi_i^*, \vartheta_i^* \right)\in\mathscr{A}$ and $ \left(\pi_i^*,a_i^*\right)\in\mathscr{U}_i $.
	\bibliographystyle{apalike}
	\bibliography{RNG}
	
\end{document}